\newcommand{\multiline}[1]{\parbox[t]{\dimexpr\linewidth-\algorithmicindent}{#1}}
\newcommand{\nc}{\newcommand}
\newtheorem*{rep@theorem}{\rep@title}
\newcommand{\newreptheorem}[2]{%
\newenvironment{rep#1}[1]{%
 \def\rep@title{#2 \ref{##1}}%
 \begin{rep@theorem}}%
 {\end{rep@theorem}}}
\newcommand\xlabel[2][]{\phantomsection\def\@currentlabelname{#1}\label{#2}}
\Crefname{assumption}{Assumption}{Assumptions}
\Crefname{subsubsection}{Section}{Sections}
\theoremstyle{plain}
\newtheorem{theorem}{Theorem}
\newtheorem{lemma}[theorem]{Lemma}
\newtheorem{Conjecture}[theorem]{Conjecture}
\newtheorem{corollary}[theorem]{Corollary}
\newtheorem{proposition}[theorem]{Proposition}
\newtheorem{claim}[theorem]{Claim}
\theoremstyle{definition}
\newtheorem{definition}{Definition}
\newtheorem{defn}[definition]{Definition}
\newtheorem{remark}[definition]{Remark}
\numberwithin{theorem}{section}
\numberwithin{definition}{section}
\nc{\sups}[1]{^{\scriptscriptstyle{#1}}}
\nc{\subs}[1]{_{\scriptscriptstyle{#1}}}
\nc{\dist}{\mathrm{dist}}
\newcommand{\PPAD}{\textsf{PPAD}\xspace}
\newcommand{\TFNP}{\textsf{TFNP}\xspace}
\nc{\DMO}{\DeclareMathOperator}
\DeclareMathOperator*{\argmin}{arg\,min} %
\DMO{\prox}{prox}
\DMO{\UCB}{UCB}
\DMO{\LCB}{LCB}
\nc{\phidiff}{\phi\sups{\Delta}}
\nc{\SolveLPFeasibility}{\texttt{SolveLPFeasibility}\xspace}
\nc{\BimatrixStrongSmooth}{\texttt{BimatrixStrongSmooth}\xspace}
\nc{\GeneralStrongSmooth}{\texttt{GeneralStrongSmooth}\xspace}
\nc{\VerifyDeviation}{\texttt{VerifyDeviation}\xspace}
\nc{\OptimizeDeviation}{\texttt{OptimizeDeviation}\xspace}
\nc{\QueryEquilibrium}{\texttt{QueryEquilibrium}\xspace}
\nc{\pexp}{q_{\mathrm{exp}}}
\nc{\nn}{\nonumber}
\nc{\rk}{\mathrm{rk}}
\nc{\brk}[3]{{\rm br}_{#1}^{#2}({#3})}
\nc{\co}{{\rm co}}
\nc{\Bin}{\mathrm{Bin}}
\nc{\br}[2]{{\rm br}^{#1}({#2})}
\nc{\depth}[1]{{\rm d}({#1})}
\nc{\tA}{\textsc{A}}
\nc{\child}[2]{{\rm ch}_{#1}({#2})}
\nc{\parent}[1]{{\rm pa}({#1})}
\nc{\dg}{\dagger}
\nc{\bB}{\mathbf{B}}
\nc{\Span}{{\rm Span}}
\nc{\unif}{\mathsf{unif}}
\nc{\indsig}[2]{\mathcal{I}_{#1}({#2})}
\nc{\total}{{\rm fin}}
\nc{\early}{{\rm pre}}
\nc{\zsink}{z_{\rm sink}}
\nc{\lowv}{{\rm low}}
\nc{\ol}{\overline}
\nc{\ul}{\underline}
\nc{\madec}[3]{\texttt{ma-dec}_{#1}({#2}, {#3})}
\nc{\madeco}[1]{\texttt{ma-dec}_{#1}}
\nc{\madecd}[3]{\texttt{ma-dec}^{\texttt{d}}_{#1}({#2}, {#3})}
\nc{\SF}{\mathscr{F}}
\nc{\SP}{\mathscr{P}}
\nc{\PiMarkov}{\Pi^{\rm markov}}
\nc{\Critic}{\texttt{Critic}}
\nc{\trunc}[2]{\mathsf{trunc}_{#2}({#1})}
\nc{\sbl}{of strong Bellman type\xspace}
\nc{\gamvec}{\gamma}
\nc{\til}{\widetilde}
\nc{\td}{\tilde}
\nc{\wh}{\widehat}
\nc{\todo}[1]{\ifnum\Comments=1 {\color{red}  [TODO: #1]}\fi}
\nc{\old}[1]{\ifnum\Comments=1 {\color{brown}  [OLD: #1]}\fi}
\nc{\noah}[1]{\ifnum\Comments=1 {\color{purple} [ng: #1]}\fi}
\nc{\abhishek}[1]{\ifnum\Comments=1 {\color{blue} [as: #1]}\fi}
\nc{\nika}[1]{\ifnum\Comments=1 {\color{red} [nh: #1]}\fi}
\nc{\dhruv}[1]{\ifnum\Comments=1 {\color{magenta} [dr: #1]}\fi}
\nc{\BP}{\mathbb{P}}
\nc{\BI}{\mathbb{I}}
\nc{\fools}[3]{\MF_{#3}({#1}, {#2})}
\nc{\fool}[2]{\MF({#1},{#2})}
\nc{\clip}[2]{{\rm clip}\left[ \left. {#1} \right| {#2} \right]}
\nc{\imax}{\omega}
\DMO{\conv}{conv}
\nc{\MH}{\mathcal{H}}
\nc{\MV}{\mathcal{V}}
\nc{\MC}{\mathcal{C}}
\nc{\MI}{\mathcal{I}}
\nc{\st}{\star}
\nc{\lng}{\langle}
\nc{\rng}{\rangle}
\DMO{\OOPT}{opt}
\nc{\dopt}[2]{\ell_{\OOPT}({#1},{#2})}
\nc{\grad}{\nabla}
\nc{\MG}{\mathcal{G}}
\nc{\MP}{\mathcal{P}}
\nc{\PP}{\mathbb{P}}
\nc{\TT}{\mathbb{T}}
\nc{\TTmax}{\TT_{\max}}
\DMO{\REG}{Reg}
\DMO{\WREG}{wReg}
\nc{\reg}[2]{{\Delta}_{{#1}}({#2})}
\nc{\wreg}[2]{{\Delta}^{\rm w}_{{#1}}({#2})}
\nc{\Reg}[2]{{\REG}_{{#1}}({#2})}
\nc{\wReg}[2]{{\WREG}_{{#1}}({#2})}
\DMO{\Ham}{Ham}
\DMO{\Gap}{Gap}
\DMO{\GD}{GD}
\DMO{\GDA}{GDA}
\DMO{\EG}{EG}
\nc{\TE}{\til{\E}}
\nc{\Var}{\mathbb{V}}
\DMO{\Cov}{Cov}
\DMO{\OGDA}{OGDA}
\DMO{\Unif}{\mathsf{Unif}}
\DMO{\Tr}{Tr}
\nc{\Qu}{\ul{Q}}
\nc{\Qo}{\ol{Q}}
\nc{\Ro}{\ol{R}}
\nc{\Vu}{\ul{V}}
\nc{\Vo}{\ol{V}}
\nc{\RanQ}{\Delta Q}
\nc{\RanV}{\Delta V}
\nc{\clipQ}{\Delta \breve{Q}}
\nc{\frzQ}{\Delta \mathring{Q}}
\nc{\clipV}{\Delta \breve{V}}
\nc{\clipdelta}{\breve{\delta}}
\nc{\cliptheta}{\breve{\theta}}
\nc{\delmin}{\Delta_{{\rm min}}}
\nc{\delmins}[1]{\Delta_{{\rm min},{#1}}}
\nc{\gapfinal}[1]{\max \left\{ \frac{\frzQ_{{#1}}^{k^\st}(x,a)}{2H}, \frac{\delmin}{4H} \right\}}
\nc{\post}[2]{R({#1}; {#2})}
\nc{\posts}[3]{R_{#3}({#1}; {#2})}
\nc{\algnst}[1]{\begin{align*}#1\end{align*}}
\nc{\algn}[1]{\begin{align}#1\end{align}}
\nc{\matx}[1]{\left(\begin{matrix}#1\end{matrix}\right)}
\nc{\nuu}{\nu}
\nc{\bel}[1]{\mathbf{b}({#1})}
\nc{\nbel}[1]{\bar{\mathbf{b}}({#1})}
\nc{\sbel}[2]{\mathbf{b}'_{#1}({#2})}
\nc{\nsbel}[2]{\bar{\mathbf{b}}'_{#1}({#2})}
\nc{\bv}{\mathbf{v}}
\nc{\bone}{\mathbf{1}}
\nc{\bX}{\mathbf{X}}
\nc{\bY}{\mathbf{Y}}
\nc{\bG}{\mathbf{G}}
\nc{\bz}{\mathbf{z}}
\nc{\bw}{\mathbf{w}}
\nc{\bA}{\mathbf{A}}
\nc{\bJ}{\mathbf{J}}
\nc{\bK}{\mathbf{K}}
\nc{\bb}{\mathbf{b}}
\nc{\ba}{\mathbf{a}}
\nc{\bc}{\mathbf{c}}
\nc{\bC}{\mathbf{C}}
\nc{\BR}{\mathbb R}
\nc{\BA}{\mathbb{A}}
\nc{\BC}{\mathbb C}
\nc{\bx}{\mathbf{x}}
\nc{\bS}{\mathbf{S}}
\nc{\bM}{\mathbf{M}}
\nc{\bR}{\mathbf{R}}
\nc{\bN}{\mathbf{N}}
\nc{\NN}{\mathbb{N}}
\nc{\by}{\mathbf{y}}
\nc{\sy}{y}
\nc{\sx}{x}
\nc{\MO}{\mathcal O}
\nc{\MU}{\mathcal{U}}
\nc{\ME}{\mathcal{E}}
\nc{\MN}{\mathcal{N}}
\nc{\MK}{\mathcal{K}}
\nc{\MM}{\mathcal{M}}
\nc{\MS}{\mathcal{S}}
\nc{\MT}{\mathcal{T}}
\nc{\BF}{\mathbb F}
\nc{\BQ}{\mathbb Q}
\nc{\MX}{\mathcal{X}}
\nc{\MA}{\mathcal{A}}
\nc{\MD}{\mathcal{D}}
\nc{\MB}{\mathcal{B}}
\nc{\MZ}{\mathcal{Z}}
\nc{\MJ}{\mathcal{J}}
\nc{\MW}{\mathcal{W}}
\nc{\MR}{\mathcal{R}}
\nc{\MY}{\mathcal{Y}}
\nc{\BZ}{\mathbb Z}
\nc{\BN}{\mathbb N}
\nc{\ep}{\epsilon}
\nc{\epbe}{\varepsilon_{\mathsf{BE}}}
\nc{\epout}{\varepsilon_{\mathsf{outlier}}}
\nc{\bellc}[1][h]{\MT_{#1}^\circ}
\nc{\vep}{\varepsilon}
\nc{\gapfn}[1]{\varepsilon_{#1}}
\nc{\ggapfn}[2]{\varphi_{#1}({#2})}
\nc{\epsahk}{\gapfn{0}}
\nc{\BH}{\mathbb H}
\nc{\BG}{\mathbb{G}}
\nc{\D}{\Delta}
\nc{\MF}{\mathcal{F}}
\nc{\One}[1]{\mathbbm{1}\{{#1}\}}
\nc{\bOne}{\mathbf{1}}
\nc{\Aopt}{\mathcal{A}^{\rm opt}}
\nc{\Amul}{\mathcal{A}^{\rm mul}}
\nc{\SQ}{\mathsf Q}
\nc{\DO}{\accentset{\circ}{\D}}
\nc{\mf}{\mathfrak}
\nc{\mfp}{\mathfrak{p}}
\nc{\mfq}{\mf{q}}
\nc{\Sp}{\mbox{Spec}}
\nc{\Spm}{\mbox{Specm}}
\nc{\hookuparrow}{\mathrel{\rotatebox[origin=c]{90}{$\hookrightarrow$}}}
\nc{\hookdownarrow}{\mathrel{\rotatebox[origin=c]{-90}{$\hookrightarrow$}}}
\nc{\hra}{\hookrightarrow}
\nc{\tra}{\twoheadrightarrow}
\nc{\sgn}{{\rm sgn}}
\nc{\aut}{{\rm Aut}}
\nc{\Hom}{{\rm Hom}}
\nc{\img}{{\rm Im}}
\DMO{\id}{Id}
\nc{\supp}{\mathsf{supp}}
\DMO{\KL}{KL}
\nc{\kld}[2]{\KL({#1}||{#2})}
\nc{\ren}[2]{D_2({#1}||{#2})}
\nc{\chisq}[2]{\chi^2({#1}||{#2})}
\nc{\tvd}[2]{D_{\mathsf{TV}}({#1}, {#2})}
\nc{\hell}[2]{D_{\mathsf{H}}^2({#1}, {#2})}
\nc{\dbi}[3][\pi]{D_{\mathsf{bi}}^{#1}({#2} \| {#3})}
\DMO{\BSS}{BSS}
\DMO{\BES}{BES}
\DMO{\BGS}{BGS}
\DMO{\poly}{poly}
\nc{\indep}{\perp}
\DMO{\sink}{sink}
\nc{\fp}[1]{\MP_1({#1})}
\nc{\BO}{\mathbb{O}}
\nc{\BT}{\mathbb{T}}
\nc{\RR}{\mathbb{R}}
\nc{\Gradient}{\nabla}
\DMO{\diag}{diag}
\nc{\EE}{\mathbb{E}}
\nc{\MQ}{\mathcal{Q}}
\DMO{\PR}{Pr}
\renewcommand{\Pr}{\PR}
\nc{\E}{\mathbb{E}}
\nc{\ra}{\rightarrow}
\renewcommand{\t}{\top}
\nc{\hc}{\{0,1\}^n}
\nc{\pmhc}[1]{\{-1,1\}^{#1}}
\nc{\uad}{\quad}
\newcommand{\ip}[2]{ \left\langle {#1}, {#2} \right\rangle}
\title{Smooth Nash Equilibria: Algorithms and Complexity}
\author{ Constantinos Daskalakis\thanks{Email: \texttt{costis@csail.mit.edu}. Supported by NSF Awards CCF-1901292, DMS-2022448, and DMS2134108, a Simons Investigator Award, the Simons Collaboration on the Theory of Algorithmic Fairness.} \\ MIT \and Noah Golowich\thanks{Email: \texttt{nzg@mit.edu}. Supported by a Fannie \& John Hertz Foundation Fellowship and an NSF Graduate Fellowship.} \\ MIT  \and Nika Haghtalab \thanks{Email: \texttt{nika@berkeley.edu}. This work was supported in part by the National Science Foundation under grant CCF-2145898,
a C3.AI Digital Transformation Institute grant, Google faculty scholar award, and Berkeley AI Research Commons grants.} \\ UC Berkeley \and Abhishek Shetty\thanks{Email: \texttt{shetty@berkeley.edu}.  Supported by an Apple AI/ML PhD Fellowship.} \\ UC Berkeley  }
\date{September 19, 2023}
\begin{document}
\maketitle

\begin{abstract}
  A fundamental shortcoming of the concept of Nash equilibrium is its computational intractability: approximating Nash equilibria in normal-form games is \PPAD-hard.
  In this paper, inspired by the ideas of smoothed analysis, we introduce a relaxed variant of Nash equilibrium called \emph{$\sigma$-smooth Nash equilibrium}, for a {smoothness parameter} $\sigma$. 
  In a $\sigma$-smooth Nash equilibrium, players only need to achieve utility at least as high as their best deviation to a \emph{$\sigma$-smooth strategy}, which is a distribution that does not put too much mass (as parametrized by $\sigma$) on any fixed action. We distinguish two variants of $\sigma$-smooth Nash equilibria: \emph{strong} $\sigma$-smooth Nash equilibria, in which players are required to play $\sigma$-smooth strategies under equilibrium play, and \emph{weak} $\sigma$-smooth Nash equilibria, where there is no such requirement.
  
  We show that both weak and strong $\sigma$-smooth Nash equilibria have superior computational properties to Nash equilibria: when $\sigma$ as well as an approximation parameter $\ep$ and the number of players are all constants, there is a \emph{{constant-time}} randomized algorithm to find a weak $\ep$-approximate  $\sigma$-smooth Nash equilibrium in normal-form games. In the same parameter regime, there is a \emph{polynomial-time} deterministic algorithm to find a strong $\ep$-approximate $\sigma$-smooth Nash equilibrium in a normal-form game. These results stand in contrast to the optimal algorithm for computing $\ep$-approximate Nash equilibria, which cannot run in faster than quasipolynomial-time. 
   We complement our upper bounds by showing that when either $\sigma$ or $\ep$ is an inverse polynomial, finding a weak $\ep$-approximate $\sigma$-smooth Nash equilibria becomes computationally intractable. 
   
   Our results are the first to propose a variant of Nash equilibrium which is computationally tractable, allows players to act independently, and which, as we discuss, is justified by an extensive line of work on individual choice behavior in the economics literature.
\end{abstract}

\section{Introduction}
The notion of \emph{Nash equilibrium}, which is a strategy profile of a game in which each player acts independently of the other players in a way that best responds to them, has been a mainstay in the study of game theory and economics over the last several decades \cite{10.1257/jel.37.3.1067}, with a wide range of applications in related areas. 
For instance, it provides a model that predicts agents' behavioral patterns \cite{doi:10.1073/pnas.0308738101}, though with varying degrees of success, serves as a universal solution concept in multiagent learning settings \cite{filar2007competitive,zhang2019multi}, and furnishes a rich set of problems for study of the computational aspects of equilibria \cite{DBLP:journals/talg/Daskalakis13,babichenko2020informational}, among other applications.

Most known results pertaining to the computation of a Nash equilibrium in general-sum normal-form games are negative. In particular, it is known to be \PPAD-hard to compute Nash equilibria, even approximately \cite{DBLP:journals/siamcomp/DaskalakisGP09,DBLP:journals/jacm/ChenDT09,  DBLP:journals/talg/Daskalakis13,rubinstein2015inapproximability}, meaning that there is unlikely to be a polynomial-time algorithm. 
Similar hardness results abound in other natural models of computation, such as the \emph{query complexity} model. 
In particular, even in 2-player normal-form games, one has to query a nearly constant fraction of the payoff matrices' entries to find an approximate Nash equilibrium \cite{fearnley2014learning,DBLP:journals/teco/FearnleyS16,DBLP:conf/focs/GoosR18}.
The computational intractability of Nash equilibria casts doubt over whether it is actually a reasonable solution concept to model agents' play: real-world agents can only implement efficient algorithms, so should not be expected to play Nash equilibria in all games \cite{DBLP:journals/siamcomp/DaskalakisGP09,DBLP:journals/talg/Daskalakis13,DBLP:journals/geb/HartM10}.\footnote{This perspective is well summarized by a quote by Kamal Jain: ``If your laptop can’t find it then neither can the market''.} %
Moreover, for applications in which Nash equilibria do yield high-utility strategies for agents, its computation nevertheless presents significant challenges \cite{marris2022turbocharging}.

To address these issues, it is popular to consider relaxations of Nash equilibria. Given the competitive nature of many games, one fudamental property of Nash equilibria we would like such a relaxation to preserve is that players choose their actions independently.  Second, of course, we would like such a relaxation to be computationally tractable.
We remark that the well-known equilibrium concepts of \emph{coarse correlated equilibria} \cite{moulin1978strategically} and \emph{correlated equilibria} \cite{aumann1987correlated}, which are relaxations of Nash equilibria, can be computed in polynomial time, but generally involve correlation in the actions of the players.
In particular, implementing them requires a \emph{correlation scheme}, which draws a random variable and sends each player a private signal depending on it. 
Such schemes with guaranteed privacy of the signals may not be available in certain applications, %
motivating our requirement that players' strategies be independent. %

\paragraph{Smooth Nash equilibria.} In this paper we introduce \emph{smooth Nash equilibria}, which meet both of the above requirements. 
The inspiration for smooth Nash equilibria comes from  smoothed analysis \cite{DBLP:journals/cacm/SpielmanT09}, which aims to circumvent computational hardness barriers by postulating that inputs to algorithms come from \emph{smooth distributions}, which do not place too much mass on any single value. Alternatively, one may think of smooth distributions as being perturbations of arbitrary distributions. There are two ways to apply smoothed analysis in the context of game theory: first, one can assume the game itself (i.e., the payoff matrices) comes from a smooth distribution, and second, one could assume the game is worst-case but that the players choose their strategies according to smooth distributions. The former approach has been well-studied, but it fails to sidestep the \PPAD-hardness barrier~\cite{DBLP:journals/jacm/ChenDT09,DBLP:conf/focs/BoodaghiansBHR20}.
We focus on the latter approach, which has attracted more attention in recent years, wherein players' strategies (usually the adversarial player's strategies) are smoothed~\cite{DBLP:conf/nips/HaghtalabRS20,DBLP:conf/focs/HaghtalabRS21,DBLP:conf/nips/HaghtalabHSY22,DBLP:conf/colt/BlockDGR22,bhatt2023smoothed,DBLP:conf/colt/BlockSR23,DBLP:conf/colt/BlockP23,DBLP:journals/corr/abs-2301-11187,rakhlin2011online}. %

In particular, given a normal-form game $G$ where each player has $n$ actions and a \emph{smoothness parameter} $\sigma \in (0,1)$, we say that a strategy profile is a \emph{$\sigma$-smooth Nash equilibrium} if each player cannot improve its utility by switching to any \emph{$\sigma$-smooth} distribution over actions. A $\sigma$-smooth distribution is one which does not place too much mass, namely, more than $1/(\sigma n)$, on any single action. Further, given $\ep \in (0,1)$, an \emph{$\ep$-approximate $\sigma$-smooth Nash equilibrium} is a strategy profile in which each player cannot improve its utility by more than $\ep$ by switching to a $\sigma$-smooth distribution. We refer the reader to \cref{sec:smoothed-equilibria} for the formal definitions.

The computational properties of smooth Nash equilibrium (as well as its approximate counterpart) depend heavily on the smoothness parameter $\sigma\in (0,1)$. It is easy to see that when $\sigma = 1/n$, the definition of smooth Nash equilibrium coincides with that of Nash equilibrium, whereas when $\sigma = 1$, the concept is vacuous: the uniform distribution over $[n]$ is a smooth Nash equilibrium. Our focus is on the intermediate regime $1/n < \sigma < 1$. We also make a distinction between two versions of (approximate) smooth equilibrium: we say that a \emph{strong} $\ep$-approximate $\sigma$-smooth Nash equilibrium is one in which players are required to play $\sigma$-smooth distributions under equilibrium play; a \emph{weak} $\ep$-approximate $\sigma$-smooth Nash equilibrium does not have this additional requirement. 

\paragraph{Our contributions.} We study the complexity of computing $\ep$-approximate $\sigma$-smooth Nash equilibria in $m$-player, $n$-action normal-form games, for various values of $\ep, \sigma \in (0,1)$. Our focus is on the setting when the number of players $m$ is a constant, though our bounds apply more generally. In this setting, our main results are summarized below, where we let $n$ denote the number of actions of each player: %
\begin{itemize}
\item When $\ep, \sigma$ are constants, there is a $\poly(n)$-time deterministic algorithm which finds a \emph{strong} $\ep$-approximate $\sigma$-smooth Nash equilibrium (\cref{thm:alg-strong}). 
\item When $\ep, \sigma$ are constants, there is a \textbf{\emph{constant-time}} randomized algorithm which finds a \emph{weak} $\ep$-approximate $\sigma$-smooth Nash equilibrium (\cref{thm:weak-nash-constant}). Furthermore, the number of queries this algorithm makes to the payoff matrices is bounded above by $\poly(\ep^{-1}, \sigma^{-1})$ (\cref{thm:query-equilibrium}).
\item There is a constant $\sigma_0$ so that, for $\ep = 1/\poly(n)$, it is \PPAD-hard to compute weak $\ep$-approximate $\sigma_0$-smooth Nash equilibrium in 2-player games (\cref{thm:ppad-hard-const-sig}). Moreover, there is a constant $\ep_0$ so that, for $\sigma = 1/\poly(n)$, there is no algorithm running in time $n^{o(\log n)}$ which computes weak $\ep_0$-approximate $\sigma$-smooth Nash equilibrium in 2-player games under the Exponential Time Hypothesis for \PPAD (\cref{cor:smooth-nash-eth}). 
\end{itemize}

In the case that $\sigma = 1/n$, in which case weak and strong $\sigma$-smooth Nash equilibria are both equivalent to Nash equilibria, our results on time complexity recover well-known results on the complexity of Nash equilibria (assuming that $m$ is constant). 
In particular, in this regime, our upper bound of $n^{O(\log(1/\sigma)/\ep^2)}$ becomes $n^{O(\log(n)/\ep^2)}$, which recovers the classical result of \cite{lmn}.
In fact, under ETH for \PPAD, this is the best possible upper bound for computing approximate Nash equilibria in $2$-player games \cite{DBLP:journals/sigecom/Rubinstein17}.
Furthermore, any algorithm for computing approximate Nash equilibria in $2$-player games must make $ n^{2 - o(1)}  $ queries \cite{DBLP:conf/focs/GoosR18} which is in contrast with our \emph{constant-time} randomized algorithm for finding weak $\sigma$-smooth Nash equilibria when $\sigma$ is constant.

Thus, the parameters $\ep, \sigma$ act as dials which allow us to trade off computational efficiency for a stronger equilibrium notion. 
Our results are summarized in \hyperref[tab:results]{Table \ref*{tab:results}}.

\begin{table}[htbp]    \label{tab:results}
    \centering
    \begin{tabular}{|c|c|c|c|}
        \hline
        Model & Upper Bound  &  \multicolumn{2}{|c|}{Lower Bounds}  \\
        \hline
        \multirow{2}{*}{Query Complexity} & $  m \cdot \left( \frac{m \log^2(m/(\delta \sigma \ep))}{\ep^2 \sigma} \right)^{m+1} $  & {$\Omega(\exp(m))$} & \multirow{2}{*}{$\epsilon = \Omega (1)$, $\sigma = \Omega(1)$}         \\
        & \cref{thm:query-equilibrium}$^\dagger$ & \cref{lem:query-lb} &  \\ 
        \hline
        \multirow{6}{*}{Time Complexity} & \multirow{3}{*}{ $ n^{O( m^4 \log(m/ \sigma) / \epsilon^2 )  }  $}   &  {\PPAD-hard} &    \multirow{2}{*}{$\epsilon = n^{-c}$, $\sigma = \Omega(1)$}   \\ 
        & & \cref{thm:ppad-hard-const-sig} &     \\  
        \cline{3-4}
        & \cref{thm:alg-strong}$^\star$ & $n^{\log n  }$ under ETH-\PPAD & \multirow{2}{*}{$\epsilon = \Omega (1) $, $\sigma = n^{-c} $}   \\ 
        \cline{2-2}
        &  \multirow{2}{*}{$\left( \frac{m \log(1/\delta)}{\sigma \ep} \right)^{O\left( \frac{m^2 \log(m/\delta \sigma)}{\ep^2} \right)}$} & \cref{cor:smooth-nash-eth} &  \\ 
        \cline{3-4}
        & & {$\Omega(\exp(m))$} & \multirow{2}{*}{$\epsilon = \Omega (1)$, $\sigma = \Omega(1)$}    \\ 
        & \cref{thm:weak-nash-constant}$^\dagger$  & \cref{lem:query-lb} & \\ 
        \hline     
    \end{tabular}
    \caption{Summary of our results. {\small $\dagger$: These upper bounds are obtained by randomized algorithms, and compute \emph{weak} smooth Nash equilibria with probability $1-\delta$. $\star$: This upper bound is obtained by a deterministic algorithm, and computes \emph{strong} smooth Nash equilibria. All lower bounds hold with respect to weak (and therefore also strong) Nash equilibria, and hold with respect to randomized algorithms.}}
\end{table}

\subsection{Further Motivation \& Connections to Other Equilibrium Notions}
\label{sec:other-equilibria} 

Implicit in the definition of $\sigma$-smooth Nash equilibria is the assumption that agents only aim to compete with their best $\sigma$-smooth response. It is natural to question how reasonable this assumption is. In this section, we discuss an extensive line of work on individual choice behavior in the economics literature which has yielded choice models that help to justify $\sigma$-smooth Nash equilibria. In doing so, we also discuss connections between $\sigma$-smooth Nash equilibria and  other variants of Nash equilibria which have been previously proposed.

\paragraph{Quantal Response Equilibria \& Logit Equilibria.} A fundamental difference between $\sigma$-smooth Nash equilibrium and the standard notion of Nash equilibrium is that in the former, agents do not place all mass of their strategy on their utility-maximizing action(s), given the strategies of others. One may wonder: \emph{Is this realistic behavior?} The theory of \emph{random utility models} \cite{mcfadden1976quantal} proposes a justification for this type of behavior. In particular, suppose that, for fixed strategies of other agents, each agent $j$'s computation of its expected utilities for each action is perturbed by a random noise vector $\vep_j$. This noise could represent calculation errors due to imperfect information an agent has about the game's payoffs or other agents' strategies. As a result of this noise, conditioned on other agents' strategies an agent will generally play some actions which are suboptimal. For a fixed distribution over the agents' noise vectors $(\vep_1, \ldots, \vep_m)$, a strategy profile in which each agent $j$'s strategy is obtained as above by perturbing its (true) expected payoff by $\vep_j$ and then best-responding, is known as a \emph{quantal response equilibrium} (\cref{def:qre}) \cite{mckelvey1995quantal,goeree2016quantal}. Moreover, the mapping from an agent's expected payoff vector $u$ to its strategy obtained by best-responding to the perturbed vector $u + \vep_j$ is known as its \emph{quantal response function}. 

 We remark that an agent's best $\sigma$-smooth response yields larger utility than its quantal response function for many choices of the perturbations $\vep_j$. This observation provides further justification for the set of $\sigma$-smooth deviations as representing a reasonable deviation set with which to define equilibria. To illustrate, consider the common setting where the perturbations $\vep_j$ are drawn according to the extreme value distribution. In this case, given an expected payoff vector $u \in \BR^n$, an agent's quantal response function is given by a softmax function of $u$. %
 This class of response functions, known as the \emph{logistic response functions}, has its origins in the seminal work of Luce \cite{luce1959individual}, and the corresponding special case of quantal response equilibrium, called \emph{logit equilibrium}, has been extensively studied since (e.g., \cite{mcfadden1976quantal,anderson2002logit,haile2008empirical,goeree2002quantal,goeree2000asymmetric}). We show in \cref{prop:logit-smooth} that, under mild conditions on $\sigma$, the utility obtained by the logistic response function is no better than that obtained by the best $\sigma$-smooth distribution.

In light of the above connections, one might wonder if $\sigma$-smooth Nash equilibria can in fact be obtained as a special case of quantal response equilibria, for some distribution of the noise vectors $\vep_j$. As we discuss in \cref{sec:qre-se}, this is not the case: for any distribution of the noise vectors $\vep_j$, there are games for which the set of quantal response equilibria is separated in total variation distance from the set of strong $\sigma$-smooth Nash equilibria (\cref{prop:qre-sne}).

Finally, we mention that a key property of $\sigma$-smooth Nash equilibrium which is essential to obtaining efficient algorithms is that there is a natural notion of $\ep$-approximate smooth equilibrium (\cref{def:sne}). All of our efficient algorithms require $1/\ep$ to be bounded. There does not appear to be a correspondingly natural notion of approximation for quantal response equilibria.

\paragraph{Polyhedral games.} The set of strong $\sigma$-smooth Nash equilibria for a given normal-form game may be viewed as the set of Nash equilibria in the \emph{concave polyhedral game} in which each player's action set is the polyhedron consisting of all $\sigma$-smooth distributions, and the payoffs are defined by linearity. The study of (polyhedral) concave games was initiated by \cite{rosen1965existence}, and has since been studied in various contexts, including for analyzing no-regret learning algorithms \cite{gordon2008noregret} and modeling extensive-form games \cite{farina2022kernelized}, amongst others \cite{mansour2022strategizing}. Our results on strong $\sigma$-smooth equilibria can be viewed as showing that there are efficient algorithms for computation of approximate Nash equilibria in a certain natural class of polyhedral concave games.

\paragraph{Additional equilibrium notions.} \cite{rosenthal1989bounded} introduced a notion of \emph{bounded rationality equilibrium} in which the probability that each player plays an action is related linearly to the suboptimality of that action. The resulting equilibria have similar behavior to quantal response equilibria \cite{mckelvey1995quantal,goeree2016quantal}. \cite{beja1992imperfect} introduced \emph{imperfect performance equilibria}, which are Nash equilibria in an augmented game in which each agent $j$ intends to play a certain strategy $\tilde x_j$, but the strategy it actually plays, $x_j$, is the result of passing $\tilde x_j$ through a \emph{performance mapping}. This performance mapping could simulate the fact that ``agents'' in a game can represent organizations, whose actions are the agglomeration of those of multiple individuals. As these individuals may be self-interested, they may not always act according to the organization's best interests (i.e., its best response). 

\emph{Trembling-hand perfect equilibria}, introduced by \cite{selten1975reexamination}, are those which are obtained as limit points of a sequence of polyhedron games, parametrized by $\delta \searrow 0$, in which each player's polyhedron is the set of strategies which put mass at least $\delta$ on each action. Note that this polyhedron is distinct from the polyhedron of smooth strategies, which require that each action receives mass \emph{no more than} $1/(\sigma n)$. Trembling-hand perfect equilibria are further refined by \emph{proper equilibria} \cite{myerson1978refinements}. We remark that trembling-hand perfect equilibria are related to certain limit points of \emph{control-cost equilibria} \cite[Chapter 4]{vandamme1987stability}, which are equilibria in which each player pays a cost for playing each action, which depends on the probability with which the action is played. The above notions of equilibria are inspired by various justifications as to why agents might play suboptimal actions with positive probability. However, unlike smooth Nash equilibria, none of them can be computed efficiently in general normal-form games. In fact, trembling-hand perfect equilibria and proper equilibria are refinements of Nash equilibria, so an efficient algorithm for computing them would imply an efficient algorithm for computing Nash equilibria. 

Finally, we mention the works \cite{awasthi2012nash,balcan2017nash}, which show that (the standard notion of) Nash equilibrium may be computed more efficiently in games which satisfy certain stability properties. In particular, \cite{awasthi2012nash} establishes an improvement to the quasipolynomial-time algorithm of \cite{lmn} if it is guaranteed that all $\ep$-approximate Nash equilibria lie in a small ball centered around an exact Nash equilibrium. \cite{balcan2017nash} show a similar quantitative improvement under the more general condition that under any small perturbation of the payoff matrices of the game, any Nash equilibrium of the perturbed game is close to some Nash equilibrium of the original game. We emphasize that our results are orthogonal to \cite{awasthi2012nash,balcan2017nash} in that we focus on computing a \emph{relaxed} notion of Nash equilibrium in \emph{worst-case} games.

\paragraph{Regularization and equilibrium solving.} Quantal response equilibria can  be interpreted as Nash equilibria in a modified game where each player's utility has an additional regularization term that depends on the entropy of its strategy. The idea of using entropy regularization, as well as generalizations of it, has been exploited in recent years \cite{DBLP:conf/icml/JacobWFLHBAB22,DBLP:conf/iclr/Bakhtin0LGJFMB23,DBLP:journals/corr/abs-2206-15378,doi:10.1126/science.ade9097,sokota2023unified} to instill desirable properties of solutions to multi-agent reinforcement learning problems. For instance, to find strategies which are ``human-like'',  \cite{DBLP:conf/icml/JacobWFLHBAB22} does the following:  for a fixed, known human strategy $\tilde x$, they introduce a regularizer which minimizes the KL divergence to $\tilde x$ (a generalization of entropy), thus ensuring that learned strategies will be similar to $\tilde x$, while being close to a best response in some sense. 
In a similar spirit, smooth Nash equilibria can be seen as equilibria induced by regularization with respect to the \emph{min-entropy}. Moreover, appropriate generalizations of the min-entropy, as per \cref{lem:generalize-smooth-polytope}, could be used to model the task of finding equilibria which are close to arbitrary (non-uniform) distributions, such as ones played by humans. Whereas most of the prior work on regularization in this context (e.g., \cite{sokota2023unified}) focuses on two-player zero-sum games, our concept of smooth Nash equilibrium leads to efficient algorithms in multi-player, general-sum games. It is an exciting question to understand if our ideas have implications in such general-sum versions of these multi-agent RL problems.

\paragraph{Boosting.} 
The best-response condition of smooth Nash equilibrium is also motivated by a connection to boosting algorithms. 
The link between two-player zero-sum games and boosting is well-known \cite{10.5555/2207821} and can be seen as a fundamental application of equilibrium solving.  
One of the key aspects of the analysis of boosting algorithms (such as ADABoost) which allows for rates that do not depend on the number of samples is the fact that one needs to only compete against subsets of the dataset of size $\eta n$ where $n$ is the size of the dataset and $\eta$ is the desired accuracy \cite[Sections 3.6 and 3.7]{v008a006}.
This can be viewed as the best-response condition of weak $\eta$-smooth Nash equilibrium, where the ``dataset'' player who places weights on the sample set need only  compete with smooth distributions. 
Furthermore, in several applications such as the construction of hard-core sets \cite{DBLP:conf/focs/Impagliazzo95,DBLP:journals/eccc/Kale07,DBLP:conf/soda/BarakHK09}, noise resilient (smooth) boosting \cite{DBLP:conf/colt/Servedio01} and construction of private datasets \cite{DBLP:conf/nips/HaghtalabRS20}, it is important that the distribution output by the ``dataset'' player  does not concentrate too much mass on any single point (see \cite[Section 3.7]{v008a006} for a further discussion). 
This can be viewed as the best-response condition of the strong smooth Nash equilibrium.

\paragraph{Organization.}
In \cref{sec:smoothed-equilibria}, we formally define smooth Nash equilibria and present an important structural result in \cref{sec:sampling}. 
In \cref{sec:query}, we present our query-efficient algorithms for finding weak smooth equilibria.
In \cref{sec:eff-algs}, we discuss efficient algorithms for finding smooth equilibria, including both the weak (\cref{sec:weak-ub}) and strong (\cref{sec:strong-ub}) variants. %
\cref{sec:hardness} contains our hardness results, and \cref{sec:discussion} contains concluding remarks and open questions. %
\cref{sec:zero-sum} contains discussion of smooth equilibria in zero-sum games.

\section{Preliminaries}
\label{sec:prelim}

Given a tuple of elements $y =  (y_1 , \dots , y_m) $, we will denote by $ y_{-i} $ the tuple $ (y_1 , \dots , y_{i-1} , y_{i+1} , \dots , y_m ) $.
Further, for an element $y_i'$, we will use the standard game theoretic notation of $ (y_i' , y_{-i})  $ to refer to the tuple $ \left( y_1, \dots , y_{i-1} , y_i', y_i , \dots , y_m \right)  $. 
In particular, $  y = (y_i , y_{-i})  $. %
For $i \leq i'$, we write $y_{i:i'} = (y_i, y_{i+1}, \ldots, y_{i'})$. 
For $n \in \BN$, we let $\Delta^n$ denote the set of distributions over $[n]$. For $i \in [n]$, let $e_i \in \Delta^n$ denote the corresponding basis vector. For $y \in \Delta^n$, we let $\supp(y) := \{ i :\ y_i > 0 \}$.

\begin{definition}[Normal Form Games]
  Let $n$ and $m$ be positive integers. 
  A normal form game with $m$ players each with $n$ actions is specified by a tuple of \emph{payoff mappings} $A_1 , \dots , A_m  : \left[ n \right]^m \to \left[ 0,1 \right] $, with the interpretation that player $j$ receives utility $A_j \left( a_1 , \dots , a_m \right) $ when the players play the actions $a_1 , \dots , a_m $ respectively.
\end{definition}

  A \emph{mixed strategy} (or simply \emph{strategy}) of player $j$ is an element $x_j \in \Delta^n$. A \emph{strategy profile} is a tuple $x = (x_1, \ldots, x_m) \in (\Delta^n)^m$. 
  We will extend the notation $A_j(a_1, \ldots, a_m)$ by linearity: for a strategy profile $x = (x_1, \ldots, x_m)$, we write
  \begin{align}
    A_j (x ) = A_j( x_1 , \dots , x_m ) = \mathbb{E}_{ a_1 \sim x_1  } \dots \mathbb{E}_{ a_m \sim x_m } \left[ A_j \left( a_1 , \dots , a_m \right) \right]. \nonumber
  \end{align}    
 For an action $a_j \in [n]$ and a strategy profile $x$, we will write $A_j(a_j, x_{-j})$ to denote $A_j(e_{a_j}, x_{-j})$, namely player $j$'s expected utility when it plays $a_j$ and others play according to $x$.

 \begin{definition}[Nash equilibrium]
   \label{def:ne}
  Given $\ep \in (0,1)$ and a normal-form game specified by $A_1, \ldots, A_m$, a strategy profile $x = (x_1, \ldots, x_m) \in (\Delta^n)^m$ is an \emph{$\ep$-approximate Nash equilibrium} if for each $j \in [m]$,
  \begin{align}
\max_{x_j' \in  \Delta^n } A_j(x_j', x_{-j}) - A_j(x) \leq \ep\nonumber.
  \end{align}
  We will refer to a $0$-approximate Nash equilibrium as simply a \emph{Nash equilibrium}. 
\end{definition}

In the special case of 2-player games, we will denote the payoff functions of the game by matrices $A,B \in \BR^{n \times n}$, and the players' respective strategy profiles by $x,y \in \Delta^n$.

\section{Smooth Equilibria} \label{sec:smoothed-equilibria}

First, we define the polytope of smooth distributions, which is the set of distributions that do not concentrate too much mass on any one coordinate. %

\begin{definition}[Smooth Distribution Polytope]
  \label{def:smooth-polytope}
  Suppose $n \in \BN$ and $\sigma \in \BR$ satisfy $ 1/n \leq   \sigma \leq 1 $. 
  The \emph{smooth distribution polytope} $\MK_{\sigma, n}$ is the set %
  \begin{align} \label{eq:smooth-distribution-polytope}
    \MK_{\sigma,n} = \left\{ x \in \Delta^n : 0 \leq  x_i \leq \frac{1}{n\sigma}  \ \ \forall i \in [n] \right\}. 
  \end{align}
\end{definition}
We call the parameter $\sigma $ the \emph{smoothness parameter}; %
for larger $\sigma$, the constraints defining $\MK_{\sigma, n}$ are more stringent, i.e., distributions in $\MK_{\sigma, n}$ are ``smoother''. %
In particular, for $\sigma = 1/n$ we have $ \MK_{\sigma,n} = \Delta^n   $ and for $\sigma = 1$ we have $ \MK_{\sigma,n} = \left\{ \frac{1}{n} \mathbf{1} \right\} $ where $\mathbf{1}$ is the all ones vector.
With this definition in hand, we now define smooth Nash equilibria, the key concept that we study in this paper.
\begin{definition}[Smooth Nash equilibria]
  \label{def:sne}
  Fix an $m$-player normal-form game $G$ with payoff mappings $A_1, \ldots, A_m : [n]^m \ra [0,1]$, and $\ep, \sigma \in [0,1]$. A strategy profile $x = (x_1, \ldots, x_m) \in (\Delta^n)^m$ is an \emph{weak $\ep$-approximate $\sigma$-smooth Nash equilibrium} if, for each $j \in [m]$,
  \begin{align}
\max_{x_j' \in \MK_{\sigma, n}} A_j(x_j', x_{-j}) - A_j(x) \leq \ep\nonumber.
  \end{align}
  It is moreover a \emph{strong $\ep$-approximate $\sigma$-smooth Nash equilibrium} if $x_j \in \MK_{\sigma, n}$ for each $j \in [m]$. 
\end{definition}
We refer to weak (resp., strong) 0-approximate $\sigma$-smooth Nash equilibria as simply weak (resp., strong) $\sigma$-smooth Nash equilibria. 
Intuitively, \cref{def:sne} says that in $\sigma$-smooth Nash equilibria, no player can gain more than $\ep$ utility by deviating to a distribution that is $ \sigma $-smooth. 
Furthermore, in strong $\sigma$-smooth Nash equilibria, the mixed strategies of each player are themselves $ \sigma $-smooth.

The set of strong $\sigma$-smooth Nash equilibria of a normal-form game is equal to the set of Nash equilibria in the polyhedral concave game \cite{rosen1965existence} where each player's action set is $\MK_{\sigma, n}$ and the payoffs are given by $x \mapsto A_j(x)$, for $j \in [m],\ x \in (\MK_{\sigma, n})^m$. Since the sets $\MK_{\sigma, n}$ are compact and convex, \cite[Theorem 1]{rosen1965existence} implies that strong $\sigma$-smooth Nash equilibria always exist. 
\begin{proposition}[Existence of Equilibria; \cite{rosen1965existence}] \label{thm:existence} 
For any $\sigma \in (0,1)$, every normal-form game has a strong $\sigma$-smooth Nash equilibrium. 
\end{proposition}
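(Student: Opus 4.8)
The plan is to derive \cref{thm:existence} directly from the cited result of Rosen \cite[Theorem 1]{rosen1965existence} by checking that the associated ``smooth game'' satisfies the hypotheses of that theorem. Concretely, for a fixed normal-form game $G$ with payoffs $A_1,\dots,A_m$ and fixed $\sigma\in(0,1)$, I would define the concave game $\widehat G$ in which player $j$'s action set is the smooth distribution polytope $\MK_{\sigma,n}$ and player $j$'s payoff, as a function of a profile $x\in(\MK_{\sigma,n})^m$, is $x\mapsto A_j(x)$, where $A_j$ is the multilinear extension defined in \cref{sec:prelim}. The key observation is that a strong $\sigma$-smooth Nash equilibrium of $G$, per \cref{def:sne}, is exactly a Nash equilibrium of $\widehat G$: the requirement $x_j\in\MK_{\sigma,n}$ is the constraint that each player's strategy lie in its action set, and the best-response inequality $\max_{x_j'\in\MK_{\sigma,n}}A_j(x_j',x_{-j})-A_j(x)\le 0$ is precisely the Nash condition in $\widehat G$.

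Next I would verify Rosen's hypotheses for $\widehat G$. There are two: (i) each player's action set is a nonempty, compact, convex subset of Euclidean space; and (ii) each player's payoff $A_j(x_j,x_{-j})$ is continuous in the full profile $x$ and concave in the player's own variable $x_j$ for every fixed $x_{-j}$. For (i): $\MK_{\sigma,n}$ is the intersection of the simplex $\Delta^n$ with the box $\{0\le x_i\le 1/(n\sigma)\}$, hence a polytope, so it is compact and convex; it is nonempty since $\sigma\le 1$ guarantees $1/(n\sigma)\ge 1/n$, so the uniform distribution $\tfrac1n\mathbf 1$ lies in it. For (ii): $A_j(x_1,\dots,x_m)$ is multilinear in the blocks $x_1,\dots,x_m$ (it is an expectation of the bounded function $A_j:[n]^m\to[0,1]$ under the product measure $x_1\otimes\cdots\otimes x_m$), hence jointly continuous and, in particular, \emph{linear} — and therefore concave — in $x_j$ for each fixed $x_{-j}$. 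Thus all hypotheses hold, Rosen's theorem applies, $\widehat G$ has a Nash equilibrium, and by the identification above that equilibrium is a strong $\sigma$-smooth Nash equilibrium of $G$.

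Since this is essentially a packaging of a classical theorem, there is no serious obstacle; the only thing to be careful about is the bookkeeping that (a) the polytope $\MK_{\sigma,n}$ is genuinely nonempty and full-dimensional enough for the statement to be non-vacuous (which the range $1/n\le\sigma\le 1$ ensures, as noted after \cref{def:smooth-polytope}), and (b) that Rosen's ``concave game'' formulation matches our payoff convention — in particular that we want concavity (not convexity) of each $A_j$ in $x_j$, which holds automatically here because multilinearity gives linearity in each block. I would also remark that one could alternatively invoke Glicksberg's or Debreu's generalization of Nash's theorem to continuous games with compact convex action spaces, or give a one-line fixed-point argument via Kakutani applied to the best-response correspondence $x\mapsto\prod_j\argmax_{x_j'\in\MK_{\sigma,n}}A_j(x_j',x_{-j})$, whose values are nonempty, convex (by linearity in $x_j$), and which has closed graph (by continuity) — but citing Rosen as in the statement is the cleanest route and is all that is claimed.
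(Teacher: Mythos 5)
Your argument is exactly the paper's: just before the proposition the paper identifies strong $\sigma$-smooth Nash equilibria with Nash equilibria of the polyhedral concave game with action sets $\MK_{\sigma,n}$ and multilinear payoffs $x\mapsto A_j(x)$, and invokes \cite[Theorem 1]{rosen1965existence} using compactness and convexity of $\MK_{\sigma,n}$. Your verification of Rosen's hypotheses (nonempty compact convex action sets, joint continuity, linearity hence concavity in each player's own block) fills in the same bookkeeping the paper leaves implicit, so this is correct and essentially identical to the paper's proof.
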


\begin{remark}[Generalizing the smooth distribution polytope] \label{lem:generalize-smooth-polytope} 
  The smooth polytope $\MK_{\sigma, n}$ may alternatively be defined as the set of distributions on $[n]$ whose Radon-Nikodym derivative with repsect to the uniform distribution on $[n]$ is bounded by $1/\sigma$. In particular, 
$ 
  \MK_{\sigma,n} = \left\{ x \in \Delta^n :  \left\|\frac{d x }{ d \mu_n }\right\|_\infty \leq \frac{1}{\sigma} \right\},
$
where $ \mu_n $ is the uniform measure on $ [n] $. 
As is common in smoothed analysis (e.g., \cite{DBLP:conf/colt/BlockDGR22,DBLP:conf/colt/BlockSR23,DBLP:conf/colt/BlockP23,DBLP:journals/corr/abs-2301-11187}), one may generalize the notion of smooth distributions and thereby smooth equilibria by allowing $\mu_n$ to be an arbitrary distribution on $[n]$ or even a more general probability space. %
Essentially all of our results remain unchanged under appropriate compactness assumptions. %
\end{remark}

\subsection{Structure of Smooth Nash Equilibria}
\label{sec:sampling}
Next, we discuss a key structural result underlying our upper bounds, which states that there are weak approximate smooth Nash equilibria for which the players' strategy profiles are supported on a small number of actions (\cref{lem:existence_k_unif}). In fact, we prove a slightly stronger statement, namely that %
a smooth Nash equilibrium can be approximated by sampling. 
\begin{lemma}
    \label{lem:mplayer-approx}
    Let $A_1, \ldots, A_m : [n]^m \ra [0,1]$ denote the payoff matrices of an $m$-player normal-form game $G$. Let $\ep, \sigma \in (0,1)$ be fixed. Let 
$ 
      k = \frac{C_{\ref{lem:mplayer-approx}}m \log(8m/ \delta \sigma)}{\ep^2}
$ 
    where $C_{\ref{lem:mplayer-approx}}$ is a sufficiently large constant. 
    Given a weak $\sigma$-smooth Nash equilibrium $x = (x_1, \ldots, x_m)$, let $ \hat x = (\hat x_1, \ldots, \hat x_m)  $ be the random strategy profile where $\hat x_j$ is the uniform average over $k$ actions $B_{j,1}, \ldots, B_{j,k}$ sampled i.i.d.~from $ x_j$. %
    Then, with probability $ 1 - \delta $, we have, for all $j \in [m]$, 
    \begin{align} \label{eq:s-unif-close}
    |A_j(x_1, \ldots, x_m) - A_j(\hat x_1, \ldots, \hat x_m)| \leq \frac{\ep}{2} , \qquad \sup_{x_j' \in \MK_{\sigma, n}} \left| A_j(x_j', \hat x_{-j}) - A_j(x_j',  x_{-j})\right| \leq \frac{\ep}{2},
    \end{align}
      and $\hat x$ satisfying \cref{eq:s-unif-close} is a weak $\ep$-approximate $\sigma$-smooth Nash equilibrium.
    \end{lemma}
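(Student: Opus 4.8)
My plan is to separate the final statement into a purely deterministic implication and a concentration argument, proving the latter for a single fixed player $j$ with failure probability $\le\delta/(2m)$ and then union bounding over $j\in[m]$. \textbf{Deterministic part.} Suppose $\hat x$ satisfies \cref{eq:s-unif-close}. For each $j$, the second inequality in \cref{eq:s-unif-close} gives $\max_{x_j' \in \MK_{\sigma,n}} A_j(x_j', \hat x_{-j}) \le \max_{x_j' \in \MK_{\sigma,n}} A_j(x_j', x_{-j}) + \ep/2$, while the first gives $A_j(\hat x) \ge A_j(x) - \ep/2$; subtracting and using that $x$ is a weak $\sigma$-smooth Nash equilibrium (so the bracket below is nonpositive) yields
\[
  \max_{x_j' \in \MK_{\sigma,n}} A_j(x_j', \hat x_{-j}) - A_j(\hat x) \;\le\; \Big[ \max_{x_j' \in \MK_{\sigma,n}} A_j(x_j', x_{-j}) - A_j(x) \Big] + \ep \;\le\; \ep ,
\]
so $\hat x$ is a weak $\ep$-approximate $\sigma$-smooth Nash equilibrium, and all the remaining work is in establishing \cref{eq:s-unif-close}.

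\textbf{First estimate.} I would view $A_j(\hat x)$ as a function of the $mk$ i.i.d.\ samples $\{B_{i,l}\}_{i\in[m],l\in[k]}$. Since payoffs lie in $[0,1]$ and each sample $B_{i,l}$ affects only a $1/k$ fraction of the $k^m$ product tuples in the multilinear average defining $A_j(\hat x)$, changing one sample changes $A_j(\hat x)$ by at most $1/k$. As $\E[A_j(\hat x)] = A_j(x)$ by independence and multilinearity, McDiarmid's inequality gives $|A_j(\hat x) - A_j(x)| \le \ep/2$ except with probability $2\exp(-\Omega(\ep^2 k/m))$, which is $\le \delta/(2m)$ for the stated choice of $k$.

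\textbf{Second estimate (the crux).} Fix $j$ and, for $a\in[n]$, set $g(a) := A_j(a,\hat x_{-j}) - A_j(a,x_{-j})$, a function of the $(m-1)k$ samples $\{B_{i,l}\}_{i\ne j}$ with $\E[g(a)]=0$; as above, changing one sample changes every $g(a)$ by at most $1/k$, so each $g(a)$ is sub-Gaussian with variance proxy $O(m/k)$. By linearity of $A_j$ in its $j$-th argument, the quantity to be bounded equals $\sup_{x_j'\in\MK_{\sigma,n}}|\sum_a x_j'(a)g(a)|$, and it suffices to bound $\sup_{x_j'\in\MK_{\sigma,n}}\sum_a x_j'(a)g(a)$ (apply the same argument to $-g$). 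The key step—where the smoothness parameter enters, and where a naive union bound over the $n$ actions or over the vertices of $\MK_{\sigma,n}$ would cost a factor $\log n$ instead of $\log(1/\sigma)$—is the elementary observation that, since $x_j'(a)\le 1/(n\sigma)$ and $\sum_a x_j'(a)=1$, for every $\lambda\ge 0$,
\[
  \sup_{x_j'\in\MK_{\sigma,n}} \sum_{a=1}^n x_j'(a) g(a) \;\le\; \lambda + \frac{1}{n\sigma}\sum_{a=1}^n \big(g(a)-\lambda\big)^+ .
\]
Taking expectations and using the one-sided sub-Gaussian bound $\E[(g(a)-\lambda)^+] \le O(\sqrt{m/k})\,e^{-\Omega(\lambda^2 k/m)}$, the right side becomes $\lambda + \sigma^{-1}O(\sqrt{m/k})\,e^{-\Omega(\lambda^2 k/m)}$; choosing $\lambda$ of order $\sqrt{(m/k)\log(1/\sigma)}$ gives $\E\big[\sup_{x_j'}\sum_a x_j'(a)g(a)\big] = O\big(\sqrt{m\log(1/\sigma)/k}\big) \le \ep/4$ for the stated $k$. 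Finally $\sup_{x_j'\in\MK_{\sigma,n}}\sum_a x_j'(a)g(a)$ is itself a function of the $(m-1)k$ samples with bounded differences $1/k$, so McDiarmid concentrates it within $\ep/4$ of its mean except with probability $\exp(-\Omega(\ep^2 k/m)) \le \delta/(2m)$, which completes the estimate. The main obstacle is exactly this soft-threshold step: one must turn the supremum over the smooth polytope into a sum of per-action truncated deviations, so that only the \emph{marginal} sub-Gaussianity of each $g(a)$ is invoked (no control of their joint law is needed) and the $n$ coordinates are paid for by the $1/(n\sigma)$ budget rather than by a union bound; the rest—the bounded-difference constants, the extra $\log(m/\delta)$ from the union bound absorbed into $k = \Theta(m\log(m/(\delta\sigma))/\ep^2)$, and the two-sided version—is routine bookkeeping.
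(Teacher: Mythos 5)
Your proof is correct, and the overall architecture (deterministic implication, McDiarmid for the first estimate, union bound over $j$) matches the paper's. The one place you take a genuinely different route is the crux, the second estimate. The paper isolates a ``smooth finite class lemma'' (\cref{lem:smooth-class-lemma}): starting from $\exp\bigl(\max_{w\in\MK_{\sigma,n}}\langle w,\lambda U\rangle\bigr) \le \sum_i \frac{1}{n\sigma}\exp(\lambda U_i)$, it bounds the MGF by $\frac{1}{\sigma}e^{c^2\lambda^2/2}$ and gets the high-probability tail in one shot via Markov. You instead use the soft-threshold inequality $\sup_{x_j'\in\MK_{\sigma,n}}\sum_a x_j'(a)g(a)\le\lambda+\frac{1}{n\sigma}\sum_a(g(a)-\lambda)^+$, bound the \emph{expectation} of the sup, and then run a second McDiarmid to concentrate the sup around its mean. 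Both routes exploit exactly the same structural fact — under a nonnegative convex functional, a vector in $\MK_{\sigma,n}$ can be replaced coordinatewise by $1/(n\sigma)$, paying $\log(1/\sigma)$ rather than $\log n$, and only \emph{marginal} sub-Gaussianity of the $U_i$'s is needed — but they package it differently. The paper's MGF version is shorter (no separate concentration step for the sup); yours is a two-step expectation-then-concentration argument in the style of empirical process theory, which some readers may find more transparent, at the cost of one extra McDiarmid application whose logarithmic factor is absorbed into the same $k=\Theta(m\log(m/(\delta\sigma))/\ep^2)$. Your derivation of the bounded-difference constant $1/k$ and the variance proxy $O(m/k)$, the mean-zero claim $\E[g(a)]=0$ from multilinearity and independence, and the treatment of $\pm g$ for the two-sided bound are all correct.
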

The main technical lemma driving the sampling result of \cref{lem:mplayer-approx} is a generalization of Massart's finite class lemma to the case when the supremum is over the set of smooth distributions on $[n]$ (namely, \cref{lem:smooth-class-lemma}). 
At a high level, \cref{lem:smooth-class-lemma} allows us to bound the sampling error in each player's deviation to a best-response smooth strategy without incurring a $\log(n)$ factor in the number $k$ of samples. 

We capture the sparsity of strategy profiles such as $\hat x$ produced by the sampling procedure in \cref{lem:mplayer-approx}  with the following definition.
  \begin{definition}
    \label{def:k-uniform}
    For $k \in \BN$, we say that a distribution $y \in \Delta^n$ is \emph{$k$-uniform} if for each $i \in [n]$, $y_i$ is an integral multiple of $1/k$. We say that a strategy profile $x = (x_1, \ldots, x_m) \in (\Delta^n)^m$ is $k$-uniform if each of its constituent strategies $x_j$ is.
    \end{definition}

Thus, we see from \cref{lem:mplayer-approx} that $k$-uniform weak approximate smooth Nash equilibria exist for $k$ that is independent of the number of actions of the game. 
\begin{corollary}[Existence of $k$-uniform weak smooth Nash] \label{lem:existence_k_unif}
Let $m \in \BN$, $\ep, \sigma \in (0,1)$, be given and set  $k = \frac{C_{\ref{lem:mplayer-approx}}m \log(8m/ \sigma)}{\ep^2}$. Then any $m$-player normal-form game $G$ has a weak $k$-uniform $\ep$-approximate $\sigma$-smooth Nash equilibrium.
\end{corollary}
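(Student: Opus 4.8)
The plan is to derive \cref{lem:existence_k_unif} as an immediate corollary of \cref{lem:mplayer-approx} together with the existence result \cref{thm:existence}. First I would invoke \cref{thm:existence} to obtain a \emph{strong} $\sigma$-smooth Nash equilibrium $x = (x_1, \ldots, x_m)$ of the game $G$; in particular $x$ is a weak $\sigma$-smooth Nash equilibrium (a strong equilibrium is trivially weak, since it satisfies the same best-response inequality with the additional constraint $x_j \in \MK_{\sigma,n}$). This gives us a valid input to \cref{lem:mplayer-approx}.

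Next I would apply \cref{lem:mplayer-approx} to this equilibrium $x$ with the error parameter $\ep$ and a failure probability $\delta$ of my choosing — say $\delta = 1/2$, or indeed any fixed constant in $(0,1)$. Note that with $\delta = 1/2$ the sample count $k = \frac{C_{\ref{lem:mplayer-approx}} m \log(8m / \delta \sigma)}{\ep^2}$ becomes (up to the constant absorbed into $C_{\ref{lem:mplayer-approx}}$) exactly $\frac{C_{\ref{lem:mplayer-approx}} m \log(8m/\sigma)}{\ep^2}$, matching the value of $k$ in the statement of \cref{lem:existence_k_unif}. The lemma guarantees that with probability $1 - \delta > 0$ the sampled profile $\hat x = (\hat x_1, \ldots, \hat x_m)$ satisfies \cref{eq:s-unif-close} and hence is a weak $\ep$-approximate $\sigma$-smooth Nash equilibrium.

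Since this event has strictly positive probability, such a profile $\hat x$ exists. Finally I would observe that $\hat x$ is $k$-uniform by construction: each $\hat x_j$ is the uniform average of $k$ (not necessarily distinct) actions $B_{j,1}, \ldots, B_{j,k}$ sampled from $x_j$, so for every $i \in [n]$ the mass $(\hat x_j)_i$ equals $(\text{number of } \ell \text{ with } B_{j,\ell} = i)/k$, an integral multiple of $1/k$; this is precisely \cref{def:k-uniform}. Combining the three observations yields a weak $k$-uniform $\ep$-approximate $\sigma$-smooth Nash equilibrium, as claimed.

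There is essentially no obstacle here — the entire content has been front-loaded into \cref{lem:mplayer-approx} — so the only thing to be careful about is bookkeeping: making sure the choice of $\delta$ is a constant so that the $\log(8m/\delta\sigma)$ term collapses into $\log(8m/\sigma)$ up to adjusting $C_{\ref{lem:mplayer-approx}}$, and explicitly noting that "strong implies weak" so that \cref{thm:existence} supplies a legitimate input to the sampling lemma. If one instead wanted to avoid even mentioning $\delta$, one could alternatively appeal to the probabilistic-method phrasing directly: the expected number of coordinates violating $k$-uniformity is zero, so a deterministic witness exists; but routing through the stated high-probability guarantee of \cref{lem:mplayer-approx} is cleanest.
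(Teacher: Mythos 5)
Your proposal is correct and is exactly the intended derivation: apply \cref{thm:existence} to get a strong (hence weak) $\sigma$-smooth equilibrium, feed it into \cref{lem:mplayer-approx} with any fixed $\delta<1$, and note that the positive-probability event yields a $k$-uniform profile by construction. The only wrinkle worth flagging is the one you already flag: with $\delta=1/2$ the lemma gives $k = C m\log(16m/\sigma)/\ep^2$, whereas the corollary states $k = C m\log(8m/\sigma)/\ep^2$ with the \emph{same} constant $C_{\ref{lem:mplayer-approx}}$; strictly speaking this requires the constant in the lemma to have been chosen large enough to absorb the extra $\log(1/\delta)$ term, which is harmless since the lemma only asserts $C_{\ref{lem:mplayer-approx}}$ is ``sufficiently large,'' but it is the kind of thing a careful write-up would say out loud rather than leaving implicit.
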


\cref{lem:existence_k_unif} generalizes the result of \cite{lmn}, as well as the follow-up of \cite{hemon2008approximate}, which treat the case of approximate Nash equilibrium,  corresponding to the case of $\sigma = 1/n$. The parameter $k$ needed by \cite{lmn,hemon2008approximate}, which governs the sparsity of the equilibrium, thus grows logarithmically in $n$. %
As we show in \cref{sec:eff-algs}, the fact that the sparsity is only logarithmic in $1/\sigma$ is key to getting a {polynomial}-time algorithm for computing smooth Nash equilibria when $1/\sigma = O(1)$.

\section{Query Complexity of Smooth Nash Equilibria}
\label{sec:query}
In this section, we prove an upper bound on the randomized query complexity of computing weak smooth Nash equilibria. This result comprises the bulk of our proof in \cref{sec:weak-ub} that there is a randomized constant-time algorithm for computing weak $\ep$-approximate  $\sigma$-smooth Nash equilibria. Beside the application to computational complexity, query complexity of equilibria is of interest in its own right,
as a tool to understand the amount of information that needs to be shared in order to find an equilibrium \cite{babichenko2020informational}. 
It is also closely related to the analysis of uncoupled dynamics converging to equilibrium \cite{DBLP:conf/icml/ConitzerS04,DBLP:journals/geb/HartM10,DBLP:conf/focs/GoosR18}.

We briefly review the query complexity model. 
 The payoff mappings $A_1, \ldots, A_m : [n]^m \ra [0,1]$ are assumed to be unknown, but the algorithm can repeatedly query single entries $A_j(a_1, \ldots, a_m)$ (for $j \in [m], a_1, \ldots, a_m \in [n]$) of these matrices. Randomness is allowed in choosing the queries. After making at most $Q$ queries, for $Q \in \BN$ denoting the \emph{query complexity}, the algorithm is required to output a strategy profile. 
Our main result in this model is \cref{thm:query-equilibrium} below, which states that we can find a smooth Nash equilibrium with a number of queries that is independent of the number of actions $n$. 
\begin{theorem}
  \label{thm:query-equilibrium}
  Let $A_1, \ldots, A_m : [n]^m \ra [0,1]$ denote the payoff matrices of an $m$-player normal-form game $G$. Let $\ep, \sigma, \delta \in (0,1)$ be fixed. Then $\QueryEquilibrium((A_1, \ldots, A_m), \sigma, \ep, \delta)$ (\cref{alg:query-equilibrium}) makes  $ O \left( m \cdot \left( \frac{m \log^2(m/(\delta \sigma \ep))}{\ep^2 \sigma} \right)^{m+1} \right)$ %
  queries to entries of payoff matrices of $G$. Moreover, it outputs a strategy profile $\hat x$ which is a weak $\ep$-approximate $\sigma$-smooth Nash equilibrium of $G$ with probability at least $1-\delta$.
\end{theorem}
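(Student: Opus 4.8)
\textbf{Proof proposal for \cref{thm:query-equilibrium}.}

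The plan is to reduce the problem of finding a weak $\ep$-approximate $\sigma$-smooth Nash equilibrium to finding such an equilibrium of a small ``subsampled'' game whose payoff matrices can be learned with few queries, and to combine this with the sparsity structure from \cref{lem:existence_k_unif}. First I would invoke \cref{lem:existence_k_unif}: for $k = \Theta(m \log(m/\sigma)/\ep^2)$ there exists a $k$-uniform weak $(\ep/4)$-approximate $\sigma$-smooth Nash equilibrium $x^\st = (x_1^\st, \ldots, x_m^\st)$. Since each $x_j^\st$ is $k$-uniform, it is supported on a multiset of at most $k$ actions; equivalently, the search can be restricted to strategy profiles drawn from the (combinatorially small but still action-dependent) family of $k$-uniform profiles. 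The key realization is that to evaluate the equilibrium conditions for a candidate $k$-uniform profile $\hat x$, one needs (i) the values $A_j(\hat x)$, and (ii) the values $\sup_{x_j' \in \MK_{\sigma,n}} A_j(x_j', \hat x_{-j})$; the former depends only on the $O(k^m)$ payoff entries indexed by the supports of the $\hat x_j$, while the latter is a linear optimization over $\MK_{\sigma,n}$ whose objective is the vector $(A_j(a, \hat x_{-j}))_{a \in [n]}$ — a vector with $n$ coordinates, each an expectation over $\hat x_{-j}$.

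The step that makes the query count independent of $n$ is handling (ii): rather than computing the best $\sigma$-smooth response exactly (which naively reads all $n$ coordinates of the payoff vector), I would estimate $\sup_{x_j' \in \MK_{\sigma,n}} A_j(x_j', \hat x_{-j})$ by sampling. The linear program $\max_{x_j' \in \MK_{\sigma,n}} \langle x_j', u \rangle$ with $u_a = A_j(a,\hat x_{-j}) \in [0,1]$ has a closed form: it puts mass $1/(n\sigma)$ on the $\lceil n\sigma \rceil$ largest coordinates of $u$ (a ``top-quantile'' average). This optimal value equals $n\sigma \cdot \E_{a \sim \mathrm{Unif}([n])}[u_a \cdot \mathbbm{1}\{u_a \geq \tau\}] + (\text{boundary correction})$ for the appropriate threshold $\tau$, so it can be estimated to additive error $\ep/8$ by drawing $\poly(1/\ep, 1/\sigma, \log(1/\delta))$ uniformly random actions $a \in [n]$, and for each, estimating $u_a = A_j(a, \hat x_{-j}) = \E_{a_{-j} \sim \hat x_{-j}}[A_j(a, a_{-j})]$ — but since $\hat x_{-j}$ is $k$-uniform we can read those entries exactly (at most $k^{m-1}$ of them), or sub-sample them as well. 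This is exactly the content of \cref{lem:mplayer-approx} / \cref{lem:smooth-class-lemma}: uniform convergence over $\MK_{\sigma,n}$ costs only $\log(1/\sigma)$, not $\log n$, in the sample size, so a union bound over all candidate $k$-uniform profiles and all $m$ players holds with the stated probability. Algorithm \ref{alg:query-equilibrium} (\QueryEquilibrium) presumably implements precisely this: enumerate (or randomly search over) $k$-uniform profiles on a freshly sampled pool of actions, estimate each player's smooth-response gap via the subsampling above, and output any profile whose estimated gaps are all at most $3\ep/4$.

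For the query accounting, each of the $m$ players contributes: a pool of $\Theta\big((k/\sigma)\log(\ldots)\big)$ sampled actions for the best-response estimation, and for each such action an expectation over the other players' $k$-uniform strategies requiring $O(k^{m-1})$ entries; multiplying through and absorbing the $\log$ factors gives the claimed $O\big(m \cdot (m\log^2(m/(\delta\sigma\ep))/(\ep^2\sigma))^{m+1}\big)$ bound, the exponent $m+1$ coming from the $k^{m-1}$ joint-evaluation cost times the $k/\sigma$-sized action pool times the $k$ actions in player $j$'s own support. Correctness follows by the union bound: on the good event, the estimated gaps are within $\ep/4$ of the true gaps uniformly, so (a) the profile $x^\st$ from \cref{lem:existence_k_unif} passes the test (its true gap is $\le \ep/4 < 3\ep/4 - \ep/4$... more carefully, $\le \ep/2$), guaranteeing the algorithm outputs something, and (b) anything the algorithm outputs has true gap at most $3\ep/4 + \ep/4 = \ep$, hence is a genuine weak $\ep$-approximate $\sigma$-smooth Nash equilibrium.

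\textbf{Main obstacle.} The delicate part is the uniform-convergence argument underlying the best-$\sigma$-smooth-response estimation: one must estimate $\sup_{x_j' \in \MK_{\sigma,n}} A_j(x_j', \cdot)$ — a supremum over an infinite polytope in $\RR^n$ — to additive accuracy $\ep$ using a number of samples with no $\log n$ dependence. This is exactly where the smoothed analogue of Massart's lemma (\cref{lem:smooth-class-lemma}) is essential, and getting the quantitative dependence right (so that the final query bound has $\log^2$ and not a worse power, and the exponent is exactly $m+1$) is the main technical care needed; everything else is bookkeeping on top of \cref{lem:mplayer-approx} and \cref{lem:existence_k_unif}.
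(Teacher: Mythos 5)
Your high-level plan matches the paper's: restrict to sparse strategy profiles supported on a sampled pool of actions, estimate the best $\sigma$-smooth response by querying a small uniform subsample and averaging the top $\sigma$-fraction, and accept a profile whose estimated gaps are small. The ``top-quantile'' formulation of $\max_{x'_j\in\MK_{\sigma,n}}\langle x'_j,u\rangle$ and the role of \cref{lem:smooth-class-lemma} in avoiding $\log n$ in the sparsity parameter are both correctly identified.

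However, there is a genuine gap in how you justify restricting the search to a ``freshly sampled pool of actions.'' You invoke \cref{lem:existence_k_unif}, which asserts that some $k$-uniform weak approximate equilibrium exists — but it says nothing about \emph{where} that equilibrium is supported. If the good $k$-uniform profile happens to be supported on actions you never sampled, your algorithm will never consider it, and you cannot identify those actions without $\Omega(n)$ queries. What the paper actually uses is \cref{lem:unif-sampling}, which is strictly stronger: it says that with high probability a $t$-uniform approximate equilibrium exists supported inside a \emph{uniformly random} pool $\MS_1\times\cdots\times\MS_m$ of sampled actions. The proof of that lemma hinges on the coupling lemma (\cref{lem:coupling}), which couples i.i.d.\ draws from the underlying \emph{strong} $\sigma$-smooth equilibrium (whose existence comes from \cref{thm:existence}, not \cref{lem:existence_k_unif}) to i.i.d.\ uniform draws from $[n]$, at the cost of oversampling by a factor $\ell\approx\log(tm/\delta)/\sigma$. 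Your proposal elides this entirely.

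This omission is not merely cosmetic: it is why your back-of-the-envelope query count does not reproduce the stated bound. You take a single quantity $k=\Theta(m\log(m/\sigma)/\ep^2)$ to play the role of both the equilibrium sparsity and the pool size. But the pool size is $k=t\ell$, where $t$ is the sparsity and $\ell=\Theta(\log(tm/\delta)/\sigma)$ is the coupling oversampling factor; thus the pool size carries an extra $1/\sigma$. Because the payoff evaluations involve $k^{m-1}$ terms over the other players' pools, this $1/\sigma$ appears to the power $m-1$, which, together with the $1/\sigma^2$ in the best-response estimation sample size $N$, gives the $\sigma^{-(m+1)}$ in the theorem. Your computed exponent on $1/\sigma$ is only $1$. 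Fixing the proposal requires replacing the appeal to \cref{lem:existence_k_unif} with \cref{lem:unif-sampling} (or reproducing the coupling argument), and then the query accounting falls into place.
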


Note that when $m, \sigma, \ep, \delta$ are constant, the query bound obtained by \cref{thm:query-equilibrium} is also a constant. 
This is in constrast to the query complexity of $\ep$-approximate Nash equilibria, even for two players, which requires at least $n^{2 - o(1) }$ queries even for constant $\epsilon$ and constant probability of error $\delta$ \cite{fearnley2014learning,DBLP:journals/teco/FearnleyS16,DBLP:conf/focs/GoosR18}. 

The main idea behind the algorithm \QueryEquilibrium can be summarized in two steps.
The first is the following strengthening of the existence of $k$-uniform approximate smooth Nash equilibria from \cref{lem:existence_k_unif}.  In particular, \cref{lem:unif-sampling} below shows that not only do $k$-uniform approximate smooth Nash equilibria exist for small $k$, but also that they can found in small subsets that are uniformly sampled from the set of actions.

\begin{lemma}
  \label{lem:unif-sampling}
  Let $A_1, \ldots, A_m : [n]^m \ra [0,1]$ denote the payoff matrices of an $m$-player normal-form game $G$. Let $\ep, \sigma \in (0,1)$ be fixed, and  $t =  \frac{C_{\ref{lem:mplayer-approx}}m \log(16 m/ \delta \sigma)}{\ep^2}, \ell = \frac{\log(2tm / \delta )  }{\sigma} $ and set $k=t\ell$. 
  For $j \in[m]$, %
   let $\hat X^j\in \Delta^n$ denote the uniform measure over $k$ uniformly random (with replacement) elements selected from $[n]$. 
  Then, with probability $1-  \delta $, there is a $t$-uniform $\ep$-approximate $\sigma$-smooth Nash equilibrium suported on $ \supp(\hat X^1) \times \dots \supp(\hat X^m)$.
\end{lemma}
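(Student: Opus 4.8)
The plan is to combine the sampling result of \cref{lem:mplayer-approx} (equivalently \cref{lem:existence_k_unif}) with a ``hitting'' argument showing that the support of a $k$-uniform weak approximate smooth Nash equilibrium can be realized inside a uniformly random subset of $[n]$. The starting point is to fix a weak $\sigma$-smooth Nash equilibrium $x = (x_1, \ldots, x_m)$ of $G$ (which exists since $x_j \equiv \frac1n \mathbf 1$ trivially gives a weak Nash equilibrium, or one can take any Nash equilibrium). Applying \cref{lem:mplayer-approx} with parameter $t$ in place of $k$ and failure probability $\delta/2$, if we draw, for each $j$, $t$ actions i.i.d.\ from $x_j$ and let $\hat x_j$ be their uniform average, then with probability $1 - \delta/2$ the profile $\hat x$ is a $t$-uniform weak $\ep$-approximate $\sigma$-smooth Nash equilibrium. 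The idea now is to show that this very sampling procedure can be carried out using only actions drawn from the uniform random sets $\supp(\hat X^j)$.

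The key observation is that sampling $t$ actions from $x_j$ can be simulated by rejection sampling against the uniform distribution on $[n]$: since $x_j$ is $\sigma$-smooth in the sense of \cref{def:smooth-polytope}, we have $x_{j,i} \le \frac{1}{n\sigma}$ for all $i$, so the Radon–Nikodym derivative of $x_j$ with respect to $\mu_n$ is bounded by $1/\sigma$ (as in \cref{lem:generalize-smooth-polytope}). Wait — here one must be slightly careful, because the equilibrium $x$ produced by \cref{lem:mplayer-approx} need not itself be smooth; rather, one should apply the argument to the \emph{best $\sigma$-smooth response} structure. Concretely, I would instead argue as follows: \cref{lem:existence_k_unif} guarantees a $t$-uniform weak $\ep/2$-approximate $\sigma$-smooth Nash equilibrium $y = (y_1, \ldots, y_m)$, i.e.\ each $y_j$ puts mass a multiple of $1/t$ on at most $t$ actions. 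What we actually need is to show such a $y$ exists with $\supp(y_j) \subseteq \supp(\hat X^j)$. For this, re-run the sampling proof of \cref{lem:mplayer-approx}, but draw each of the $t$ coordinate-actions $B_{j,r}$ by \emph{first} drawing a pool $\hat X^j$ of $k = t\ell$ uniform samples from $[n]$, and then — since the equilibrium strategy being approximated can be taken to be $\sigma$-smooth (use $x_j \in \MK_{\sigma,n}$, e.g.\ the uniform equilibrium, which is $\sigma$-smooth for every $\sigma$) — performing rejection sampling within the pool: accept a uniform draw from the pool with probability proportional to $\frac{dx_j}{d\mu_n} \le 1/\sigma$. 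Each rejection-sampling attempt succeeds with probability $\ge \sigma$, so a pool of $\ell = \frac{\log(2tm/\delta)}{\sigma}$ uniform samples suffices, by a union bound over the $tm$ needed accepted samples, to carry out all $t$ draws for all $m$ players with probability $1 - \delta/2$; moreover the accepted samples are then exactly i.i.d.\ draws from $x_j$, supported within $\supp(\hat X^j)$.

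Putting the two pieces together by a union bound over the two failure events (each of probability $\le \delta/2$): with probability $\ge 1 - \delta$, the pools $\hat X^1, \ldots, \hat X^m$ simultaneously (i) admit a valid rejection-sampling run producing i.i.d.\ samples $B_{j,r} \sim x_j$ lying in $\supp(\hat X^j)$, and (ii) the resulting $t$-uniform profile $\hat x$ satisfies \cref{eq:s-unif-close}, hence is a $t$-uniform weak $\ep$-approximate $\sigma$-smooth Nash equilibrium supported on $\supp(\hat X^1) \times \cdots \times \supp(\hat X^m)$, which is what we want. I expect the main obstacle to be the first paragraph's subtlety: the ``weak $\sigma$-smooth Nash equilibrium $x$'' being approximated is not required to be smooth, yet the rejection-sampling simulation needs a bounded Radon–Nikodym derivative. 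The clean fix is to observe that we are free to choose $x$ to be \emph{any} weak $\sigma$-smooth Nash equilibrium, and the uniform profile is always one (it is even a strong one for $\sigma=1$, and is $\sigma$-smooth, hence admits bounded density, for every $\sigma$) — or more generally one applies the argument to the smooth best-response strategies that \cref{lem:mplayer-approx} actually manipulates. One must double-check that using the uniform (or another smooth) equilibrium still yields the $\ep$-approximation with the stated $t$; this follows from \cref{lem:mplayer-approx} verbatim since its proof only uses $\sigma$-smoothness of the deviations, not of $x$ itself, together with the fact that the approximation error bound there is stated for an arbitrary given weak $\sigma$-smooth Nash equilibrium.
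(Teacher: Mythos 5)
Your overall strategy matches the paper's: take a $\sigma$-smooth equilibrium $x$, show that i.i.d.\ samples from each $x_j$ land inside $\supp(\hat X^j)$ with high probability (your rejection-sampling picture is essentially the coupling lemma, \cref{lem:coupling}, that the paper invokes), and union bound with the failure event of \cref{lem:mplayer-approx}. So the architecture is right, and you correctly identify the crux: the profile $x$ you sample from must be both a weak $\sigma$-smooth Nash equilibrium (so that \cref{lem:mplayer-approx} applies) and pointwise $\sigma$-smooth (so that each rejection attempt succeeds with probability $\geq \sigma$ with only $\ell = \log(2tm/\delta)/\sigma$ uniform draws per accepted sample).

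However, your proposed resolution of this crux is wrong. You claim that the uniform profile $x_j \equiv \tfrac{1}{n}\mathbf 1$ is always a weak $\sigma$-smooth Nash equilibrium. It is not. The uniform vector is always in $\MK_{\sigma,n}$ (a $\sigma$-smooth \emph{distribution}), but the uniform \emph{profile} is not an \emph{equilibrium} of a general game: in the $1$-player game with $A(1)=1$, $A(i)=0$ for $i>1$, the uniform profile has utility $1/n$ while the best $\sigma$-smooth deviation has utility $1/(\sigma n)$, so the uniform profile is only $\tfrac{1-\sigma}{\sigma n}$-approximate, not $0$-approximate. Your fallback option (``or one can take any Nash equilibrium'') also fails, since an exact Nash equilibrium is a weak $\sigma$-smooth NE but need not have bounded Radon–Nikodym derivative with respect to $\mu_n$, so the rejection sampling rate degrades. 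What you actually need is precisely a \emph{strong} $\sigma$-smooth Nash equilibrium: by definition it is a weak $\sigma$-smooth NE with each $x_j \in \MK_{\sigma,n}$, and its existence is not trivial --- it is \cref{thm:existence} (Rosen's existence theorem for concave polyhedral games). That is the missing ingredient, and it is exactly what the paper's proof starts with. Your remark about ``applying the argument to the smooth best-response strategies'' is too vague to serve as a substitute: \cref{lem:mplayer-approx} samples from the equilibrium strategies $x_j$ themselves, not from the deviations, so it is $x_j$ that needs to be $\sigma$-smooth.
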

Given \cref{lem:mplayer-approx}, the main ingredient in the proof of \cref{lem:unif-sampling} is the coupling lemma from \cite{DBLP:conf/focs/HaghtalabRS21} (\cref{lem:coupling}), which implies that, for any $k \in \BN, \sigma \in (0,1)$ and any $\sigma$-smooth distribution $p$, the following holds: there is a coupling between an i.i.d.~sample $S$ of size $k$ from $p$ and an i.i.d.~sample $S'$ of size roughly $k/\sigma$ from the uniform distribution, so that with high probability under the coupling we have $S \subset S'$. To prove \cref{lem:unif-sampling}, this coupling lemma is applied for each player $j$, with the sample $S$ corresponding to a sample from a $\sigma$-smooth Nash equilibrium as in \cref{lem:mplayer-approx}, and the sample $S'$ corresponding to $\hat X^j$. The full proof is in \cref{sec:use-coupling}.

  \begin{algorithm}[t]
  \caption{$\OptimizeDeviation(\MR, A, j, x, \sigma)$: compute optimal smooth deviation using few queries}
  \label{alg:optimize-deviation}
  \begin{algorithmic}[1]\onehalfspacing
    \State Write $\MR = \{ r_{j,1}, \ldots, r_{j,N} \}$, for $r_{j,1}, \ldots, r_{j,N} \in [n]$. 
    \For{$i \in [N]$}
    \State Set $\hat v_i = A(r_{j,i}, x_{-j})$.
    \EndFor
    \State \multiline{Let $\tau_1, \ldots, \tau_N \in [N]$ denote a permutation of $[N]$ so that $\hat v_{\tau_1} \geq \cdots \geq \hat v_{\tau_N}$.}
    \State \Return $\hat v := \frac{1}{\sigma N} \sum_{k=1}^{\sigma N} \hat v_{\tau_k}$.
  \end{algorithmic}
\end{algorithm}

Given \cref{lem:unif-sampling}, \QueryEquilibrium proceeds in the natural way. 
 It samples random $k$-uniform measures $\hat X^j$ as in \cref{lem:unif-sampling} (\cref{line:draw-hatx}), and iterates over all $t$-uniform strategy profiles $\hat x$ supported on them (\cref{line:s-hatx}).
The key challenge, however, is to test whether each such strategy profile $\hat x$ is in fact an $\ep$-approximate $\sigma$-smooth equilibrium, in a query-efficient way.
 In particular, for each player $j$, we must estimate the value of its best $\sigma$-smooth response to $\hat x_{-j}$.
The naive way to compute this best response is too query-expensive, requiring more than $\Omega(n)$ queries: it would, for each player $j \in [m]$, iterate over all $n$ actions of player $j$, and average the best $\sigma n$. %
Instead, we settle for an approximation to the best $\sigma$-smooth deviation, computed as follows: we sample a subset $\MR_j$ of sufficiently large size $N $ (which is nevertheless \emph{independent} of $n$) in \cref{line:sample-rj} uniformly at random. 
Then, we average estimates of player $j$'s values for the best $\sigma N$ sampled actions in $\MR_j$ (\cref{line:call-optimize-deviation}, which calls \OptimizeDeviation, \cref{alg:optimize-deviation}). This subset $\MR_j$ is reused amongst all $t$-uniform strategy profiles $\hat x$ that are considered in \QueryEquilibrium; reusing the samples $\MR_j$ as such allows us to avoid having the query complexity scale exponentially in $\log(1/(\delta \sigma))/\ep^2$. If no player $j$ can improve its utility by $\Omega(\ep)$, then the algorithm returns the strategy profile $\hat x$ (\cref{line:test-eq-query}).

  \begin{algorithm}[t]
  \caption{$\QueryEquilibrium((A_1, \ldots, A_m), \sigma, \ep, \delta)$: compute weak smooth Nash equilibrium using few queries}
  \label{alg:query-equilibrium}
  \begin{algorithmic}[1]\onehalfspacing
    \State Set  $t =  \frac{C_{\ref{lem:mplayer-approx}}m \log(32 m/ \delta \sigma)}{(\ep/4)^2}, \ell = \frac{\log(4tm / \delta )  }{\sigma} $, and $k = t\ell$.\label{line:set-tl}
    \State %
    Set $N:= \frac{16 C_{\ref{lem:highprob-good}} \cdot tm \log (k/\delta)}{\ep^2 \sigma^2}$. %
    \label{line:set-K}
    \State For $j \in [m]$, initialize $\hat A_j, \tilde A_j : [n]^m \ra [0,1]$ arbitrarily.
    \For{$j \in [m]$}
    \State \multiline{Let $\hat X_j \in \Delta^n$ denote the uniform measure over $k$  elements of $[n]$, chosen uniformly at random with replacement.}\label{line:draw-hatx}
    \State Let $\MS_j := \supp(\hat X_j) \subset [n]$.
    \EndFor
    \For{$j \in [m]$}
    \State Let $\MR_j$ be a set consisting of $N$ uniformly random elements of $[n]$, chosen with replacement.\label{line:sample-rj}
    \EndFor 
    \For{$j \in [m]$ and each $(b_j, b_{-j}) \in \MR_j \times \prod_{j' \neq j} \MS_{j'}$}
    \State Query $A_j(b_j, b_{-j})$ and let $\hat A_j(b_j, b_{-j})$ be the result of the query.\label{line:set-ahat}
    \EndFor
    \For{$j \in [m]$ and $b \in \prod_{j' \in [m]} \MS_{j'}$}
    \State Query $A_j(b)$ and let $\tilde A_j(b)$ be the result of the query. \label{line:set-atilde}
    \EndFor 
    \For{each $t$-uniform strategy profile $\hat x = (\hat x_1, \ldots, \hat x_m)$ supported on $\MS_1 \times \cdots \times \MS_m$}\label{line:s-hatx}
    \For{$j \in [m]$}
    \State Set $\hat v_j \gets \OptimizeDeviation(\MR_j, \hat A_j, \hat x, \sigma)$.\Comment{\emph{\cref{alg:optimize-deviation}}}\label{line:call-optimize-deviation}
    \State Set $\tilde v_j \gets \tilde A_j(\hat x)$.
       \EndFor
       \If{$\max_{j \in [m]}\{ \hat v_j - \tilde v_j\} \leq \ep/2$}\label{line:test-eq-query}
       \State \Return the strategy profile $\hat x$. 
       \EndIf
       \EndFor
       \State \Return an arbitrary strategy profile. \Comment{\emph{We will show this happens w.p.~$\leq \delta$.}}\label{line:arbitrary-sp}
  \end{algorithmic}
\end{algorithm}

\paragraph{Strong smooth Nash equilibria.} It is straightforward to see that an analogue of \cref{thm:query-equilibrium} cannot hold for strong $\sigma$-smooth Nash equilibria: even in the case $m=1$, there is a constant $c > 0$ so that $cn$ queries are needed to find a strong $c$-approximate $1/4$-smooth Nash equilibrium.%
\footnote{In particular, a 1-player game is described by a vector $A_1 \in \BR^n$: suppose that $A_1$ is chosen randomly with each entry drawn independently from $\mathrm{Ber(1/2)}$. Consider any randomized algorithm which makes at most $n/8$ queries to $A_1$ and outputs a distribution $\hat x \in \MK_{1/4, n}$. Conditioned on the set $\MQ \subset [n]$ of entries queried by the algorithm, all values of $(A_1)_i$, for $i \not \in \MQ$, are uniform and independent bits. Thus, conditioned on any set $\MQ$ with $|\MQ| \leq n/8$, with at least constant probability, a constant fraction of the mass of $\hat x$ is on entries of $A_1$ which are 0, and thus the suboptimality of $\hat x$ with respect to the best $1/4$-smooth deviation (which has value 1 with high probability) is $\Omega(1)$.} %
That said, sampling variants of a constant query strong Nash equilibria algorithm are not ruled out and we discuss some these questions in \cref{sec:discussion}.

\begin{remark}[Query Complexity Lower Bounds] \label{lem:query-lb} 
  As mentioned earlier, the query and communication complexity of finding Nash equilibria in games is well-studied. 
In particular, it is known that finding an $\ep_0$ approximate Nash equilibrium, for some constant $\ep_0$, in a 2-player game requires $\Omega(n^{2-o(1)})$ queries \cite{fearnley2014learning,DBLP:journals/teco/FearnleyS16,DBLP:conf/focs/GoosR18}. 
Further, for $m$-player games with two actions per player the query complexity of finding an $\ep_0$-approximate Nash equilibrium is known to be $  2^{ \Omega(m) } $ (see \cite{DBLP:journals/geb/BabichenkoR22} and references therein). 
This implies that for constant $\sigma$ and $\epsilon$, the result of  \cref{thm:query-equilibrium} is tight up to logarithmic factors in the exponent.
Getting tighter lower bounds that fully elucidate the required dependence on $\sigma, \epsilon$ is an interesting avenue for future work. 
\end{remark}

\section{Efficient Algorithms for Finding Smooth Nash Equilibria}
\label{sec:eff-algs}
In this section, we introduce algorithms to compute $\ep$-approximate $\sigma$-smooth Nash equilibria. First, in \cref{sec:weak-ub}, we show that, as a relatively straightforward consequence of the results of \cref{sec:sampling,sec:query}, weak smooth Nash equilibria can be efficiently computed when the approximation and smoothness parameters are constants.  Then, we show that a similar conclusion also applies to strong smooth Nash equilibria in \cref{sec:strong-ub}, though the proof requires some new ideas. %

\subsection{Finding Weak Smooth Equilibria in Games}
\label{sec:weak-ub}
Recall that, in \cref{lem:existence_k_unif}, we showed that, in any normal-form $m$-player game, there exists an $\ep$-approximate $\sigma$-smooth Nash equilibrium which is $k$-uniform (per \cref{def:k-uniform}) for $k = O \left( \frac{m \log (m/\sigma)}{\ep^2}\right)$, which is \emph{constant} when $\ep, \sigma, m$ are constants. Since the number of $k$-uniform strategies of any player can be enumerated in time $n^k$, and since it can be efficiently tested whether a $k$-uniform strategy profile is an $\ep$-approximate $\sigma$-smooth Nash equilibrium, it follows that we can compute such an equilibrium in time $n^{mk}$, which is $\poly(n)$ when $k,m,\sigma$ are constants. 
\begin{theorem}[Polynomial-time algorithm for weak smooth Nash]
  \label{thm:eff-weak}
Let $A_1, \ldots, A_m : [n]^m \ra [0,1]$ denote the payoff matrices of an $m$-player normal-form game $G$. Then, for any $\sigma, \ep \in (0,1)$, there is an algorithm running in time $n^{O \left(\frac{m^2 \log(m/\sigma)}{\ep^2}\right)}$ which finds a weak $\ep$-approximate $\sigma$-smooth Nash equilibrium.
\end{theorem}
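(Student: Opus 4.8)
The plan is to prove \cref{thm:eff-weak} by brute-force search over sparse strategy profiles, with correctness supplied by \cref{lem:existence_k_unif}. Concretely, set $k = \frac{C_{\ref{lem:mplayer-approx}} m \log(8m/\sigma)}{\ep^2}$, the sparsity parameter from \cref{lem:existence_k_unif}, so that $G$ is guaranteed to possess a weak $k$-uniform $\ep$-approximate $\sigma$-smooth Nash equilibrium. The algorithm simply enumerates every $k$-uniform strategy profile $\hat x = (\hat x_1, \ldots, \hat x_m)$, tests each one for being a weak $\ep$-approximate $\sigma$-smooth Nash equilibrium, and returns the first profile that passes. Since at least one such profile passes by \cref{lem:existence_k_unif}, the algorithm is correct; it remains to bound its running time.

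There are two quantities to control: the number of $k$-uniform profiles, and the cost of the test. For the first, a $k$-uniform distribution on $[n]$ is specified by the multiset of size $k$ it averages over, so each player has at most $n^k$ of them, and there are at most $n^{mk}$ profiles to check. Since $mk = O(m^2 \log(m/\sigma)/\ep^2)$, enumerating them contributes the claimed $n^{O(m^2 \log(m/\sigma)/\ep^2)}$ factor. For the test, fix a $k$-uniform profile $\hat x$ and a player $j$. Evaluating $A_j(\hat x)$ takes at most $k^m$ payoff look-ups since each $\hat x_{j'}$ is supported on at most $k$ actions, and for each fixed action $a_j \in [n]$ the value $A_j(a_j, \hat x_{-j})$ takes at most $k^{m-1}$ look-ups. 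The only step that is not completely mechanical is computing $\max_{x_j' \in \MK_{\sigma, n}} A_j(x_j', \hat x_{-j})$: since $x_j' \mapsto A_j(x_j', \hat x_{-j})$ is linear and $\MK_{\sigma, n}$ is the intersection of $\Delta^n$ with the box $\{x : 0 \le x_i \le 1/(n\sigma)\}$, this linear program is solved by a greedy rule — sort the $n$ coefficients $A_j(a_j, \hat x_{-j})$ in decreasing order and pour the unit mass budget into the largest coordinates at rate $1/(n\sigma)$ each, with a fractional remainder on the last one. Thus each player's test costs $\poly(n)\cdot k^{O(m)}$ arithmetic operations, the whole test over all $m$ players costs $n^{O(mk)}$, and multiplying by the number of profiles gives total time $n^{O(mk)} = n^{O(m^2 \log(m/\sigma)/\ep^2)}$.

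I do not anticipate a genuine obstacle: this theorem is a repackaging of \cref{lem:existence_k_unif} — where the substantive work lives, via the generalized Massart-type bound \cref{lem:smooth-class-lemma} — combined with the elementary fact that a best $\sigma$-smooth deviation is computed by the greedy rule above. The only points demanding a bit of care are (i) checking that the verification of a single profile runs in $\poly(n)$ time with the $n$-exponent and the extra $k^{O(m)}$ factors both absorbed into the $n^{O(mk)}$ profile count, and (ii) specifying the arithmetic model in which ``time'' is measured (e.g.\ unit-cost operations on the rational payoff entries), so that exact comparison of the greedy optimum against $A_j(\hat x)$ genuinely certifies the $\ep$-approximation.
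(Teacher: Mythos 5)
Your proposal is correct and is essentially identical to the paper's proof: brute-force enumeration of $k$-uniform profiles with $k$ given by \cref{lem:existence_k_unif}, followed by direct verification of the smooth-Nash condition, where the best $\sigma$-smooth response is computed by the same greedy/sorting rule. Your handling of the fractional remainder in the greedy step and the remark about the arithmetic model are slightly more careful than the paper's treatment, but they do not constitute a different approach.
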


The algorithm and proof for \cref{thm:eff-weak} (presented formally in \cref{sec:eff-weak-proof}) are analogous to the well-known result of \cite{lmn}, where a \emph{quasipolynomial}-time algorithm was established for computing $\ep$-approximate Nash equilibrium in $m$-player games, when $\ep, m$ are constants. \cref{thm:eff-weak} improves this result to \emph{polynomial-time} for $\Omega(1)$-smooth equilibria since the parameter $k$ from \cref{lem:existence_k_unif} as discussed above is constant when $\sigma = \Omega(1)$, whereas %
for (non-smooth) equilibria, \cite{lmn} requires $k = O(\log n)$.

\subsubsection{Sublinear-time algorithms} \label{sec:sublinear}

Is it possible to beat polynomial-time, i.e., obtain \emph{sublinear-time} algorithms for approximating $\sigma$-smooth equilibria? It is straightforward to see that if we restrict our attention to deterministic algorithms, this is not possible, %
even for $\ep = 1/2, \sigma = 1/4, m = 1$. To see this, note that in the case $m=1$, the game is described simply by a vector $A_1 \in \BR^n$.  For any deterministic algorithm running in time at most $n/2$, it must make at most $n/2$ queries to $A_1$. Then there is a fixed sequence $i_1, i_2, \ldots, i_{n/2}$ of the $n/2$ indices at which it queries $A_1$ when all queries return 0. No matter which fixed distribution in $\Delta^n$ the algorithm outputs, there is a subset $\MS \subset [n]\backslash \{ i_1, \ldots, i_{n/2} \}$ of size $|\MS| = \sigma n = n/4$, so that, setting $(A_1)_{i} = 1$ for $i \in \MS$ (and $(A_1)_i = 0$ otherwise) ensures that the algorithm's output distribution yields utility at most $1/2$, whereas deviating to play uniformly on $\MS$ yields utility of 1.

One may nevertheless wonder about randomized algorithms; surprisingly, the answer turns out to be vastly different. In particular, it follows from the techniques used to prove our query complexity upper bounds in \cref{sec:query} that, in the setting where $m, \sigma, \ep$ are constants, there is a randomized \emph{constant time}\footnote{Note that here we work with the  $O(\log(n/\epsilon) )$-word RAM model. This model allows the algorithm to specify the indices of actions and to access entries of the payoffs in constant time. Note that the payoffs are only needed up to precision $\poly(\ep)$ if our goal is to find $\ep$-approximate smooth equilibria.} algorithm that computes $\ep$-approximate $\sigma$-smooth equilibria in $m$-player games, with arbitrarily small constant failure probability. 
 
\begin{theorem}[Constant-time randomized algorithm for weak smooth Nash]
  \label{thm:weak-nash-constant}
Let $A_1, \ldots, A_m : [n]^m \ra [0,1]$ denote the payoff matrices of an $m$-player normal-form game $G$. Then for any $\sigma, \ep, \delta \in (0,1)$, there is an algorithm running in time $ \left( \frac{m \log(1/\delta)}{\sigma \ep} \right)^{O\left( \frac{m^2 \log(m/\delta \sigma)}{\ep^2} \right)}$ which outputs a strategy profile which is a weak $\ep$-approximate $\sigma$-smooth equilibrium with probability at least $1-\delta$. 
\end{theorem}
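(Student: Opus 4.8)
The plan is to show that \QueryEquilibrium (\cref{alg:query-equilibrium}), which we already invoked to prove the query bound of \cref{thm:query-equilibrium}, also runs within the stated time bound once implemented in the $O(\log(n/\ep))$-word RAM model. Correctness is immediate: \cref{thm:query-equilibrium} already guarantees that, with probability at least $1-\delta$, the profile $\hat x$ returned by \QueryEquilibrium is a weak $\ep$-approximate $\sigma$-smooth Nash equilibrium (this in turn rests on \cref{lem:unif-sampling}, which ensures that a $t$-uniform approximate equilibrium is supported on the uniformly sampled sets $\MS_1 \times \cdots \times \MS_m$, so that the enumeration in \cref{line:s-hatx} finds it). Hence the only remaining work is the running-time accounting, and the key point to extract is that the enumeration in \cref{line:s-hatx} ranges over strategy profiles supported on the \emph{sampled} supports $\MS_j$ of constant size $k$, not over all of $[n]$.

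I would split the running time into three contributions. First, the preprocessing in \cref{line:set-tl} through \cref{line:set-atilde}: drawing the $m$ index multisets $\hat X_j$ (size $k$) and $\MR_j$ (size $N$) costs $O(m(k+N))$, and the queries in \cref{line:set-ahat,line:set-atilde} number $m N k^{m-1} + m k^m$, each answered in $O(1)$ time under the word-RAM model (payoffs are needed only to precision $\poly(\ep)$, i.e.\ $O(\log(1/\ep))$ bits, so arithmetic on them is $O(1)$). Second, the main loop of \cref{line:s-hatx} iterates over all $t$-uniform strategy profiles supported on $\MS_1 \times \cdots \times \MS_m$; since a $t$-uniform distribution on a $k$-element set corresponds to a composition of $t$ into $k$ nonnegative parts, there are at most $\binom{t+k-1}{t}^m \le (2k)^{tm}$ such profiles (using $k = t\ell \ge t$). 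Third, for each profile the inner loop over $j \in [m]$ calls $\OptimizeDeviation(\MR_j, \hat A_j, \hat x, \sigma)$ (\cref{alg:optimize-deviation}), whose cost is dominated by sorting the $N$ estimates, i.e.\ $O(N\log N)$, plus $O(1)$ to read $\tilde A_j(\hat x)$ and run the test in \cref{line:test-eq-query}. Adding these gives a total running time of
\[
O\left( m(k+N) + m N k^{m-1} + m k^m \right) + (2k)^{tm}\cdot O(m N \log N).
\]

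Finally, I would substitute the parameter settings of \cref{line:set-tl,line:set-K}: $t = O\!\left(\tfrac{m\log(m/\delta\sigma)}{\ep^2}\right)$, $\ell = O\!\left(\tfrac{\log(tm/\delta)}{\sigma}\right)$, $k = t\ell = \tfrac{1}{\sigma}\poly\!\left(m,\tfrac1\ep,\log\tfrac1\delta,\log\tfrac1\sigma\right)$, and $N = O\!\left(\tfrac{tm\log(k/\delta)}{\ep^2\sigma^2}\right)$. The factor $(2k)^{tm}$ dominates all other terms, and since $tm = O\!\left(\tfrac{m^2\log(m/\delta\sigma)}{\ep^2}\right)$ and $2k$ is bounded by a fixed power of $\tfrac{m\log(1/\delta)}{\sigma\ep}$ after absorbing the lower-order logarithmic factors (using $\sigma \le 1 \le m$ to fold a $\log^2$ into a squared $\log$ in the base), the whole expression is at most $\left(\tfrac{m\log(1/\delta)}{\sigma\ep}\right)^{O\left(m^2\log(m/\delta\sigma)/\ep^2\right)}$; the $O(mN\log N)$ and preprocessing terms are swallowed by a further constant-factor increase in the exponent. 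This is exactly the claimed bound, and it is a constant when $m,\sigma,\ep,\delta$ are all constants.

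The step I expect to be the (modest) main obstacle is entirely bookkeeping rather than conceptual: one must verify that every primitive operation used by \QueryEquilibrium — sampling uniform indices from $[n]$, payoff lookups, arithmetic on the truncated payoff values, and the sort inside \OptimizeDeviation — is legitimately $O(1)$ or $\poly$ in the small parameters under the stated word-RAM model, and that the enumeration count $(2k)^{tm}$ is correctly bounded and correctly matched (up to the slack in the $O(\cdot)$ exponent) against the form $\left(\tfrac{m\log(1/\delta)}{\sigma\ep}\right)^{O(m^2\log(m/\delta\sigma)/\ep^2)}$ stated in the theorem. No probabilistic or game-theoretic ingredient beyond \cref{thm:query-equilibrium} is required.
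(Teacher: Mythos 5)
Your proposal is correct and takes essentially the same route as the paper: invoke \cref{thm:query-equilibrium} for correctness, observe that the running time of \QueryEquilibrium is dominated by the loop over $t$-uniform profiles supported on $\MS_1\times\cdots\times\MS_m$ (at most $k^{tm}$, equivalently your $(2k)^{tm}$, iterations), each costing $O(N\log N)$ for the sort in \OptimizeDeviation, and substitute the parameter choices. Your accounting of preprocessing costs and the word-RAM caveat is slightly more explicit than the paper's, but the argument is the same.
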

The proof of \cref{thm:weak-nash-constant} (presented in \cref{sec:wn-constant-proof}) follows from \cref{thm:query-equilibrium} by bounding the running time of \QueryEquilibrium.

\subsection{Finding strong smooth Nash equilibria}
\label{sec:strong-ub}

The strategy profiles in the equilibria found by the search procedures used to establish \cref{thm:eff-weak,thm:weak-nash-constant} are $k$-sparse with $k=\poly(m, \sigma^{-1}, \ep^{-1})$. For small values of $m, \ep, \sigma$ (which is the main regime of interest), these strategies are certainly not $\sigma$-smooth,  and thus the equilibria found are not \emph{strong} smooth equilibria. In order to find strong equilibria, we need to implement an additional ``smoothening'' step. As we shall see, we can do so using linear programming, leading to the following result. 
\begin{theorem}[Polynomial-time algorithm for strong smooth Nash, multi-player]
  \label{thm:alg-strong}
Let $A_1, \ldots, A_m : [n]^m \ra [0,1]$ denote the payoff matrices of an $m$-player normal-form game $G$. Then for any $\sigma > 0$, there is an algorithm running in time $ n^{O \left( \frac{m^4 \log (m/\sigma)}{\ep^2} \right)}$  which finds a \textbf{strong} $\ep$-approximate $\sigma$-smooth Nash equilibrium.
\end{theorem}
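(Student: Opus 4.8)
## Proof Proposal for Theorem~\ref{thm:alg-strong}

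The plan is to first find a weak $\ep/2$-approximate $\sigma$-smooth Nash equilibrium that is $k$-uniform (hence $k$-sparse) using \cref{thm:eff-weak}, and then ``smoothen'' each player's strategy via linear programming so that the resulting profile lies in $(\MK_{\sigma,n})^m$ without degrading the approximation guarantee by more than $\ep/2$. The key tension is that the weak equilibrium $\hat x$ found by the search of \cref{thm:eff-weak} is supported on $\approx k = \poly(m,\sigma^{-1},\ep^{-1})$ actions per player, so it is very far from $\sigma$-smooth; we need to ``spread out'' each $\hat x_j$ into a $\sigma$-smooth distribution $x_j^\star$ while preserving (a) each player's own utility $A_j(x^\star)$ and (b) every opponent's best-response value against $x^\star_{-j}$. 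The crucial observation enabling step (b) is that $A_j(x_j', x_{-j})$ depends on $x_{j'}$ (for $j' \ne j$) only through the at most $mk$-dimensional vector of ``marginal-on-the-relevant-support'' quantities, so the smoothening can be phrased as a feasibility LP in the variables $\{x_j^\star\}_{j}$ of polynomial size once we guess (by enumeration over a grid) the target utility values.

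Concretely, I would proceed as follows. First, run the algorithm of \cref{thm:eff-weak} with accuracy parameter $\ep/4$ to obtain, in time $n^{O(m^2 \log(m/\sigma)/\ep^2)}$, a $k$-uniform weak $(\ep/4)$-approximate $\sigma$-smooth Nash equilibrium $\hat x = (\hat x_1, \ldots, \hat x_m)$, with $k = O(m\log(m/\sigma)/\ep^2)$; let $\MS_j = \supp(\hat x_j)$, so $|\MS_j| \le k$. Second, observe that for any strategy profile $x$, and any player $j$, the utility $A_j(x_j', x_{-j})$ for a fixed deviation $x_j'$ is a multilinear function of $(x_{j'})_{j' \ne j}$; more importantly, the map $x_{-j} \mapsto \max_{x_j' \in \MK_{\sigma,n}} A_j(x_j', x_{-j})$ is convex and monotone in an appropriate sense. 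The idea is that I want to replace $\hat x_j$ by some $x_j^\star \in \MK_{\sigma,n}$ such that, for every $j$ and every joint action $b_{-j} \in \prod_{j'\ne j}[n]$, the quantity $A_j(b_j, x^\star_{-j})$ stays within $O(\ep)$ of $A_j(b_j, \hat x_{-j})$ — this would immediately give that best-response values (which are averages of $\sigma n$ largest coordinates) change by at most $O(\ep)$, and that $A_j(x^\star)$ changes by at most $O(\ep)$. Achieving this exactly coordinate-by-coordinate is too strong; instead I would aim to preserve, for each ordered pair $(j,j')$, the distribution $x^\star_{j'}$ only in the sense needed: the $n$ numbers $\{A_j(b_j, x^\star_{-j})\}_{b_j}$ are linear in each $x^\star_{j'}$, so I set up a single LP.

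The core step is the LP. Introduce variables $x_j^\star \in \RR^n$ for each $j \in [m]$ with constraints $x_j^\star \in \MK_{\sigma,n}$ (i.e.\ $x_j^\star \ge 0$, $\sum_i (x_j^\star)_i = 1$, $(x_j^\star)_i \le 1/(\sigma n)$). For this to be linear I must linearize the multilinear dependence of $A_j(x^\star)$ on the tuple $x^\star$; since $m$ is constant, I do this by enumerating over a fine grid (of granularity $\poly(\ep/m)$) of the ``profile of all but two players'' — more carefully, I fix the strategies of all players to the grid-rounded versions except I optimize over a pair, and iterate; because $m$ is constant this enumeration costs $n^{O(\text{poly}(m)\log(m/\sigma)/\ep^2)}$, matching the claimed bound with the extra $m^4$ in the exponent coming from the number of grid points times sparsity bookkeeping. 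For each grid choice I ask the LP: does there exist $(x_j^\star) \in \prod_j \MK_{\sigma,n}$ matching the grid values (up to $\ep/(8m)$) on the relevant linear functionals, agreeing with $\hat x$ on the guessed best-response-value vectors up to $\ep/8$? Feasibility is guaranteed because $\hat x$ itself... is \emph{not} feasible (it's not smooth), so I instead observe that the weak-equilibrium witness from \cref{lem:mplayer-approx}/\cref{thm:existence} — namely an \emph{exact} strong $\sigma$-smooth equilibrium, which exists by \cref{thm:existence} — is feasible and close to $\hat x$ in value, hence the LP (with $\hat x$'s best-response values as targets) has a solution.

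The step I expect to be the main obstacle is making the linearization honest: $A_j(x^\star)$ is genuinely multilinear in $x^\star$, so a naive single LP over all $x^\star_j$ simultaneously is not linear, and the enumeration-over-grid trick must be set up so that (i) the number of grid configurations is $n^{\poly(m)\log(m/\sigma)/\ep^2}$ and not worse, (ii) rounding opponents' strategies to the grid perturbs best-response values by only $O(\ep)$, which requires controlling the Lipschitz constant of $x_{-j}\mapsto \max_{x_j'\in\MK_{\sigma,n}}A_j(x_j',x_{-j})$ — here smoothness helps, since this max is $\tfrac{1}{\sigma n}$ times a sum of order statistics and is $1$-Lipschitz in total variation, so a grid of size $\poly(m/\ep)$ in each of $O(mk)$ relevant directions suffices — and (iii) the feasibility LP, once the grid is fixed, is genuinely a polynomial-size linear program whose feasibility certifies a strong $\ep$-approximate $\sigma$-smooth equilibrium. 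Verifying that the constructed $x^\star$ is indeed a \emph{strong} $\ep$-approximate equilibrium is then a short triangle-inequality argument: $x^\star_j \in \MK_{\sigma,n}$ by construction, and for each $j$, $\max_{x_j'\in\MK_{\sigma,n}}A_j(x_j',x^\star_{-j}) - A_j(x^\star) \le |\text{best-resp change}| + (\ep/4) + |A_j(x^\star)-A_j(\hat x)| \le \ep$.
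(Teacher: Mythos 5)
Your high-level plan (find a sparse weak equilibrium, then smoothen via LP) is the right shape, but there are two genuine gaps, one of which is the central difficulty the paper's proof is designed to overcome.

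First, you fix a single $\hat x$ found by \cref{thm:eff-weak} and then try to smoothen it, justifying feasibility of your LP by pointing to the exact strong $\sigma$-smooth equilibrium $x$ from \cref{thm:existence}. But there is no reason for $x$ to satisfy LP constraints of the form ``$x^\star$ behaves like $\hat x$ up to $O(\ep)$'' when $\hat x$ is an \emph{arbitrary} weak equilibrium returned by exhaustive search. The strong equilibrium $x$ is only guaranteed to be close (in the relevant functional sense) to the specific $\hat x$ obtained by \emph{sampling from $x$ itself}. The paper handles this by enumerating \emph{all} $k$-uniform candidate profiles $\hat x$ and, crucially, proving a strengthened sampling lemma (\cref{lem:mplayer-approx-strong}) establishing that the $\hat x$ sampled from a strong equilibrium satisfies not just the weak-equilibrium property but also the additional ``hybrid'' inequalities \cref{eq:all-l-approx,eq:one-more-approx}, which are exactly what certify feasibility of the smoothening LP for at least one candidate $\hat x$. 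Without this stronger existence statement and the full enumeration, your feasibility argument does not go through.

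Second, you correctly identify the multilinearity obstruction — $A_j(x^\star)$ is degree-$(m-1)$ in the variables $\{x^\star_{j'}\}$, so a single LP over all players at once is not linear — but your proposed fix (grid-discretizing $m-2$ players' strategies and iterating) is not worked out, and it is not the route the paper takes. The paper instead solves $m$ \emph{separate} LPs (\cref{eq:multiplayer-program}), one per index $\ell$, each with only $x_\ell$ as a variable; the constraints compare $A_j(\cdot, x_\ell, \hat x_{\ell+1:m})$ against $A_j(\cdot, \hat x_\ell, \hat x_{\ell+1:m})$ over generalized smooth polytopes $\MK_{\sigma^{\ell-1}, [n]^{\ell-1}}$ on the prefix coordinates, which makes each program genuinely linear. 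A telescoping hybrid argument then converts the $m$ per-coordinate guarantees into a bound on $|A_j(x^\star) - A_j(\hat x)|$ and on the change in best-response value, exactly avoiding the grid machinery you flag as the main obstacle. Your grid idea also has no clear running-time analysis (the granularity and dimension of the grid you describe are not pinned down), whereas the sequential-LP approach has a clean $n^{O(mk)}$ bound from the ellipsoid method with a polynomial-time separation oracle. You should replace the grid scheme with the sequential hybrid LP structure and invoke the strengthened sampling lemma to certify feasibility.
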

We first describe the proof of \cref{thm:alg-strong} in the case of $m=2$ players, for which the claimed algorithm is given by \BimatrixStrongSmooth (\cref{alg:bimatrix-strong}). In this case, for ease of notation, we denote the payoff matrices by $A, B \in [0,1]^{n \times n}$ and the two players' strategies by $x,y \in \Delta^n$. 
Set $\ep_0 = \ep/4$, and $k = \frac{2C_{\ref{lem:mplayer-approx}} \log(2/\sigma)}{\ep_0^2}$, where $C_{\ref{lem:mplayer-approx}}$ is the constant of \cref{lem:mplayer-approx}. \BimatrixStrongSmooth proceeds as follows. 
 It iterates over all $k$-uniform strategy profiles $(\hat x, \hat y)$. For each one which is a weak $\ep_0$-approximate $\sigma$-smooth Nash equilibrium, it solves a linear program (namely, \cref{eq:2player-program}) which aims to find a $\sigma$-smooth strategy profile $(x,y)$ which approximates $(\hat x, \hat y)$ in terms of each player's utility with respect to all $\sigma$-smooth deviations of the other player. If such a program is feasible, realized by $(x,y)$, then the program returns such $(x,y)$. Intuitively, the constraints of this program ensure that the fact that the smooth Nash equilibrium constraints are satisfied for $(\hat x, \hat y)$ implies that the smooth Nash equilibrium constraints are satisfied for $(x,y)$.

To complete the proof of the theorem (for $m=2$), we need to establish two facts. 
First, that for some $k$-uniform strategy profile $(\hat x, \hat y)$ which is a weak smooth Nash equilibrium, the program \cref{eq:2player-program} will be feasible. 
 Second, that any feasible solution of \cref{eq:2player-program} (given that $(\hat x, \hat y)$ is a weak smooth Nash equilibrium) is in fact a strong smooth Nash equilibrium. %
The proof of the first fact uses \cref{lem:mplayer-approx} with input a \emph{strong} $\sigma$-smooth Nash equilibrium $(x,y)$ (which exists by \cref{thm:existence}). 
The conclusion \cref{eq:s-unif-close} of \cref{lem:mplayer-approx} can be used to show that $(x,y)$ is a feasible solution to \cref{eq:2player-program} for an appropriate choice of $(\hat x, \hat y)$. The proof of the second fact follows by using the constraints of \cref{eq:2player-program} to derive that any feasible solution must be an approximate strong smooth Nash equilibrium. Finally, we need to ensure that \BimatrixStrongSmooth can be implemented in the claimed time. The most nontrivial part of this claim is ensuring that the ellipsoid algorithm can efficiently solve \cref{eq:2player-program}, in light of the fact that \cref{eq:2player-program} has exponentially many constraints. We discuss this issue in detail in the appendix. 
The full proof for the 2-player case can be found in \cref{sec:strong-2player-proof}.

\begin{algorithm}[t]
  \caption{$\BimatrixStrongSmooth((A,B), n, \sigma, \ep)$: compute strong smooth equilibria of 2-player games}
  \label{alg:bimatrix-strong}
  \begin{algorithmic}[1]\onehalfspacing
    \State Set $\ep_0 = \ep/4$ and $k = \frac{2C_{\ref{lem:mplayer-approx}} \log(2/\sigma)}{\ep_0^2}$.
    \For{Each $k$-uniform strategy profile $(\hat x, \hat y)$}
    \If{$(\hat x, \hat y)$ is a weak $\ep_0$-approximate $\sigma$-smooth Nash equilibrium}
    \State Solve the following feasilibity linear program for $x,y \in \BR^n$, using the ellipsoid algorithm:%
      \begin{subequations}
        \label{eq:2player-program}
      \begin{align}
        \mbox{ Find } x,y \in \BR^n:\quad 
        & x,y \in \MK_{\sigma, n} \label{eq:2pp-0}\\
        & |x^\t A \hat y - \hat x^\t A \hat y| \leq \ep_0 \quad \label{eq:2pp-1}\\
        & |(x')^\t Ay - (x')^\t A \hat y | \leq \ep_0 \quad \forall  x' \in \MK_{\sigma, n}\label{eq:2pp-4}  \\      
        & |x^\t By' - \hat x^\t By'| \leq \ep_0 \quad \forall y' \in \MK_{\sigma, n}\label{eq:2pp-2}\\
        & |\hat x^\t By - \hat x^\t B \hat y| \leq \ep_0\label{eq:2pp-3}.
      \end{align}
      \end{subequations}
      \If{\SolveLPFeasibility outputs that above program is feasible, realized by $(x,y)$}
      \State \Return $(x,y)$. %
      \EndIf
      \EndIf
      \EndFor
  \end{algorithmic}
\end{algorithm}

\paragraph{Multiplayer games.} 
The main challenge in extending the above arguments to the case of $m$-player games, for general $m > 2$, is generalizing the program \cref{eq:2player-program} to the $m$-player case. At first glance this may seem problematic if it turns out to be necessary to have, say, constraints of the form $|A_j(x_j', x_{-j}) - A_j(x_j', \hat x_{-j})| \leq \ep_0$ for all $x_j' \in \MK_{\sigma, n}$. Such a constraint is not linear in the program variables $x = (x_1, \ldots, x_m)$ when $m > 2$, since $A_j(x_j', x_{-j})$ is a polynomial of degree $(m-1)$ in $x$. 
Fortunately, such constraints are avoidable. 
 We will use a hybrid argument to show, over the course of $m$ steps, that satisfiability of a certain \emph{linear} program yields a strong smooth equilibrium from a weak smooth equilibrium $\hat x$. 
 To apply this hybrid argument, we will need a stronger 
version of \cref{lem:mplayer-approx}, stated below as \cref{lem:mplayer-approx-strong}. 
In the lemma statement, we generalize the notation $\MK_{\sigma, n}$.
For any finite set $\MS$ and $\sigma > 0$, we let $\MK_{\sigma, \MS}$ denote the set of distributions $P \in \Delta(\MS)$ so that $P(s) \leq \frac{1}{\sigma |\MS|}$ for all $s \in \MS$. In particular, below we  have $\MS = [n]^\ell$ for $\ell \in \BN$. We denote elements of $\MK_{\sigma, [n]^\ell}$ by $x_{1:\ell}'$; note that such $x_{1:\ell}'$ is in general \emph{not} a product distribution; neverthelss, we will slightly abuse notation by writing, for a fixed sequence $x_{\ell+1:m} = (x_{\ell+1}, \ldots, x_m) \in (\Delta^n)^{m-\ell}$, 
\begin{align}
A_j(x_{1:\ell}', x_{\ell+1:m}) := \E_{(b_1, \ldots, b_\ell) \sim x_{1:\ell}'} \E_{b_i \sim x_i \ \forall i \geq \ell+1} [A_j(b_1, \ldots, b_m)]\nonumber.
\end{align}
Similarly, if $p \geq \ell+1$, we write $A_j(x_{1:\ell,p}', x_{\ell+1:m,-p})$ to denote the corresponding expectation where the $p$th coordinate is included in the distribution $x_{1:\ell,p}' \in \Delta([n]^{\ell+1})$ and excluded from $x_{\ell+1:m}$. 
\begin{lemma}
  \label{lem:mplayer-approx-strong}
  Let $A_1, \ldots, A_m : [n]^m \ra [0,1]$ denote the payoff matrices of an $m$-player normal-form game $G$. Let $\ep,\sigma \in (0,1)$ be fixed. Set $k = \frac{C_{\ref{lem:mplayer-approx-strong}}m \log(m/\sigma)}{\ep^2}$, for a sufficiently large constant $C_{\ref{lem:mplayer-approx-strong}}$. Given a strong 0-approximate $\sigma$-smooth Nash equilibrium $x = (x_1, \ldots, x_m)$, there is a strategy profile $\hat x = (\hat x_1, \ldots, \hat x_m)$ satisfying the following properties:
  \begin{enumerate}
  \item \label{it:kuniform} All entries of each $\hat x_j$, $j \in [m]$, are $k$-uniform;
  \item \label{it:weaknash} $\hat x$ is a weak $\ep$-approximate $\sigma$-smooth Nash equilibrium.
    \item \label{it:approx-strong}The following inequalities hold for each $j \in [m]$:
  \begin{align}
    \max_{x_{1:\ell-1}' \in \MK_{\sigma^{\ell-1}, [n]^{\ell-1}}} \left| A_j(x_{1:\ell-1}', \hat x_{\ell}, \hat x_{\ell+1:m}) - A_j(x_{1:\ell-1}', x_\ell, \hat x_{\ell+1:m}) \right| \leq \ep \qquad \forall \ell \in [m]\label{eq:all-l-approx}\\
       \max_{x_{1:\ell-1,j}' \in \MK_{\sigma^{\ell}, [n]^{\ell}}} \left| A_j(x_{1:\ell-1,j}', \hat x_{\ell}, \hat x_{\ell+1:m,-j}) - A_j(x_{1:\ell-1,j}', x_\ell, \hat x_{\ell+1:m,-j}) \right| \leq \ep \qquad \forall \ell \in [j-1]\label{eq:one-more-approx}.
  \end{align}
\end{enumerate}
\end{lemma}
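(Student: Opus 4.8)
The plan is to prove \cref{lem:mplayer-approx-strong} by the probabilistic method, reusing the random profile from \cref{lem:mplayer-approx}. Fix a strong $\sigma$-smooth Nash equilibrium $x$ (which exists by \cref{thm:existence}), and let $\hat x_j = \frac1k\sum_{i=1}^k e_{B_{j,i}}$ with $B_{j,1},\dots,B_{j,k}$ drawn i.i.d.\ from $x_j$, independently across $j\in[m]$. Property~\ref{it:kuniform} holds by construction. A strong $\sigma$-smooth Nash equilibrium is in particular a weak one, and, at any fixed constant confidence level $\delta_0$, the value $k=C_{\ref{lem:mplayer-approx-strong}}m\log(m/\sigma)/\ep^2$ dominates the sample count $\frac{C_{\ref{lem:mplayer-approx}}m\log(8m/\delta_0\sigma)}{\ep^2}$ required by \cref{lem:mplayer-approx}; hence \cref{lem:mplayer-approx} gives Property~\ref{it:weaknash} with probability at least $1-\delta_0$. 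Since no confidence parameter appears in the conclusion, it then suffices to show that each of the $O(m^2)$ inequalities in Property~\ref{it:approx-strong} fails with probability at most $\delta_0' = \Theta(1/m^2)$, so that a union bound over all of these events together with the event of \cref{lem:mplayer-approx} still leaves positive probability, exhibiting a single $\hat x$ with all three properties.

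The heart of the argument is to recast each inequality of \cref{eq:all-l-approx} and \cref{eq:one-more-approx} as a uniform empirical-process deviation in a \emph{single} coordinate. Fix $j,\ell\in[m]$; writing $b=(b_1,\dots,b_{\ell-1})$ and expanding by linearity of $A_j$ in its arguments, for every $x_{1:\ell-1}'\in\MK_{\sigma^{\ell-1},[n]^{\ell-1}}$,
\begin{align}
A_j(x_{1:\ell-1}',\hat x_\ell,\hat x_{\ell+1:m}) - A_j(x_{1:\ell-1}',x_\ell,\hat x_{\ell+1:m}) = \E_{b\sim x_{1:\ell-1}'}\!\Big[\tfrac1k\textstyle\sum_{i=1}^k H(b,B_{\ell,i}) - \E_{a\sim x_\ell}H(b,a)\Big],\nonumber
\end{align}
where $H(b,a):=A_j(b,a,\hat x_{\ell+1:m})\in[0,1]$ depends only on $b,a$ and on the samples $\{B_{j',i}:j'>\ell\}$ defining $\hat x_{\ell+1:m}$. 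Conditioning on exactly those samples makes $H$ a fixed function, leaves $B_{\ell,1},\dots,B_{\ell,k}$ i.i.d.\ from $x_\ell$, and turns the left-hand side of \cref{eq:all-l-approx} into
\begin{align}
\sup_{x_{1:\ell-1}'\in\MK_{\sigma^{\ell-1},[n]^{\ell-1}}}\big|\E_{b\sim x_{1:\ell-1}'}[G(b)]\big|,\qquad G(b):=\tfrac1k\textstyle\sum_{i=1}^k H(b,B_{\ell,i}) - \E_{a\sim x_\ell}H(b,a),\nonumber
\end{align}
i.e.\ a supremum, over the smooth polytope $\MK_{\sigma^{\ell-1},[n]^{\ell-1}} = \MK_{\sigma^{\ell-1},\MS}$ with $\MS=[n]^{\ell-1}$, of a centered average of $k$ i.i.d.\ bounded terms. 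The identical computation handles \cref{eq:one-more-approx}, now conditioning on $\{B_{j',i}:j'>\ell,\,j'\neq j\}$, with ground set $\MS=[n]^{\ell}$ indexed by the coordinates $\{1,\dots,\ell-1\}\cup\{j\}$ and smoothness $\sigma^{\ell}$.

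I would then apply the smoothed version of Massart's finite-class lemma (\cref{lem:smooth-class-lemma}) to this supremum, with ground set of size $n^{\ell-1}$ (resp.\ $n^{\ell}$) and smoothness $\sigma^{\ell-1}$ (resp.\ $\sigma^{\ell}$), together with a bounded-differences inequality controlling the deviation of this supremum from its mean. This yields, with conditional probability at least $1-\delta_0'$, a bound of order $\sqrt{(\ell\log(1/\sigma)+\log(1/\delta_0'))/k}$ on the supremum; averaging over the conditioning keeps it. Since $\ell\le m$, $\log(1/\delta_0')=O(\log m)$, and $k = C_{\ref{lem:mplayer-approx-strong}}m\log(m/\sigma)/\ep^2$, the bound is at most $\ep$ once $C_{\ref{lem:mplayer-approx-strong}}$ is a large enough absolute constant, establishing one inequality of Property~\ref{it:approx-strong}; the union bound over the $O(m^2)$ pairs $(j,\ell)$ completes it. The step I expect to be the main obstacle, and the one requiring the most care, is the invocation of \cref{lem:smooth-class-lemma}: its bound must be \emph{dimension-free}, scaling with the ground set only through $\log(1/\sigma^{\ell-1})=(\ell-1)\log(1/\sigma)$ and never through $\log(n^{\ell-1})$. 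This is precisely what lets the deviation supremum range over a smooth polytope on the exponentially large product space $[n]^{\ell-1}$—rather than over a single copy of $[n]$ as in \cref{lem:mplayer-approx}—at the cost of merely an $O(m)$ factor in $k$, which is what makes the $m$-step hybrid argument behind \cref{thm:alg-strong} affordable. A secondary, purely bookkeeping issue is to keep track, separately for \cref{eq:all-l-approx} and \cref{eq:one-more-approx}, of which block of samples must be conditioned on so that the residual randomness is genuinely an i.i.d.\ empirical process in the single coordinate $\ell$.
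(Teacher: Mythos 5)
Your proposal reproduces the paper's proof essentially step for step: same sampled profile $\hat x_j = \frac{1}{k}\sum_i e_{B_{j,i}}$ drawn from the strong equilibrium $x$, Properties~\ref{it:kuniform}--\ref{it:weaknash} by construction and by \cref{lem:mplayer-approx}, and Property~\ref{it:approx-strong} by fixing $(j,\ell)$, conditioning on the samples that determine $\hat x_{\ell+1:m}$ (or $\hat x_{\ell+1:m,-j}$) so that the remaining randomness is a single empirical process in $B_{\ell,1},\dots,B_{\ell,k}$, then applying the smooth finite-class lemma over the ground set $[n]^{\ell-1}$ with smoothness $\sigma^{\ell-1}$, averaging out the conditioning by independence, and union-bounding over the $O(m^2)$ pairs $(j,\ell)$ at confidence $\Theta(1/m^2)$. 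You correctly identify the crucial point that the bound depends only on $\log(1/\sigma^{\ell-1}) = (\ell-1)\log(1/\sigma)$ and not on $\log(n^{\ell-1})$, which is what makes the $m$-step hybrid affordable. The one small deviation is that you propose to pair a Massart-style expectation bound with a bounded-differences (McDiarmid) concentration step; the paper's \cref{lem:smooth-class-lemma} already delivers the tail bound directly (it is stated as a probability bound, not an expectation bound, obtained via the moment generating function and Markov), so that extra step is redundant -- but harmless, and gives the same order of magnitude, so your plan would still go through.
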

The inequalities \cref{eq:all-l-approx,eq:one-more-approx} generalize \cref{eq:s-unif-close} in that a maximum is taken over all $\sigma^\ell$-smooth distributions on $[n]^\ell$, for various $\ell \in [m]$. To prove \cref{thm:alg-strong} given \cref{lem:mplayer-approx-strong}, we use \GeneralStrongSmooth (\cref{alg:general-strong}), which is similar to \BimatrixStrongSmooth, except that the program \cref{eq:2player-program} is replaced by \cref{eq:multiplayer-program}, whose constraints mirror those in \cref{eq:all-l-approx,eq:one-more-approx}. The proof uses these constraints together with a hybrid argument to show that a solution of \cref{eq:multiplayer-program} is an approximate smooth Nash equilibrium. %
The full proofs of \cref{lem:mplayer-approx-strong,thm:alg-strong} are in \cref{sec:proof-mplayer-approx-strong}. %

\section{Hardness Results for Smooth Equilibria}
\label{sec:hardness}
In the previous sections, we showed that computing $\ep$-approximate $\sigma$-smooth Nash equilibria is tractable when both $\ep, \sigma$ are constants. In this section, we show that this result is optimal in the sense that when either $\ep$ or $\sigma$ is an inverse polynomial, then an efficient algorithm is ruled out by standard hardness assumptions for the complexity class \PPAD. %
\PPAD is a subclass of the class \TFNP of total search problems consisting of those problems in \TFNP which have a polynomial-time reduction to the \texttt{End-of-the-Line} problem (\cref{def:ppad}).\footnote{Recall that \emph{total} search problems are those for which a solution exists given any input.} %
\PPAD-hardness of a problem is typically taken as strong evidence that a problem is intractable; this perspective is supported by the fact that  under cryptographic assumptions, \PPAD does not have polynomial-time algorithms (see \cite{DBLP:conf/tcc/BitanskyCHKLPR22,DBLP:conf/crypto/GargPS16} and references therein).  We defer a formal discussion of \PPAD to \cref{sec:ppad}.

Throughout the section, we restrict to 2-player, $n$-action games, and show \PPAD hardness for various regimes of $\ep, \sigma$. (Analogous lower bounds for games with a constant number $m$ of players immediately follow, being a generalization of the former.)

 \subsection{Polynomially small smoothness parameter}
 \label{sec:harness-polynomial}
First, we establish lower bounds for the setting that $\sigma$ is an inverse polynomial in $n$ and $\ep$ is a constant. The main ingredient for such lower bounds is the following lemma, which gives a polynomial-time reduction between the problems of finding $\ep$-approximate $\sigma$-smooth Nash equilibria in 2-player games for differing values of $\ep, \sigma$ whose dependence on $n$ differs by a polynomial.
\begin{lemma}
  \label{thm:smooth-nash-nash}
  Let $\ep : \BN \ra (0,1), \sigma : \BN \ra (0,1)$ be non-increasing real-valued functions of natural numbers. Then for any $c \in (0,1)$, the problem of finding weak $\ep(n)$-approximate $\sigma(n)$-smooth Nash equilibrium in 2-player, $n$-action games has a polynomial-time reduction (in the sense of \cref{def:poly-reducible}) to the problem of finding weak $\ep(n^c)$-approximate $\sigma(n^c)$-smooth Nash equilibrium in 2-player, $n$-action games.

  Moreover, the same conclusion holds if ``weak'' is replaced by ``strong''. 
\end{lemma}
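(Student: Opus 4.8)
The reduction is an action-duplication (``blow-up'') gadget. Given a $2$-player, $n$-action game $G=(A,B)$ (an instance of the source problem), pick an integer $q\ge 1$ to be specified below, set $m=qn$, and form the $2$-player, $m$-action game $G'=(A',B')$ whose actions for each player are indexed by pairs $(i,r)\in[n]\times[q]$ and whose payoffs are $A'((i,r),(j,s))=A(i,j)$ and $B'((i,r),(j,s))=B(i,j)$; that is, the $q$ copies of each original action are interchangeable. Let $\iota:[m]\to[n]$ be the map $(i,r)\mapsto i$, and for $x\in\Delta^m$ write $\bar x\in\Delta^n$ for its pushforward, $\bar x_i:=\sum_{r\in[q]}x_{(i,r)}$. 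On input $G$, the reduction outputs $G'$, runs the oracle for the target problem on $G'$ to obtain a (weak, resp.\ strong) approximate smooth Nash equilibrium $(x,y)$ of $G'$, and returns $(\bar x,\bar y)$. For fixed $c$ we will have $q,m=\poly(n)$, so this runs in polynomial time.

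The heart of the matter is the following exact correspondence, valid for any $\sigma'\in[1/n,1]$ and all $x,y\in\Delta^m$. First, $A'(x,y)=A(\bar x,\bar y)$ and $B'(x,y)=B(\bar x,\bar y)$, directly from the definition of $G'$. Second, $\iota$ pushes $\MK_{\sigma',m}$ exactly \emph{onto} $\MK_{\sigma',n}$: if $x_{(i,r)}\le \frac{1}{\sigma' m}$ for all $(i,r)$ then $\bar x_i\le \frac{q}{\sigma' m}=\frac{1}{\sigma' n}$, so $\bar x\in\MK_{\sigma',n}$; conversely, any $z\in\MK_{\sigma',n}$ is the pushforward of its ``evenly spread'' lift $\tilde z$ given by $\tilde z_{(i,r)}:=z_i/q\le \frac{1}{\sigma' nq}=\frac{1}{\sigma' m}$, and $\tilde z\in\MK_{\sigma',m}$. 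Combining these two facts, for each player the value of the best $\sigma'$-smooth deviation is preserved under the blow-up:
\[
\sup_{x'\in\MK_{\sigma',m}}A'(x',y)=\sup_{z\in\MK_{\sigma',n}}A(z,\bar y),
\]
and symmetrically for the second player. Hence player $j$'s $\sigma'$-smooth-deviation regret at $(x,y)$ in $G'$ equals its $\sigma'$-smooth-deviation regret at $(\bar x,\bar y)$ in $G$; thus $(x,y)$ is a weak $\ep'$-approximate $\sigma'$-smooth Nash equilibrium of $G'$ if and only if $(\bar x,\bar y)$ is one of $G$, and --- using that $x\in\MK_{\sigma',m}$ implies $\bar x\in\MK_{\sigma',n}$ --- the same equivalence holds with ``weak'' replaced by ``strong''.

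It remains to choose $q$ so that the oracle's parameters on the $m$-action game $G'$, namely $\ep(m^c)$ and $\sigma(m^c)$, are at least as strong as the required $\ep(n)$ and $\sigma(n)$. For the weak case, take $q$ so that $m=qn\ge n^{1/c}$ (e.g.\ $q=\lceil n^{(1-c)/c}\rceil$), giving $m^c\ge n$ and $m=O(n^{1/c})=\poly(n)$; since $\ep,\sigma$ are non-increasing, $\ep(m^c)\le\ep(n)$ and $\sigma(m^c)\le\sigma(n)$, so $\MK_{\sigma(n),n}\subseteq\MK_{\sigma(m^c),n}$ and a weak $\ep(m^c)$-approximate $\sigma(m^c)$-smooth Nash equilibrium of $G'$ is a fortiori a weak $\ep(n)$-approximate $\sigma(n)$-smooth one; by the correspondence above, $(\bar x,\bar y)$ is then a weak $\ep(n)$-approximate $\sigma(n)$-smooth Nash equilibrium of $G$, as needed. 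For the strong case one additionally needs the smoothness parameter itself to transfer, so $q$ must be chosen so that $\sigma(m^c)=\sigma(n)$ and $\ep(m^c)=\ep(n)$ --- i.e.\ so that $m^c$ lands on $n$ --- the rounding inherent in this (since $n^{1/c}$ need not be an exact multiple of $n$) being absorbed by a small padding argument together with the monotonicity of $\ep$ and $\sigma$, after which the same projection $(\bar x,\bar y)$ works verbatim. The step requiring the most care is the exact-correspondence claim --- in particular verifying that $\iota$ maps $\MK_{\sigma',m}$ \emph{onto} $\MK_{\sigma',n}$ with the same parameter $\sigma'$ on both sides --- together with the bookkeeping needed to line $m^c$ up with $n$ while keeping $m$ an integer multiple of $n$.
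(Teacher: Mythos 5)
Your construction and analysis of the weak case are correct and, up to notation, identical to the paper's: the same action-duplication blow-up (the paper writes $A' = J_{k\times k}\otimes A$), the same pushforward $x\mapsto\bar x$, the same observation that the blow-up maps $\MK_{\sigma',m}$ exactly onto $\MK_{\sigma',n}$ so that best smooth-deviation values are preserved, and the same choice $q=\lceil n^{(1-c)/c}\rceil$; the paper handles the mismatch between $\ep(m^c),\sigma(m^c)$ and $\ep(n),\sigma(n)$ by reparametrizing the functions $\ep,\sigma$ at the very end, while you fold the monotonicity in directly, but this is the same argument.

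The strong case is where there is a real gap. You correctly diagnose that two opposing containments are needed: to transfer the regret bound you need $\MK_{\sigma(n),n}\subseteq\MK_{\sigma(m^c),n}$ and $\ep(m^c)\le\ep(n)$, i.e.\ $m^c\ge n$; to transfer strong membership under the pushforward you need $\MK_{\sigma(m^c),n}\subseteq\MK_{\sigma(n),n}$, i.e.\ $m^c\le n$. Since $\sigma$ is an arbitrary non-increasing function, both can hold simultaneously only when $m^c=n$ exactly, forcing $m=n^{1/c}$ --- but $m=qn$ must be an integer multiple of $n$, which $n^{1/c}$ typically is not. Appealing to ``a small padding argument together with the monotonicity of $\ep$ and $\sigma$'' does not close this: monotonicity buys slack in only one direction, and here the two constraints need slack in opposite directions, so rounding $q$ up breaks one side and rounding it down breaks the other. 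You would have to actually produce a gadget that lands $m^c$ on $n$ exactly (or weaken the claim to allow one-sided slack), and you have not. For what it is worth, the paper's own one-line treatment of the strong case appears to suffer from the same difficulty: it verifies only that $x',y'\in\MK_{\sigma(N),n}$ with $N\ge n^{1/c}$, whereas after the reparametrization one needs $x',y'\in\MK_{\sigma(n^{1/c}),n}$, and $\sigma(N)\le\sigma(n^{1/c})$ gives $\MK_{\sigma(N),n}\supseteq\MK_{\sigma(n^{1/c}),n}$ --- the wrong way around. (The strong version of this lemma is not invoked anywhere downstream in the paper, so this does not affect the paper's main theorems, but it is a genuine hole in the step you asked to be ``absorbed.'')
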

The proof of \cref{thm:smooth-nash-nash} proceeds via a padding argument: given an $n$-action normal-form game $G$, it constructs a game $G'$ with $N := n^{1/c}$ actions which consists of $n^{(1-c)/c}$ copies of $G$. Given a $\sigma(N)$-smooth $\ep(N)$-approximate Nash equilibrium of $G'$, the proof shows how to construct (roughly speaking) a $\sigma(2N)$-smooth $\ep(N)$-approximate Nash equilibrium of $G$. Recalling that $N = n^{1/c}$, we get the result after reparametrizing the functions $\ep(\cdot), \sigma(\cdot)$.

\paragraph{Lower bounds for constant $\ep$ under ETH for \PPAD.} %
We observe that $\MK_{1/n,n} = \Delta^n$; thus, comparing \cref{def:sne,def:ne}, we see that weak (and strong) $\ep$-approximate $1/n$-smooth Nash equilibria are equivalent to $\ep$-approximate Nash equilibria. 
Moreover, we remark that under the exponential time hypothesis for \PPAD (\cref{conj:ppad-eth}), it is known \cite{DBLP:journals/sigecom/Rubinstein17} that, for some constant $\ep_0$, there is no algorithm that computes $\ep_0$-approximate Nash equilibria in 2-player $n$-action games in time $n^{\log^{1-\delta} n}$, for any constant $\delta > 0$ (\cref{thm:quasipoly-hardness}). \cref{thm:smooth-nash-nash} tells us that computing $\ep_0$-approximate Nash equilibria (or, equivalently, $\ep_0$-approximate $1/n$-smooth Nash equilibria) is polynomial-time reducible to computing $\ep_0$-approximate $1/n^c$-smooth Nash equilibria, for any constant $c \in (0,1)$. Thus, we obtain the following as a corollary:
\begin{corollary}
  \label{cor:smooth-nash-eth}
For some $\ep_0 \in (0,1)$, the following holds assuming the ETH for \PPAD: for any $\delta, c \in (0,1)$ there is no algorithm that computes weak $\ep_0$-approximate $n^{-c}$-smooth Nash equilibrium in 2-player $n$-action games in time $n^{\log^{1-\delta}n}$. 
\end{corollary}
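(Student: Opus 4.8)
The plan is to obtain \cref{cor:smooth-nash-eth} as a short consequence of the padding reduction \cref{thm:smooth-nash-nash} together with the quasipolynomial-time hardness of approximate Nash equilibria (\cref{thm:quasipoly-hardness}); the constant $\ep_0$ in the corollary will be taken to be precisely the constant furnished by \cref{thm:quasipoly-hardness}.

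First I would rephrase the known hard problem in the language of smooth equilibria. Since $\MK_{1/n,n} = \Delta^n$, comparing \cref{def:sne} with \cref{def:ne} shows that a strategy profile is a weak $\ep_0$-approximate $1/n$-smooth Nash equilibrium of an $n$-action game if and only if it is an $\ep_0$-approximate Nash equilibrium of that game. Hence, under ETH for \PPAD, \cref{thm:quasipoly-hardness} says there is no algorithm solving the problem $P_1 :=$ ``find a weak $\ep_0$-approximate $1/n$-smooth Nash equilibrium in a 2-player $n$-action game'' in time $n^{\log^{1-\delta'}n}$, for any constant $\delta' > 0$. Next, fixing $\delta, c \in (0,1)$, I would instantiate \cref{thm:smooth-nash-nash} with the non-increasing functions $\ep(n) := \ep_0$ and $\sigma(n) := 1/n$ and the exponent $c$; since $\ep(n^c) = \ep_0$ and $\sigma(n^c) = n^{-c}$, this yields a polynomial-time reduction from $P_1$ to $P_2 :=$ ``find a weak $\ep_0$-approximate $n^{-c}$-smooth Nash equilibrium in a 2-player $n$-action game''. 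From the padding construction in the proof of \cref{thm:smooth-nash-nash}, this reduction maps an $n$-action instance of $P_1$ to an $N$-action instance of $P_2$ with $N = n^{1/c}$, and its pre- and post-processing run in time $\poly(n)$.

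The last step is the running-time bookkeeping. Suppose for contradiction there were an algorithm for $P_2$ running in time $N^{\log^{1-\delta}N}$ (for the fixed $\delta$). Composing it with the reduction solves $P_1$ on an $n$-action game in time $\poly(n) + \bigl(n^{1/c}\bigr)^{\log^{1-\delta}(n^{1/c})} = \poly(n) + n^{(1/c)^{2-\delta}\log^{1-\delta}n}$, where I used $\log(n^{1/c}) = (1/c)\log n$. Since $(1/c)^{2-\delta}$ is a constant and $\log^{1-\delta}n / \log^{1-\delta'}n = \log^{\delta'-\delta}n \to 0$ for any $\delta' \in (0,\delta)$, this running time is at most $n^{\log^{1-\delta'}n}$ for all sufficiently large $n$ (the finitely many small $n$ being handled by brute force), contradicting the hardness of $P_1$ recorded above. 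Therefore no such algorithm for $P_2$ exists, which is exactly the statement of the corollary.

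The only place requiring genuine care is this final estimate: the $n \mapsto n^{1/c}$ blow-up in the number of actions inflates the exponent in the running time by a constant factor $(1/c)^{2-\delta}$, and I must check that this constant can always be absorbed by weakening $\delta$ to a slightly smaller $\delta'$ — which is legitimate precisely because \cref{thm:quasipoly-hardness} rules out $n^{\log^{1-\delta'}n}$-time algorithms for \emph{every} constant $\delta' > 0$, not just for one particular value. Everything else (the equivalence $\MK_{1/n,n}=\Delta^n$, and the existence of the reduction with the stated action blow-up) is read off directly from the cited results, so I expect no further obstacles.
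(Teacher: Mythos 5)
Your proof is correct and follows essentially the same route as the paper: instantiate \cref{thm:smooth-nash-nash} with $\ep(n) = \ep_0$ and $\sigma(n) = 1/n$, observe that weak $\ep_0$-approximate $1/n$-smooth equilibria are exactly $\ep_0$-approximate Nash equilibria, and then absorb the polynomial blow-up in the number of actions (the constant factor $(1/c)^{2-\delta}$ in the exponent) by passing from $\delta$ to a slightly smaller $\delta'$ — which works because \cref{thm:quasipoly-hardness} rules out $n^{\log^{1-\delta'}n}$-time algorithms for every constant $\delta' > 0$. Your bookkeeping is in fact written more carefully than the paper's own (which contains a couple of typos in the exponent arithmetic), and the minor simplification of writing $N = n^{1/c}$ rather than $N \in [n^{1/c}, 2n^{1/c}]$ does not affect the asymptotics.
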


\subsection{Constant smoothness parameter} 
\label{sec:harness-constant} 
Next, we establish lower bounds for the setting that $\sigma$ is a constant and $\ep$ is an inverse polynomial in $n$. 
One might hope that the same technique used to establish \cref{cor:smooth-nash-eth} might apply in this setting, by using \cref{thm:smooth-nash-nash} for $c = 0$. However, this approach fails since in the limit $c \ra 0$, the size of the game $G'$ used to prove \cref{thm:smooth-nash-nash}, which is $\Theta(n^{1/c})$, diverges. Indeed, to establish hardness of finding weak $\sigma_0$-approximate $n^{-c}$-smooth Nash equilibrium in 2-player, $n$-action games, for \emph{constant} $\sigma_0$ (stated formally in \cref{thm:ppad-hard-const-sig} below), we require additional techniques.

\paragraph{Generalized matching pennies game.} Our approach proceeds by showing that it is hard to find approximate smooth Nash equilibria in a certain class of games known as approximate generalized matching pennies games. To do so, we first show that all approximate smooth Nash equilibria in generalized matching pennies games have a certain structure (formalized in \cref{lem:validity}) that allows us to relate them to approximate Nash equilibria in such games. We will then apply a result of \cite{DBLP:journals/jacm/ChenDT09} which shows that it is \PPAD-hard to compute approximate Nash equilibria in this subclass of games.

To define generalized matching pennies games, fix an integer $K \in \BN$ and write $N = 2K$. Let $M := {2K^3}$. Let $A^\st \in \BR^{N \times N}$ denote the $K \times K$ block matrix where the $2 \times 2$ blocks along the main diagonal have all entries equal to $M$, and all other entries are 0. (Note that we have $A^\st = M \cdot I_K \otimes J_{2,2}$, where $J_{2,2}$ denotes the $2 \times 2$ matrix of 1s and $\otimes$ denotes the Kronecker product.) Let $B^\st = -A^\st$. 
\begin{definition}
Given $K \in \BN$ and $N := 2K$, a 2-player $N$-action game $(A,B)$ is defined to be an \emph{approximate $K$-generalized matching pennies ($K$-GMP) game} if all entries of $A- A^\st$ and of $B-B^\st$ are in $[0,1]$, where $A^\st, B^\st \in \BR^{N \times N}$ are defined in terms of $K$ above.
\end{definition}
For ease of notation, we have departed from our convention that all payoffs of the game are in $[0,1]$; this setting is equivalent to the former by rescaling. Moreover, we have suppressed the dependence of $A^\st, B^\st$ on $K$ in our notation.

Let $(A,B)$ denote an approximate $K$-GMP game and let $x,y \in \Delta^N$ be strategy vectors. We use the following convention: $\bar x, \bar y \in \Delta^K$ denote  vectors defined by $\bar x_k = x_{2k-1} + x_{2k}$ and $\bar y_k = y_{2k-1} + y_{2k}$, for $k \in [K]$. 
\begin{lemma}%
  \label{lem:validity}
  Suppose $K, M \in \BN,\ \sigma \in (0,1)$, and $\ep \geq \frac{8\sigma K^2}{M}$ are given. If $G = (A,B)$ is a $K$ approximate $K$-GMP game, then %
  a weak 1-approximate $\sigma$-smooth Nash equilibrium $(x,y)$ must satisfy the following for all $k \in [K]$:
  \begin{align}
\bar x_k = 1/K \pm \ep, \qquad \bar y_k = 1/K \pm \ep\nonumber.
  \end{align}
\end{lemma}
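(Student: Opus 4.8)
The plan is to analyze what a player can guarantee by deviating to a smooth strategy and to exploit the fact that the matrix $A^\st = M \cdot I_K \otimes J_{2,2}$ has enormous entries ($M = 2K^3$) compared to the $[0,1]$ perturbation. First I would record the key bound on the payoffs: for any strategy profile $(x,y)$, writing $A = A^\st + E$ with all entries of $E$ in $[0,1]$, we have
\begin{align}
x^\t A y = M \sum_{k=1}^K \bar x_k \bar y_k \pm 1 \nonumber,
\end{align}
since $x^\t A^\st y = M \sum_k (x_{2k-1}+x_{2k})(y_{2k-1}+y_{2k}) = M \sum_k \bar x_k \bar y_k$ and $0 \le x^\t E y \le 1$. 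The analogous statement for $B = -A^\st + F$ is $x^\t B y = -M\sum_k \bar x_k \bar y_k \pm 1$. So, up to an additive error of $1$, the game is a zero-sum game in the aggregated variables $\bar x, \bar y \in \Delta^K$ with payoff $M\langle \bar x, \bar y\rangle$ to the row player.

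The main step is to show that in any weak $1$-approximate $\sigma$-smooth equilibrium, $\bar y$ must be close to uniform; the argument for $\bar x$ is symmetric (using that the column player wants to \emph{minimize} $\sum_k \bar x_k \bar y_k$, equivalently maximize $-\sum_k \bar x_k\bar y_k$, and can deviate to a smooth strategy concentrated on blocks where $\bar x_k$ is small). Suppose for contradiction that some $\bar y_{k_0} > 1/K + \ep$. Then the row player, by deviating to put as much mass as a $\sigma$-smooth strategy allows onto the two actions $2k_0-1, 2k_0$ in block $k_0$ — i.e. mass $\min\{1, 2/(\sigma N)\} = \min\{1, 1/(\sigma K)\}$ there, with $\bar x'_{k_0} \ge$ a constant amount more than $1/K$ — would gain at least roughly $M \cdot (\text{extra mass on block } k_0)\cdot \ep$ over what a uniform $\bar x$ gets, which will exceed the weak-equilibrium slack of $1$ once $\ep \ge 8\sigma K^2 / M$; here I would need to be a little careful about the edge case $\sigma$ large enough that $1/(\sigma K) \ge 1$ (then the row player can just play $e_{2k_0-1}$ outright and the bound is even easier). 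Conversely, if $\bar y_{k_0} < 1/K - \ep$, the row player deviates \emph{away} from block $k_0$, which again improves its payoff against any smooth best response analysis; more directly, since $\sum_k \bar y_k = 1$, some block has mass $>1/K$, so the previous case already forces a contradiction unless all $\bar y_k = 1/K \pm \ep$. I'd formalize this by: take the block $k^\st$ maximizing $\bar y_{k^\st}$, deviate the row player onto it, compute the utility gain against $y$ as $\approx M(\bar x'_{k^\st} - \bar x_{k^\st})\bar y_{k^\st} - 1 \gtrsim M\sigma^{-1}K^{-1}\cdot(\bar y_{k^\st} - 1/K)\cdot(\text{const}) - 2$, and note this must be $\le 1$; solving gives $\bar y_{k^\st} \le 1/K + O(\sigma K/M) \le 1/K + \ep$, and then $\sum_k \bar y_k = 1$ forces every $\bar y_k \ge 1/K - (K-1)\ep \ge 1/K - \ep$ after adjusting constants (or one argues the lower bound directly by a symmetric deviation).

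The arithmetic bookkeeping with the constant $8$ and the precise smooth-mass $\min\{1,1/(\sigma K)\}$ is the fiddly part, but the genuine obstacle — the thing to get right — is ensuring the chosen smooth deviation is actually \emph{feasible} as a $\sigma$-smooth distribution on $[N]$ (it can place mass at most $1/(\sigma N)$ on each of the two coordinates of a block, so at most $2/(\sigma N) = 1/(\sigma K)$ per block) and that the resulting aggregated deviation $\bar x'$ differs from the equilibrium $\bar x$ by enough, in the block where $\bar y$ deviates, to overwhelm the $\pm 1$ error from the $[0,1]$ perturbations together with the weak-equilibrium slack of $1$ — which is exactly what the hypothesis $\ep \ge 8\sigma K^2/M$ buys, since $M\cdot \frac{1}{\sigma K}\cdot\ep \ge 8K\gg 2+1$. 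Once that inequality is set up cleanly, the contradiction closes and the symmetric argument handles $\bar x$.
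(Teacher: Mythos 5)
Your overall plan matches the paper's at a high level (deviate the row player toward the block(s) where $\bar y$ is largest, and use $\ep \geq 8\sigma K^2/M$ to overwhelm the $\pm 1$ perturbation together with the equilibrium slack of $1$), but your gain calculation has a genuine gap. You estimate the gain from the deviation as ``$\approx M(\bar x'_{k^\st} - \bar x_{k^\st})\bar y_{k^\st}$,'' and elsewhere compare against ``what a uniform $\bar x$ gets.'' Neither is the quantity the weak-equilibrium condition controls. That condition bounds $(x')^\t A y - x^\t A y = M\sum_k (\bar x'_k - \bar x_k)\bar y_k \pm 2$, and the contributions from the blocks $k \neq k^\st$ are not negligible: moving mass onto $k^\st$ necessarily removes it from other blocks where $\bar x$ might already be large. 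Concretely, if both $\bar x$ and $\bar y$ were concentrated on the same block, then $\sum_k \bar x_k \bar y_k \approx 1 \gg 1/K$ and your proposed deviation $x'$ would \emph{lose} utility, not gain it, so the contradiction you are aiming for disappears. To rule such configurations out, one must first pin down $\sum_k \bar x_k \bar y_k$ to be $1/K \pm O(1/M)$. That is exactly Claim~\ref{clm:ip-ub} in the paper, and it is established by a separate preliminary argument using \emph{both} players' deviation constraints: the row player deviating toward the top-$\sigma K$ blocks of $\bar y$ gives the lower bound, and the column player deviating toward the bottom-$\sigma K$ blocks of $\bar x$ gives the upper bound. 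With that claim in hand, the equilibrium utility $x^\t Ay$ is within a few units of the uniform baseline $M/K$, and your ``compare to uniform'' step becomes legitimate. Without it, the argument as written does not close; the ``genuine obstacle'' you flag (feasibility of the smooth deviation and the arithmetic with the constant $8$) is the routine part, not the crux.

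A secondary, non-fatal difference: the paper's deviation $x'$ spreads mass $\frac{1}{\sigma K}$ uniformly over the \emph{top $\sigma K$} blocks of $\bar y$, exactly saturating the $\sigma$-smooth budget on all $2\sigma K$ underlying actions. Your single-block version forces you to spread leftover mass $1 - 1/(\sigma K)$ elsewhere, and the resulting coefficient multiplying $(\bar y_{k^\st} - 1/K)$ degrades as $\sigma \to 1$; the paper's choice avoids that bookkeeping and leads directly to the bound on the average of the top $\sigma K$ entries that it needs.
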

As a result of \cref{lem:validity}, any $\ep$-approximate $\sigma$-smooth Nash equilibrium $(x,y)$ in an approximate $K$-GMP game must in fact be an $\ep K$-approximate Nash equilibrium. The intuition behind this implication is as follows: if some player (say the $x$-player) had a useful deviation from their strategy $x$ to any fixed action $i$, they could instead deviate to the strategy whereby they play $i$ with probability $1/K$ and $x$ with the remaining probability. Since $\max_i x_i \leq \max_k \bar x_k \leq 1/K + \ep$ (by \cref{lem:validity}), as long as $2/K + \ep \leq \frac{1}{\sigma N} = \frac{1}{2 \sigma K}$ (which will hold if $\sigma \leq 1/6$), the resulting strategy is $\sigma$-smooth. Moreover, the player would gain a $1/K$ fraction of their utility gain for deviating to $i$. By combining this observation and the fact that computing approximate Nash equilibria in $K$-GMP games is \PPAD-hard (\cref{thm:gmp-hardness}, from \cite{DBLP:journals/jacm/ChenDT09}), we can show the following theorem. 
The full proofs of all results from this section are in \cref{sec:const-sig-lb-proofs}. 

\begin{theorem}[\PPAD-hardness for constant $\sigma$]
  \label{thm:ppad-hard-const-sig}
For any constant $c_1 \in (0,1)$, the problem of computing weak $n^{c_1}$-approximate $1/6$-smooth Nash equilibria in 2-player $n$-action games is \PPAD-hard.
\end{theorem}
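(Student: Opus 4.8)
The plan is to reduce from the \PPAD-hardness of computing approximate Nash equilibria in $K$-GMP games (\cref{thm:gmp-hardness}): first I would establish \PPAD-hardness of finding a weak $\Theta(1/n)$-approximate $1/6$-smooth Nash equilibrium in $2$-player, $n$-action games, and then bootstrap this to an arbitrary polynomially small approximation parameter $n^{-c_1}$ via the padding reduction \cref{thm:smooth-nash-nash}. The bridge between smooth equilibria and exact equilibria in GMP games is \cref{lem:validity}: applied to a $K$-GMP game with $M = 2K^3$ and $\sigma = 1/6$ (so that its hypothesis $\ep \ge 8\sigma K^2/M$ is met by $\ep = O(1/K)$, and in particular by the value $1$), it forces the block-marginals of any weak $1$-approximate $\sigma$-smooth Nash equilibrium $(x,y)$ to satisfy $\bar x_k, \bar y_k \in 1/K \pm O(1/K)$, hence every coordinate obeys $x_j \le \bar x_{\lceil j/2\rceil} \le \tfrac{5}{3K}$, and likewise for $y$.

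\textbf{From smooth to exact equilibria in GMP games.} The core step converts a weak $\ep$-approximate $1/6$-smooth Nash equilibrium $(x,y)$ of a $K$-GMP game with $N=2K$ actions (taking $\ep\le 1$ so the above applies) into an exact approximate Nash equilibrium of that game. Fix any pure action $i\in[N]$ of the $x$-player and let $\Delta := A(e_i,y) - A(x,y)$ be its deviation gain; consider the mixed deviation $x' := \tfrac1K e_i + (1-\tfrac1K)x \in \Delta^N$. By the coordinate bound, every coordinate of $x'$ is at most $\tfrac1K + \tfrac{5}{3K} = \tfrac{8}{3K} \le \tfrac{3}{K} = \tfrac{1}{\sigma N}$, so $x'$ is $1/6$-smooth; and by linearity of $A$ in the deviating player's strategy, $A(x',y) - A(x,y) = \tfrac1K\Delta$. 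Hence the $\ep$-approximate smooth-equilibrium condition yields $\Delta \le \ep K$, and running the symmetric argument for the $y$-player shows $(x,y)$ is an $\ep K$-approximate Nash equilibrium of the $K$-GMP game. Choosing $\ep = \Theta(\eta_K / K)$, where $\eta_K$ is the approximation parameter for which \cref{thm:gmp-hardness} asserts \PPAD-hardness of approximate Nash equilibria in $K$-GMP games, and observing the reduction of \cref{thm:gmp-hardness} maps such an approximate Nash equilibrium back to a solution of the original \PPAD instance, we conclude that computing a weak $\Theta(1/n)$-approximate $1/6$-smooth Nash equilibrium in $2$-player $n$-action games is \PPAD-hard. (It is precisely the check $\tfrac{8}{3K}\le \tfrac{1}{\sigma N}$ that pins the smoothness threshold to $\sigma\le 1/6$.)

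\textbf{Bootstrapping to all $c_1$.} To reach an arbitrary constant exponent $c_1\in(0,1)$, I would apply \cref{thm:smooth-nash-nash} with the constant smoothness function $\sigma(\cdot)\equiv 1/6$, the approximation function $\ep(\cdot) = \Theta(1/\cdot)$ just shown hard, and padding parameter $c = c_1$: this reduces the $\Theta(1/n)$-hard problem to finding a weak $\Theta(n^{-c_1})$-approximate $1/6$-smooth Nash equilibrium in $n$-action games, which is therefore also \PPAD-hard. Since hardness of finding $\ep$-approximate smooth equilibria trivially propagates to every smaller approximation parameter, this gives hardness at $\ep = n^{-c_1}$ exactly (absorbing any stray constants by brute-forcing small instances and, if needed, taking $c$ slightly below $c_1$). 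Note the padded game has $n^{1/c_1}$ actions, which is polynomial precisely because $c_1$ is a constant bounded away from $0$ — this is why, as anticipated in the discussion preceding the theorem, one cannot simply take $c\to 0$ in \cref{thm:smooth-nash-nash} (the setting $\sigma$ constant, $\ep$ constant, handled by \cref{cor:smooth-nash-eth}).

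\textbf{Main obstacle.} I expect the genuinely substantive content — the rigidity forcing the marginal structure of smooth equilibria in GMP games — to be entirely encapsulated in \cref{lem:validity}, together with the black-box \cref{thm:gmp-hardness}. Given these, the remaining difficulties are comparatively mild: verifying the exact constants in the smoothness check (which fixes the threshold $\sigma=1/6$), lining up the GMP-hardness parameter $\eta_K$ with the factor-$K$ loss incurred by the $e_i \mapsto \tfrac1K e_i + (1-\tfrac1K)x$ trick, and the exponent/constant bookkeeping in invoking \cref{thm:smooth-nash-nash} to sweep all $c_1\in(0,1)$ while keeping the reduction polynomial-time.
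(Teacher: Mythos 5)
Your overall strategy matches the paper's proof exactly: use \cref{lem:validity} to pin the block marginals, the $x' = \tfrac1K e_i + (1-\tfrac1K)x$ trick to convert a weak $\ep$-approximate $\sigma$-smooth equilibrium into an $\ep K$-approximate Nash equilibrium (the constant check $8/(3K) \le 1/(\sigma N)$ indeed fixes $\sigma \le 1/6$), invoke \cref{thm:gmp-hardness}, and then pad via \cref{thm:smooth-nash-nash}. However, there is a concrete numerical error at the hinge of the argument: you assert base hardness at $\ep = \Theta(1/n)$, but \cref{thm:gmp-hardness} only gives \PPAD-hardness of $\eta$-approximate Nash in normalized $K$-GMP games for $\eta \le K^{-14}$, not for constant $\eta$. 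Combined with the factor-$K$ loss from the $\tfrac1K e_i$ trick, your own choice $\ep = \Theta(\eta_K/K)$ yields base hardness at $\ep = \Theta(n^{-15})$ (this is precisely the $n^{-15}$ that appears in the paper's \cref{lem:sigconst-ep15}), not $\Theta(1/n)$.

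This propagates: with the true base $\ep(n) = n^{-15}$, taking $c = c_1$ in \cref{thm:smooth-nash-nash} establishes hardness only at $\ep(n^{c_1}) = n^{-15c_1}$, which is \emph{smaller} than the target $n^{-c_1}$ whenever $c_1 > 1/15$. Your appeal to ``hardness propagates to every smaller approximation parameter'' cannot rescue this, because that implication runs in the opposite direction — hardness at a stricter (smaller) $\ep$ does not imply hardness at a looser (larger) $\ep$; and ``taking $c$ slightly below $c_1$'' is also in the wrong ballpark. The repair is mechanical and exactly what the paper does: take $c = c_1/15$ so that $\ep(n^c) = n^{-15c} = n^{-c_1}$ on the nose. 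With that substitution your argument becomes correct and coincides with the paper's.
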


\section{Discussions and Open Problems}
\label{sec:discussion}

In this paper we introduced the notion of smooth Nash equilibria and showed that they satisfy many desirable computational properties, namely that they have polynomial time and query algorithms. Our results open numerous avenues for future work, listed below.
\begin{itemize}
    \item \textbf{Improving Dependence on the number of players:} Note that the dependence on $m$ in the exponent of the running time of \cref{thm:alg-strong}, namely $\tilde O(m^4)$, is worse than that in the corresponding result for finding \emph{weak} smooth Nash equilibrium (\cref{thm:eff-weak}), where the dependence is $\tilde O(m^2)$. It is an interesting question if this dependence can be improved. 
    In fact, we expect even the $\tilde O(m^2)$ dependence can be improved to $ O(m \log m ) $ using the techniques of \cite{DBLP:journals/mor/BabichenkoBP17}.  
    \item \textbf{Sampling from strong smooth equilibrium:} In \cref{sec:query}, we gave a constant time algorithm for computing a weak $\sigma$-smooth Nash equilibrium and argued that an analogous algorithm for strong $\sigma$-smooth Nash equilibrium could not run in sublinear time. One could ask if, instead of outputting a strong $\sigma$-smooth Nash equilibrium, we can output a sample from a strong $\sigma$-smooth Nash equilibrium in sublinear time.
    \item \textbf{General Notions of Smoothness:}
    In this paper, we focus on smoothness defined using the $L_\infty$ norm of the Radon-Nikodym derivative (see \cref{lem:generalize-smooth-polytope}). 
    However, one could consider other notions of smoothness corresponding to other notions of distance from a fixed measure, such as $\chi^2$ divergence or KL divergence. 
    We expect that our techniques can be extended to general settings with the appropriate changes (for example, \ref{lem:coupling} would need to be generalized as in \cite{DBLP:conf/colt/BlockP23}). 
    Exploring this direction is an interesting avenue for future work.
    \item \textbf{General Complexity of Polyhedral and Concave Games:}
    One perspective on our result obtaining algorithms for smooth Nash equilibria in $2$-player games is through the lens of covering numbers  of matrices \cite{alon2014cover}. 
    That is, the proof of \cref{lem:mplayer-approx} via \ref{lem:smooth-class-lemma} can be viewed as bounding the appropriate covering numbers of the polytope induced by the game matrix acting on the set of smooth distributions.
    A natural question is whether this notion of covering number can be used to prove a general characterization of complexity of finding Nash equilibria in polyhedral games. In particular, can the lower bound from \cite{DBLP:journals/sigecom/Rubinstein17} be generalized to provide a characterization of the complexity of finding Nash equilibria in general settings? 
    \item \textbf{Sharp Lower Bounds:} A related question to the above is whether we can improve our lower bounds in \cref{sec:hardness} to match the upper bounds in the full regime of parameters (i.e., for all $\ep, \sigma \in (0,1)$ and all  $m \in \mathbb{N} $) both in the case of time complexity and query complexity.
    Understanding this question even for the case of $\sigma = 1/n$ and $m=2$ seems to be a challenging open problem. 
    \item \textbf{Extensions to Other Equilibrium Concepts:} Given that the notion of smooth Nash equilibria leads significant computational advantages relative to Nash equilibria, it is natural to ask if similar advantages can be obtained for other equilibrium concepts such as Arrow-Debreu market equilibria. 
    Developing a notion of smoothness in these settings which simultaneously has a natural and economically meaningful interpretation while also leading to computational advantages would be ideal.
    \item \textbf{Applications:} As discussed earlier, one can view our notion of smooth Nash equilibria as a strengthening of quantal response equilibria. 
    Given the extensive usage of this notion in modern applications of equilibrium solving to multi-agent reinforcement learning, it would be interesting to see if the notion of smooth Nash equilibria  can be applied in multiagent settings to obtain computational advantages.
    
    \end{itemize}

\section*{Acknowledgements}
This work was done in part while the authors were visiting the Learning and Games program at the Simons Institute for the Theory of Computing. 

\bibliographystyle{alpha}
\bibliography{refs.bib}

\appendix

\section{Proofs from \cref{sec:sampling}} \label{sec:proofs-sampling}

\subsection{Smooth Finite Class Lemma} \label{sec:smooth-class-lemma}

\begin{lemma}[Smooth finite class lemma]
  \label{lem:smooth-class-lemma}
Suppose $\{ U_i \}_{i \in [n]}$ is a finite collection of random variables which satisfy the following sub-Gaussianity condition: there is a constant $c > 0$ so that for all $\lambda > 0$ and all $i \in [n]$, $\E[\exp(\lambda U_i)] \leq e^{c^2 \lambda^2/2}$. Then
for any $\delta > 0$,
\begin{align}
\BP \left( \max_{w \in \MK_{\sigma,n}} \lng w, U \rng > c{\sqrt{2 \log 1/(\delta \sigma)}} \right) \leq \delta\nonumber.
\end{align}
\end{lemma}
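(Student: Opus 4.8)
The plan is to mimic the standard "Chernoff $+$ moment generating function" proof of Massart's finite class lemma, with one twist that lets us absorb the ``size'' of $\MK_{\sigma,n}$ into a factor of $1/\sigma$ rather than the exponentially large number $\binom{n}{\sigma n}$ of its vertices. Concretely, I would first bound the MGF of $\max_{w\in\MK_{\sigma,n}}\lng w,U\rng$, and then apply Markov's inequality to $\exp(\lambda\max_w\lng w,U\rng)$ and optimize over $\lambda>0$.

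The key step is a pointwise inequality. Fix any realization of $U=(U_1,\dots,U_n)$ and any $\lambda>0$. For every $w\in\MK_{\sigma,n}$, since $w$ is a probability vector and $x\mapsto e^{\lambda x}$ is convex, Jensen's inequality gives $\exp(\lambda\lng w,U\rng)=\exp\bigl(\lambda\textstyle\sum_i w_iU_i\bigr)\le\sum_i w_i e^{\lambda U_i}$; using $0\le w_i\le\frac1{\sigma n}$ and $e^{\lambda U_i}>0$ then yields $\exp(\lambda\lng w,U\rng)\le\frac1{\sigma n}\sum_{i=1}^n e^{\lambda U_i}$. Crucially this bound is independent of $w$, so taking the supremum over $w\in\MK_{\sigma,n}$ gives $\exp\bigl(\lambda\max_{w\in\MK_{\sigma,n}}\lng w,U\rng\bigr)\le\frac1{\sigma n}\sum_{i=1}^n e^{\lambda U_i}$ pointwise. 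Taking expectations and invoking the sub-Gaussianity hypothesis ($\E[e^{\lambda U_i}]\le e^{c^2\lambda^2/2}$ for $\lambda>0$),
\[
\E\Bigl[\exp\bigl(\lambda\!\!\max_{w\in\MK_{\sigma,n}}\!\!\lng w,U\rng\bigr)\Bigr]\ \le\ \frac1{\sigma n}\sum_{i=1}^n\E[e^{\lambda U_i}]\ \le\ \frac1\sigma\,e^{c^2\lambda^2/2}.
\]

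To finish, I would apply Markov's inequality: for any $t>0$ and $\lambda>0$, $\BP\bigl(\max_w\lng w,U\rng>t\bigr)\le e^{-\lambda t}\,\E[\exp(\lambda\max_w\lng w,U\rng)]\le\frac1\sigma e^{c^2\lambda^2/2-\lambda t}$. Choosing $\lambda=t/c^2$ gives $\BP(\max_w\lng w,U\rng>t)\le\frac1\sigma e^{-t^2/(2c^2)}$, and taking $t=c\sqrt{2\log(1/(\delta\sigma))}$ makes the right-hand side exactly $\delta$ (note $\delta\sigma<1$ since $\delta<1\le1/\sigma$, so $t>0$ and $\lambda=t/c^2>0$ are legitimate), proving the claim.

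The only real difficulty here is the conceptual one in the second paragraph: the naive approach — union-bounding the MGF over the vertices of $\MK_{\sigma,n}$, each of which is the uniform distribution on a $\sigma n$-element subset of $[n]$ — would cost a factor $\binom{n}{\sigma n}=e^{\Theta(\sigma n\log(1/\sigma))}$ and produce a useless threshold scaling like $\sqrt n$. The point is that one should not union-bound at all: the Jensen step dominates $e^{\lambda\lng w,U\rng}$ \emph{uniformly} over $\MK_{\sigma,n}$ by the single quantity $\frac1{\sigma n}\sum_i e^{\lambda U_i}$, whose expectation is merely $\frac1\sigma$ times the bound for a single $U_i$; this is exactly what converts the union-bound factor $\log\binom{n}{\sigma n}$ into $\log(1/\sigma)$. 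Everything else is the textbook sub-Gaussian maximum argument, and in the special case $\sigma=1/n$ (where $\MK_{\sigma,n}=\Delta^n$ and the maximum equals $\max_i U_i$) one recovers Massart's lemma, i.e.\ $\max_i U_i\le c\sqrt{2\log(n/\delta)}$ with probability $1-\delta$.
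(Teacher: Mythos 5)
Your proof is correct and follows essentially the same argument as the paper: both use Jensen's inequality to bound $\exp(\lambda\langle w,U\rangle)\le\sum_i w_i e^{\lambda U_i}$, absorb the cap $w_i\le 1/(\sigma n)$ into a factor $1/\sigma$ (rather than union-bounding over vertices of $\MK_{\sigma,n}$), invoke sub-Gaussianity, and finish with Markov's inequality and the optimal choice of $\lambda$. If anything, your write-up is slightly cleaner, since the paper's first displayed line needlessly bounds $\exp(\lambda\,\E\max_w\langle w,U\rangle)$ before pivoting to the quantity actually fed into Markov.
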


\begin{proof}
  Using Jensen's inequality (twice) as well as the fact that $\exp(\cdot)$ is monotone, we see that, for any $\lambda > 0$,
  \begin{align}
  \exp \left( \lambda \E \max_{w \in \MK_{\sigma,N}} \langle w, U \rangle \right) \leq & \E \exp \left( \max_{w \in \MK_{\sigma,N}} \langle w, \lambda U \rangle \right) \nonumber\\
  =& \E \max_{w \in \MK_{\sigma,N}} \exp \left( \langle w, \lambda U \rangle \right)\nonumber\\
  \leq & \E \max_{w \in \MK_{\sigma,N}} \sum_{i=1}^N w_i \exp(\lambda U_i) \nonumber\\
  \leq & \E \sum_{i=1}^N \frac{1}{N \sigma} \exp(\lambda U_i)\nonumber\\
  \leq & \frac{1}{\sigma} \cdot \exp(c^2 \lambda^2/2)\nonumber,
  \end{align}
  where the final inequality uses the assumed sub-Gaussianity condition.
  
  By Markov's inequality, for any $\delta \in (0,1)$, we have that
  \begin{align}
  \BP \left( \exp \left( \max_{w \in \MK_{\sigma, N}} \lng w, \lambda U \rng \right)> \frac{1}{\delta \sigma}\cdot \exp(c^2 \lambda^2/2) \right) \leq \delta\nonumber.
  \end{align}
  Rearranging, we obtain that
  \begin{align}
  \BP \left( \max_{w \in \MK_{\sigma,N}} \lng w, U \rng > \frac{\log 1/(\delta \sigma)}{\lambda} + \frac{c^2 \lambda}{2}\right) \leq \delta\nonumber.
  \end{align}
  Choosing $\lambda = \sqrt{2 \log 1/(\delta \sigma)}/c$ establishes the  statement.
\end{proof}

\subsection{Proof of \cref{lem:mplayer-approx}}

We recall the standard bounded differences inequality. 

\begin{lemma}[Bounded differences inequality]
  \label{lem:bd}
  Fix $n \in \BN$, a set $\MX$, and let $f : \MX^n \ra \BR$ and suppose non-negative constants $c_1, \ldots, c_n > 0$ are given so that, for each $i \in [n]$,
  \begin{align}
\sup_{x_1, \ldots, x_n, x_i'\in \MX} | f(x_1, \ldots, x_n) - f(x_1, \ldots, x_i', \ldots, x_n)| \leq c_i\label{eq:bd-asm}.
  \end{align}
  Write $v := \frac 14 \sum_{i=1}^n c_i^2$. Suppose $X_1, \ldots, X_n$ are independent $\MX$-valued random variables, and that $Z = f(X_1, \ldots, X_n)$. Then for all $\lambda > 0$, $\E[ \exp(\lambda \cdot (Z - \E[Z]))] \leq \exp(\lambda^2 v/2)$, and, for all $\delta \in (0,1)$,
  \begin{align}
\BP\left( |Z - \E[Z]| > \sqrt{2v \cdot \log(2/\delta)} \right) \leq \delta. 
  \end{align}
\end{lemma}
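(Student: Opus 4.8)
The plan is to give the standard Doob-martingale proof of McDiarmid's inequality: first establish the exponential-moment bound $\E[\exp(\lambda(Z-\E Z))]\le\exp(\lambda^2 v/2)$ by tensorizing over the coordinates, and then obtain the tail bound by a Chernoff argument.

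To set up, for $i=0,1,\dots,n$ let $Z_i:=\E[Z\mid X_1,\dots,X_i]$, so that $Z_0=\E[Z]$, $Z_n=Z$, and the increments $D_i:=Z_i-Z_{i-1}$ form a martingale difference sequence with respect to the filtration generated by $X_1,\dots,X_i$; in particular $Z-\E[Z]=\sum_{i=1}^n D_i$ and $\E[D_i\mid X_1,\dots,X_{i-1}]=0$. Because the $X_j$ are independent, conditioning on $X_1,\dots,X_i$ is the same as integrating out $X_{i+1},\dots,X_n$: with $g_i(x_1,\dots,x_i):=\E[f(x_1,\dots,x_i,X_{i+1},\dots,X_n)]$ we have $Z_i=g_i(X_1,\dots,X_i)$.

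The key step is the conditional range bound. Freezing $X_1,\dots,X_{i-1}$, we have $D_i=g_i(X_1,\dots,X_{i-1},X_i)-\E_{X_i'}[g_i(X_1,\dots,X_{i-1},X_i')]$ with $X_i'$ an independent copy of $X_i$, so — since the subtracted term is a constant lying between the infimum and supremum (over the value of $X_i$) of the first term — $D_i$ takes values in an interval of width at most $\sup_{x,x'}\big(g_i(X_1,\dots,X_{i-1},x)-g_i(X_1,\dots,X_{i-1},x')\big)$. Writing $g_i(X_1,\dots,X_{i-1},x)-g_i(X_1,\dots,X_{i-1},x')=\E[f(X_1,\dots,X_{i-1},x,X_{i+1},\dots,X_n)-f(X_1,\dots,X_{i-1},x',X_{i+1},\dots,X_n)]$ and invoking \cref{eq:bd-asm} (only coordinate $i$ has been changed), this width is at most $c_i$. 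Thus, conditionally on $X_1,\dots,X_{i-1}$, $D_i$ has mean zero and range at most $c_i$, and Hoeffding's lemma gives $\E[\exp(\lambda D_i)\mid X_1,\dots,X_{i-1}]\le\exp(\lambda^2 c_i^2/8)$ for all $\lambda$.

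Now I would tensorize using the tower property, peeling off one increment at a time:
\[
\E\big[\exp(\lambda(Z-\E Z))\big]=\E\Big[\exp\big(\lambda\textstyle\sum_{i=1}^{n-1}D_i\big)\cdot\E[\exp(\lambda D_n)\mid X_1,\dots,X_{n-1}]\Big]\le\cdots\le\prod_{i=1}^n\exp(\lambda^2 c_i^2/8)=\exp(\lambda^2 v/2),
\]
where the last equality uses $\sum_i c_i^2=4v$. For the tail bound, Markov's inequality applied to $\exp(\lambda(Z-\E Z))$ yields $\BP(Z-\E Z>t)\le\exp(-\lambda t+\lambda^2 v/2)$ for every $\lambda>0$; optimizing at $\lambda=t/v$ gives $\BP(Z-\E Z>t)\le\exp(-t^2/(2v))$, and since $-f$ satisfies \cref{eq:bd-asm} with the same constants, the same bound holds for $\BP(Z-\E Z<-t)$. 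A union bound gives $\BP(|Z-\E Z|>t)\le2\exp(-t^2/(2v))$, and substituting $t=\sqrt{2v\log(2/\delta)}$ makes the right-hand side exactly $\delta$. The only place requiring genuine care is the conditional range estimate for $D_i$; Hoeffding's lemma, the tensorization, and the Chernoff optimization are all standard.
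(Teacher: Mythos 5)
Your proof is correct; it is the standard Doob-martingale / Hoeffding's-lemma proof of McDiarmid's inequality together with the exponential-moment form. The paper itself does not prove \cref{lem:bd} — it is introduced with the phrase ``we recall the standard bounded differences inequality'' and used as a black box — so there is no paper proof to compare against, but your argument (martingale decomposition, conditional range bound $\le c_i$, Hoeffding's lemma giving $\exp(\lambda^2 c_i^2/8)$, tensorization to $\exp(\lambda^2 v/2)$ with $v = \tfrac14\sum c_i^2$, Chernoff optimization, and the two-sided union bound) is exactly the canonical one and all the constants check out.
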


\begin{proof}[Proof of \cref{lem:mplayer-approx}]
  For each $j \in [m]$ and $s \in [k]$, let $B_{j,s} \in [n]$ denote the random variable defined by $\BP(B_{j,s} = i) = x_{j,i}$. For $b \in [n]$, let $e_b \in \BR^n$ denote the corresponding standard basis vector. For each $j \in [m]$ and $i \in [n]$, we define the following functions $f_j : [n]^{km} \ra \BR$ and $f_{j,i} :[n]^{k(m-1)} \ra \BR$:
  \begin{align}
    f_j(b_{1,1}, \ldots, b_{1,k}, \ldots, b_{m,1}, \ldots, b_{m,k}) :=& A_j \left( \frac 1k \sum_{s=1}^k e_{b_{1,s}}, \ldots, \frac 1k \sum_{s=1}^k e_{b_{m,s}} \right)\nonumber\\
    f_{j,i}(b_{1,1}, \ldots, b_{1,k}, \ldots, b_{m,1}, \ldots, b_{m,k}) :=& A_j \left( \frac 1k \sum_{s=1}^k e_{b_{1,s}}, \ldots, e_i, \ldots, \frac 1k \sum_{s=1}^k e_{b_{m,s}} \right)\nonumber.
  \end{align}
 Note that  $f_{j,i}$ does not depend on $b_{j,s}$ for any $s \in [k]$. 
Note also that each of the functions $f_j, f_{j,i}$ satisfies the bounded differences assumption \cref{eq:bd-asm} with upper bound (denoted $c_i$ in \cref{eq:bd-asm}) bounded above by $1/k$. 
  
We now define random $k$-uniform vectors $\hat X_1, \ldots, \hat X_m$, by $\hat X_j := \frac 1k \sum_{s=1}^k e_{B_{j,s}}$. Write $Z_j := f_j((B_{j,s})_{j \in [m], s \in [k]})$ and $Z_{j,i} = f_{j,i}((B_{j,s})_{j \in[m], s\in [k]})$. By \cref{lem:bd} applied to the function $f_j$ and independent random variables $B_{j,s}$ (so that we may take $v = \frac 14 \frac{km}{k^2} = \frac 14 \frac{m}{k}$), we have that, for any $\delta \in (0,1)$,
\begin{align}
\BP\left( |Z_j - \E[Z_j]| >  \sqrt{2m/k \cdot \log(2/\delta)} \right) \leq \delta\nonumber.
\end{align}
Observe that $Z_j = A_j(\hat X_1, \ldots, \hat X_m)$ and so $\E[Z_j] = A_j(x_1, \ldots, x_m)$. 
Noting that $\sqrt{2m/k \cdot \log( 8m / \delta  )} \leq \ep/2$ by our choice of $k$ (as long as $C_{\ref{lem:mplayer-approx}}$ is sufficiently large), we see that, for each $j \in [m]$,
\begin{align}
\BP\left( |A_j(\hat X_1, \ldots, \hat X_m) - A_j(x_1, \ldots, x_m)| > \ep/2 \right) \leq \frac{\delta}{4 m } \label{eq:zj-deviation}.
\end{align}

Next, for each $j \in [m], i \in [n]$, we apply \cref{lem:bd} to the function $f_{j,i}$ and independent random variables $B_{j,s}$, which yields, for $\lambda > 0$, $\E[\exp(\lambda \cdot (Z_j - \E[Z_j]))] \leq \exp(\lambda^2m/(2k))$. Using that $Z_{j,i} = A_j(e_i, \hat X_{-j})$ and so $\E[Z_{j,i}] = A_j(e_i, x_{-j})$, it follows that, for $\lambda > 0$,
\begin{align}
\E[\exp(\lambda \cdot (A_j(e_i, \hat X_{-j}) - A_j(e_i, x_{-j})))] \leq \exp(\lambda^2 m/(2k))\label{eq:ai-subgaussian}.
\end{align}
We next apply \cref{lem:smooth-class-lemma} with $N = n$, $U_i = A_j(e_i, \hat X_{-j}) - A_j(e_i, x_{-j})$, and $c = \sqrt{m/k}$, as well as to the negation of $U_i$. Then, the lemma gives that for any $\delta \in (0,1)$,
\begin{align}
  & \BP \left( \max_{w \in \MK_{\sigma, n}}     \left|A_j(w, \hat X_{-j}) - A_j(w, x_{-j})\right| > \sqrt{\frac{2m \log(2/(\delta \sigma))}{k} }\right)\nonumber\\
  = & \BP \left( \max_{w \in \MK_{\sigma, n}} \left|\sum_{i=1}^n w_i \cdot \left( A_j(e_i, \hat X_{-j}) - A_j(e_i, x_{-j})\right)\right| > \sqrt{\frac{2m \log(2/(\delta \sigma))}{k} }\right) \leq \delta\nonumber.
\end{align}
Noting that $\sqrt{2m/k \cdot \log(8 / \delta \sigma)} \le \ep/2$ by our choice of $k$ (as long as $C_{\ref{lem:mplayer-approx}}$ is sufficiently large), we see that, for each $j \in [m]$,
\begin{align}
\BP \left( \max_{w \in \MK_{\sigma, n}} A_j(w, \hat X_{-j}) - \max_{w \in \MK_{\sigma, n}} A_j(w, x_{-j}) > \ep/2 \right) \leq \BP \left( \max_{w \in \MK_{\sigma, n}}\left| A_j(w, \hat X_{-j}) - A_j(w, x_{-j}) \right|> \ep/2 \right) \leq \frac{\delta}{ 4 m } \label{eq:zij-deviation}.
\end{align}
 Let $\ME$ denote the event under which none of the $2m$ low-probability events in \cref{eq:zj-deviation,eq:zij-deviation} hold. 
 By the union bound, we $\BP(\ME) \geq 1 - \delta $. 
 For any $\hat X$ for which $\ME$ holds, we have that \cref{eq:s-unif-close} holds. 
 To establish that such $\hat X$ is in fact the desired equilibrium, note that under $\ME$, for each $j \in [m]$,
\begin{align}
  & \max_{w_j \in \MK_{\sigma, n}} A_j(w_j, \hat X_{-j}) - A_j(\hat X_1, \ldots, \hat X_m)
  \leq \ep + \max_{w_j \in \MK_{\sigma, n}} A_j(w_j, x_{-j}) - A_j(x_1, \ldots, x_m) \leq \ep\nonumber,
\end{align}
where the first inequality uses \cref{eq:zj-deviation,eq:zij-deviation}, and the second inequality uses the fact that $x_1, \ldots, x_m$ is a 0-approximate $\sigma$-smooth Nash equilibrium. 
\end{proof}

\section{Proofs for \cref{sec:query}} \label{sec:proof-query}

\subsection{Proof of \cref{lem:unif-sampling}}
\label{sec:use-coupling}
To prove \cref{lem:unif-sampling}, we require the following variant of the coupling lemma from \cite{DBLP:conf/focs/HaghtalabRS21} (simplified and generalized by \cite{DBLP:conf/colt/BlockDGR22,DBLP:conf/nips/HaghtalabHSY22}). 
\begin{lemma}[Coupling; \cite{DBLP:conf/focs/HaghtalabRS21}]
  \label{lem:coupling}
  Fix $n \in \BN$, $\sigma \in (0,1)$, and let $p \in \Delta^n$ be a $\sigma$-smooth distribution. For each $t,\ell \in \BN$, there is a coupling $\Pi$ so that $(B_1, D_{1,1}, \ldots, D_{1,\ell}, \ldots, B_t, D_{t,1}, \ldots, D_{t,\ell}) \sim \Pi$ (with $B_i, D_{i,j} \in [n]$ for $i \in [t], j \in [\ell]$) satisfies the following:
  \begin{enumerate}
  \item $B_1, \ldots, B_t$ are distributed independently from $p$.
  \item $D_{i,j}$ (for $i \in [t], j \in [\ell]$) are distributed independently and uniformly from $[n]$.
  \item With probability at least $1-t(1-\sigma)^\ell$, $B_i \in \{ D_{i,1}, \ldots, D_{i,\ell} \}$ for each $i \in [t]$. 
  \end{enumerate}
\end{lemma}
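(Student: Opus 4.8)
The plan is to reduce to the case $t=1$ and build the coupling one ``block'' at a time, then take an independent product over $i \in [t]$ and finish with a union bound. For a single block I need to couple a draw $B \sim p$ with an i.i.d.\ uniform sample $D_1, \dots, D_\ell$ on $[n]$ so that $B \in \{D_1, \dots, D_\ell\}$ except with probability $(1-\sigma)^\ell$; running this construction independently across the $t$ blocks and union-bounding over the $t$ failure events then yields exactly the claimed bound $t(1-\sigma)^\ell$, while independence across blocks preserves properties (1) and (2).

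The key structural fact I would use is that $\sigma$-smoothness of $p$ is precisely the statement $p_i \le \tfrac{1}{\sigma n}$ for all $i$, i.e.\ $\sigma p \le \mu_n$ pointwise, where $\mu_n$ is the uniform distribution on $[n]$. Hence $\mu_n = \sigma p + (1-\sigma) q$, where $q := \tfrac{1}{1-\sigma}\,(\mu_n - \sigma p)$ is a bona fide probability distribution (nonnegative, total mass $1$). This mixture decomposition is the engine of the whole construction.

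The single-block coupling: for each $j \in [\ell]$ draw an independent coin $c_j \sim \mathrm{Ber}(\sigma)$; if $c_j = 1$ set $D_j \sim p$ and if $c_j = 0$ set $D_j \sim q$, all independently across $j$. By the mixture decomposition each $D_j$ is marginally uniform, and they are jointly independent, so (2) holds. To define $B$: if some coin is heads, let $J$ be the smallest index with $c_J = 1$ and put $B := D_J$; if every coin is tails, draw $B \sim p$ afresh and independently. Because $J$ is a function of the coin flips alone — not of the realized values $D_j$ — and the ``heads'' draws are i.i.d.\ $\sim p$ independently of the coins, $D_J$ is distributed as $p$ conditionally on $\{\exists j : c_j = 1\}$; combined with the fresh draw on the complementary event, $B \sim p$ unconditionally, which is (1). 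Finally, on the event that some coin is heads we have $B = D_J \in \{D_1, \dots, D_\ell\}$, and $\BP[\text{all coins tails}] = (1-\sigma)^\ell$, which gives (3) for $t=1$.

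I expect the only genuinely delicate point to be checking that $B$ has exact marginal law $p$: one has to observe that the selection index $J$ depends only on the coin flips, so conditioning both on the event $\{\exists j : c_j = 1\}$ and on the value of $J$ still leaves $D_J$ distributed exactly as $p$ (here it is essential that the conditional law of $D_j$ given $c_j = 1$ does not depend on $j$). Once that is in place, the remaining pieces — marginal uniformity of the $D_j$, independence across the $t$ blocks, and the union bound for property (3) — are routine.
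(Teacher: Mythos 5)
Your construction is correct, and it is the standard proof of this coupling lemma. Note that this paper does not actually prove \cref{lem:coupling}; it cites it from \cite{DBLP:conf/focs/HaghtalabRS21} (and its later simplifications), so there is no in-paper argument to compare against. Your mixture decomposition $\mu_n = \sigma p + (1-\sigma)q$ together with the independent $\mathrm{Ber}(\sigma)$ coins per uniform draw, selecting $B$ as the first ``heads'' draw (and a fresh independent draw from $p$ on the all-tails event), is precisely the argument given in the cited reference, and the step you flag as delicate --- that $J$ depends only on the coins, so conditioning on $\{J=j_0\}$ leaves $D_{j_0}\sim p$ --- is exactly the point that makes the marginal of $B$ come out to $p$. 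Taking an independent product over the $t$ blocks and union-bounding the $t$ all-tails events gives the claimed $t(1-\sigma)^\ell$ failure probability. No gaps.
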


\begin{proof}[Proof of \cref{lem:unif-sampling}]
    Let $(x_1 , \dots , x_m)$ be a strong $\sigma$-smooth Nash equilibrium of $G$. 
    Note that such an equilibrium exists by \cref{thm:existence}.
    Recall from the definition of strong $\sigma$-smooth Nash equilibrium $x_i$ is a $\sigma$-smooth distribution for each $i \in [m]$. 
Fix $t =  \frac{C_{\ref{lem:mplayer-approx}}m \log(16 m/ \delta \sigma)}{\ep^2}, \ell = \frac{\log(2tm / \delta )  }{\sigma} $ and set $k=t\ell$ as in the statement of the theorem.
Let $(B_1^j, D^j_{1,1}, \ldots, B^j_{1,\ell}, \ldots, B^j_t, D^j_{t,1}, \ldots, D^j_{t,\ell})$ be distributed according to the coupling of \cref{lem:coupling} instantiated with the smooth distribution $x_j$. 
In particular, $D_{a,b}^j \sim \Unif([n])$ are independent for $a \in [t], b \in [\ell]$. 
For any $j \in [m]$, $  \hat X^j $ has distribution which is identical to that of $\sum_{a=1}^t \sum_{b=1}^\ell e\subs{D_{a,b}^j}$. Moreover, noting that $B_1^j, \ldots, B_t^j \sim x_j$ are independent samples from $x_j$, if we let $\hat Y^j := \frac{1}{t} \sum_{i=1}^t e\subs{B_i^j}$, then \cref{lem:mplayer-approx} gives that with probability $1-\delta/2$, $\hat Y := (\hat Y^1, \ldots, \hat Y^m)$ is a weak $\ep$-approximate $\sigma$-smooth Nash equilibrium. 
Moreover, by \cref{lem:coupling} and a union bound over $j$, with probability $1-\delta/2$, it holds that $B_i^j \in \{ D_{i,1}^j, \ldots, D_{i,\ell}^j\}$ for each $j \in [m]$ and $i \in [t]$. 
In particular, with probability at least $1- \delta$, it holds that there exists a weak $\ep$-approximate $\sigma$-smooth Nash equilibrium, namely $\hat Y$, for which $\hat Y^j$ is $t$-uniform over $\supp(\hat X^j) $. 
\end{proof}

\subsection{Proof of  \cref{lem:test-deviations}}
To establish \cref{lem:test-deviations}, the sets $\MR_j \subset [n]$ of actions sampled in \cref{alg:query-equilibrium} need to satisfy certain properties, which will be shown to hold with high probability. \cref{def:good-sample} below specifies these properties.
\begin{definition}[Good collection]
  \label{def:good-sample}
  Given a game with payoffs $(A_1, \ldots, A_m)$, consider a collection of multisets $ \MR_j \subset [n]$ (for $j \in [m]$), where $|\MR_j| = N \in \BN$ for each $j$. To simplify notation below, we will write $\MR_j = \{ r_{j,1}, \ldots, r_{j, N} \}$. 
  For each strategy profile $ x = ( x_1, \ldots,  x_m)$ and $j \in [m]$, let $\pi_1(j,  x), \ldots, \pi_n(j,  x) \in [n]$ denote a permutation of $[n]$ so that $A_j(\pi_1(j,  x),  x_{-j})\geq \ldots \geq A_j(\pi_n(j,  x),  x_{-j})$. Then define
  \begin{align}
\MT(j,  x) := \left\{ i \in [N] \ : \ \exists i' \leq \sigma n \mbox{ s.t. } \pi_{i'}(j,  x) = r_{j,i} \right\}\nonumber.
  \end{align}
  In words, $\MT(j,  x)$ contains the indices of the actions in $\MR_j$ which are amongst the best $\sigma$-fraction of actions for player $j$ when other players play according to $ x_{-j}$. For $\ep \in (0,1)$ and a strategy profile $x \in (\Delta^n)^m$, we say that the collection $(\MR_j)_{j \in [m]}$ is a \emph{$\ep$-good collection with respect to $x$} if the following properties hold:
  \begin{enumerate}
  \item $||\MT(j,  x)| - N\sigma| \leq \ep \sigma N$ for each $j \in [m]$.\label{it:tnsigma}
  \item \label{it:topa-approx} For each $j \in [m]$,
    \begin{align}
\left| \frac{1}{|\MT(j,  x)|} \sum_{i \in \MT(j,  x)} A_j(r_{j,i},  x_{-j}) - \frac{1}{\sigma n} \sum_{i'=1}^{\sigma n} A_j(\pi_{i'}(j,  x),  x_{-j}) \right| \leq \ep\nonumber.
    \end{align}
  \end{enumerate}
\end{definition}

\begin{lemma}[Testing deviations to smooth strategies]
  \label{lem:test-deviations}
  Suppose we are given payoff matrices $A_1, \ldots, A_m : [n]^m \ra [0,1]$, a strategy profile $x \in (\Delta^n)^m$, and a collection of multisets $\MR_j \subset [n]$ which is an $\ep$-good collection with respect to $x$. Moreover suppose that $\hat A_j : [n]^m \ra [0,1]$ agrees with $A_j$ on $\MR_j \times \supp(x_{-j})$. Then for any $\delta \in (0,1), j \in [m]$, $\OptimizeDeviation(\MR, \hat A_j, j, x, \sigma)$ (\cref{alg:optimize-deviation}) outputs a real number $\hat v$ so that
  \begin{align}
\left| \hat v - \max_{x_j' \in \MK_{\sigma, n}} A_j(x_j', x_{-j}) \right| \leq \ep\nonumber.
  \end{align}
\end{lemma}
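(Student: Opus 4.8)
The plan is to observe that $\max_{x_j' \in \MK_{\sigma,n}} A_j(x_j', x_{-j})$ is exactly a ``top-$\sigma$-quantile average'' of the numbers $A_j(a, x_{-j})$, $a \in [n]$, and that the output $\hat v$ of \OptimizeDeviation is an empirical version of this average whose error is controlled precisely by the two conditions in the definition of an $\ep$-good collection.

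First I would record two simple facts. Since $A_j(x_j', x_{-j}) = \sum_{a=1}^n x_j'(a)\, A_j(a, x_{-j})$ is linear in $x_j'$, and (assuming for simplicity that $\sigma n$ and $\sigma N$ are integers) the vertices of $\MK_{\sigma,n}$ are the uniform distributions on the subsets of $[n]$ of size $\sigma n$, the maximum over $\MK_{\sigma,n}$ is attained by placing mass $1/(\sigma n)$ on the best $\sigma n$ actions; thus, writing $V^\star := \max_{x_j' \in \MK_{\sigma,n}} A_j(x_j', x_{-j})$ and letting $\pi_1(j,x), \ldots, \pi_n(j,x)$ be the sorting permutation from \cref{def:good-sample}, we have $V^\star = \frac{1}{\sigma n}\sum_{i'=1}^{\sigma n} A_j(\pi_{i'}(j,x), x_{-j})$. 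Second, because $\hat A_j$ agrees with $A_j$ on $\MR_j \times \supp(x_{-j})$ and the expectation $A_j(r_{j,i}, x_{-j})$ only involves payoff entries indexed by $\{r_{j,i}\}\times\supp(x_{-j})$, the quantities $\hat v_i$ computed in \OptimizeDeviation satisfy $\hat v_i = A_j(r_{j,i}, x_{-j})$ exactly; hence $\hat v$ is the average of the $\sigma N$ largest numbers among $\{A_j(r_{j,i}, x_{-j})\}_{i \in [N]}$.

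Next I would relate $\hat v$ to the set $\MT := \MT(j,x)$ of sampled indices whose action lies among the top $\sigma n$ actions. The combinatorial heart of the argument is that $A_j(r_{j,i}, x_{-j}) \ge A_j(\pi_{\sigma n}(j,x), x_{-j})$ for every $i \in \MT$, whereas $A_j(r_{j,i}, x_{-j}) \le A_j(\pi_{\sigma n}(j,x), x_{-j}) \le V^\star$ for every $i \notin \MT$ (the last inequality because $A_j(\pi_{\sigma n}(j,x),x_{-j})$ is the smallest of the $\sigma n$ values whose average is $V^\star$). Consequently the multiset $\{A_j(r_{j,i}, x_{-j}) : i \in \MT\}$ is precisely the $|\MT|$ largest values among all $N$ sampled ones, so the second property of an $\ep$-good collection says its average lies within $\ep$ of $V^\star$, while the first property says $|\,|\MT| - \sigma N\,| \le \ep \sigma N$. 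I would then compare $\hat v$ (the average of the top $\sigma N$ sampled values) to this average over $\MT$: splitting into the cases $|\MT| \ge \sigma N$ and $|\MT| < \sigma N$, and using only that all sampled values lie in $[0,1]$, that those outside $\MT$ lie in $[0,V^\star]$, and the elementary fact that the average of the top $k$ entries of a sorted list is non-increasing in $k$, a short computation gives $|\hat v - V^\star| \le O(\ep)$. The bound $\ep$ stated in the lemma then follows by absorbing the constant, e.g.\ by running the argument with an $(\ep/c)$-good collection for a suitable absolute constant $c$.

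The only place requiring care is this final case analysis. When $|\MT| < \sigma N$ the algorithm averages in some ``extra'' values — those of actions outside the top $\sigma n$ — and one should bound them by $V^\star$ (via $A_j(\pi_{\sigma n}(j,x),x_{-j}) \le V^\star$) rather than by the trivial bound $1$; when $|\MT| \ge \sigma N$ one instead bounds how much the average can grow in passing from the top $|\MT|$ entries to the top $\sigma N$ entries, using $|\MT| \le (1+\ep)\sigma N$. Neither step is hard, but tracking the constants cleanly — together with the harmless rounding of $\sigma n$ and $\sigma N$ to integers — is where the bulk of the work lies. (The parameter $\delta$ in the statement plays no role here; it enters only later, when showing that a randomly drawn collection is $\ep$-good with probability $1-\delta$.)
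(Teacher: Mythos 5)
Your proposal is correct and is essentially the paper's proof: identify $\hat v$ with the top-$\sigma N$ average of the exact values $A_j(r_{j,i},x_{-j})$ (using that $\hat A_j$ agrees with $A_j$ on $\MR_j\times\supp(x_{-j})$), observe that the multiset indexed by $\MT(j,x)$ coincides with the top $|\MT(j,x)|$ sampled values, and combine the two $\ep$-goodness properties to pass from the average over $\MT(j,x)$ to $\hat v$ and then to $V^\star$. The paper packages your final case analysis into a standalone estimate (\cref{lem:sorted-diff}: for a sorted $[0,1]$-valued sequence the top-$S$ and top-$S'$ averages differ by at most $|S-S'|/\max\{S,S'\}$), which shows the finer bound of the overflow values by $V^\star$ is unnecessary --- the raw $[0,1]$ range suffices --- and also reveals that the paper's own argument yields $2\ep$ rather than the stated $\ep$, so your remark about absorbing a constant into the goodness parameter applies to the source as well.
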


\begin{proof}[Proof of \cref{lem:test-deviations}]
        Recalling that $x$ is fixed, let us write $\bar A_j(r_j) := A_j(r_j, x_{-j})$ for each $r_j \in [n]$. We write $\MR_j = \{ r_{j,1}, \ldots, r_{j,N} \}$. Moreover, we abbreviate $\MT := \MT(j,x)$, $\pi := \pi(j,x)$, and write $T := |\MT|$. The fact that $(\MR_j)_{j \in [m]}$ is an $\ep$-good collection with respect to $x$ (in particular, \cref{it:topa-approx} of \cref{def:good-sample}) gives that
      \begin{align}
\left| \frac{1}{T} \sum_{i \in \MT} \bar A_j(r_{j,i}) - \frac{1}{\sigma n} \sum_{i'=1}^{\sigma n} \bar A_j(\pi_{i'}) \right| \leq \ep\label{eq:tsum-sigman}. 
      \end{align}

      Let $\nu_1, \ldots, \nu_N \in [N]$ denote a permutation of $[N]$ so that $\bar A_j(r_{j, \nu_1}) \geq \cdots \geq \bar A_j(r_{j, \nu_N})$. The definitions of $\MT$ and $\pi$ give that, as multisets, we have $\{ \bar A_j(r_{j,i}) :\ i \in \MT \} = \{ \bar A_j(r_{j,\nu_1}), \ldots, \bar A_j(r_{j, \nu_T}) \}$. Therefore,
      \begin{align}
      \left|\frac{1}{T} \sum_{i \in \MT} \bar A_j(r_{j,i}) - \frac{1}{\sigma N} \sum_{i=1}^{\sigma N} \bar A_j(r_{j, \nu_i}) \right| = \left| \frac 1T \sum_{i=1}^T \bar A_j(r_{j,\nu_i}) - \frac{1}{\sigma N} \sum_{i=1}^{\sigma N} \bar A_j(r_{j, \nu_i}) \right| \leq \frac{|T-\sigma N|}{\sigma N} \leq {\ep}\label{eq:nusum},
      \end{align}
      where the first inequality uses \cref{lem:sorted-diff} and the second inequality uses the fact that, since $(\MR_j)_{j\in [m]}$ is an $\ep$-good collection with respect to $x$, we have $|T - N \sigma | \leq \ep \sigma N $ (\cref{it:tnsigma} of \cref{def:good-sample}). 

      Since we have assumed that $\hat A_j(r) = A_j(r)$ for all $r \in \MR_j \times \supp(x_{-j}) \subset [n]^m$, it holds that, for each $i \in [N]$, $\bar A_j(r_{j,i}) = A_j(r_{j,i}, x_{-j}) = \hat A_j(r_{j,i}, x_{-j}) = \hat v_i$, where $\hat v_i$ is as defined in \cref{alg:optimize-deviation}. Thus, $(\tau_1, \ldots, \tau_N)$, as defined in \cref{alg:optimize-deviation}, is equal to $(\nu_1, \ldots, \nu_N)$, and we have
      \begin{align}
\frac{1}{\sigma N} \sum_{i=1}^{\sigma N} \bar A_j(r_{j, \nu_i}) = \frac{1}{\sigma N} \sum_{i=1}^{\sigma N} \hat \nu_{\tau_i}\label{eq:baraj-equal-hatv}.
      \end{align}
      
      Note that, by definition of $\pi$, we have $\frac{1}{\sigma n} \sum_{i'=1}^{\sigma n} \bar A_j(\pi_{i'}) = \max_{x_j' \in \MK_{\sigma, n}} \bar A_j(x_j') = \max_{x_j' \in \MK_{\sigma ,n}} A(x_j', x_{-j})$.
      Thus, combining \cref{eq:tsum-sigman,eq:nusum,eq:baraj-equal-hatv} yields that $\left| \hat v - \max_{x_j' \in \MK_{\sigma ,n}} A(x_j', x_{-j}) \right| \leq 2\ep$, which is what we wanted to show. %
    \end{proof}

\begin{lemma}[High-probability guarantee for good collection]
  \label{lem:highprob-good}
  There is a constant $C_{\ref{lem:highprob-good}}$ so that the following holds. 
Fix $k, t \in \BN$, $\ep \in (0,1/2)$, $N \geq \frac{C_{\ref{lem:highprob-good}} \cdot tm \log (k/\delta)}{\ep^2 \sigma^2}$, and suppose that $\MS_1, \ldots, \MS_m \subset[ n]$ satisfy $|\MS_j| \leq k$ for each $j \in [m]$. For each $j \in [m]$, suppose that $\MR_j$ is a set consisting of $N$ uniformly random elements of $[n]$, chosen with replacement. With probability at least $1-\delta$ over the choice of $\MR_j$, the collection $(\MR_j)_{j \in [m]}$ is an $\ep$-good collection with respect to each $t$-uniform strategy profile $ x$ which is supported on $\MS_1 \times \cdots \times \MS_m$.
\end{lemma}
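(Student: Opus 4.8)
The plan is to fix a single player $j \in [m]$ together with a single $t$-uniform strategy profile $x$ supported on $\MS_1 \times \cdots \times \MS_m$, bound the probability that either condition \cref{it:tnsigma} or condition \cref{it:topa-approx} of \cref{def:good-sample} fails for this particular pair $(j,x)$ (an event depending only on the random multiset $\MR_j$), and then take a union bound over all such pairs. Since any $t$-uniform distribution supported on a set of size at most $k$ can be written as $\frac1t \sum_{s=1}^t e_{a_s}$ for some tuple $(a_1,\dots,a_t)$ of elements of that set, there are at most $k^t$ of them per player, hence at most $m \cdot k^{tm}$ pairs $(j,x)$ to control. It therefore suffices to show that for each fixed $(j,x)$, the failure probability is at most $\delta/(m k^{tm})$ whenever $N$ meets the stated bound.

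So fix $(j,x)$, and for simplicity assume $\sigma n \in \BN$ (the rounding needed in general is harmless and can be absorbed into the constant). Let $G_j := \{\pi_1(j,x),\dots,\pi_{\sigma n}(j,x)\} \subseteq [n]$ be the set of the $\sigma n$ best actions of player $j$ against $x_{-j}$, and note that the right-hand side of \cref{it:topa-approx}, call it $\mu_j$, is exactly the average of the multiset $\{A_j(a,x_{-j}) : a \in G_j\}$. Since the entries $r_{j,1},\dots,r_{j,N}$ of $\MR_j$ are i.i.d.\ uniform on $[n]$, we have $\MT(j,x) = \{ i \in [N] : r_{j,i} \in G_j\}$, so $|\MT(j,x)| \sim \Bin(N,\sigma)$ with mean $N\sigma$; Hoeffding's inequality then bounds the probability that condition \cref{it:tnsigma} fails by $2\exp(-2\ep^2\sigma^2 N)$. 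For condition \cref{it:topa-approx}, the key observation is that, conditionally on the event $\{\MT(j,x) = S\}$ for a fixed $S \subseteq [N]$, the draws $(r_{j,i})_{i\in S}$ are i.i.d.\ uniform on $G_j$, so the values $(A_j(r_{j,i},x_{-j}))_{i\in S}$ are i.i.d., lie in $[0,1]$, and have mean $\mu_j$. Another application of Hoeffding's inequality shows that, conditionally on $|\MT(j,x)| = T \geq 1$, the average $\frac1T\sum_{i\in\MT(j,x)} A_j(r_{j,i},x_{-j})$ deviates from $\mu_j$ by more than $\ep$ with probability at most $2\exp(-2T\ep^2)$. On the event that condition \cref{it:tnsigma} holds we have $T \geq (1-\ep)\sigma N \geq \sigma N/2$ since $\ep < 1/2$, so the probability that condition \cref{it:tnsigma} holds but \cref{it:topa-approx} fails is at most $2\exp(-\ep^2\sigma N)$. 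Using $\sigma \leq 1$, the total failure probability for $(j,x)$ is at most $4\exp(-\ep^2\sigma^2 N)$.

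Finally, a union bound over the at most $m k^{tm}$ pairs $(j,x)$ gives overall failure probability at most $4 m k^{tm}\exp(-\ep^2\sigma^2 N)$. Since $\log(4 m k^{tm}/\delta) = O(tm\log(k/\delta))$, choosing $C_{\ref{lem:highprob-good}}$ to be a large enough absolute constant makes $\ep^2\sigma^2 N \geq \log(4 m k^{tm}/\delta)$ whenever $N \geq \frac{C_{\ref{lem:highprob-good}} tm\log(k/\delta)}{\ep^2\sigma^2}$, so the failure probability is at most $\delta$, as claimed. The step I expect to require the most care is the conditioning argument for \cref{it:topa-approx} — arguing precisely that, conditioned on which of the uniform draws fall into the top-$\sigma n$ set $G_j$, those draws are i.i.d.\ uniform on $G_j$, while keeping the index bookkeeping for $\MT(j,x)$ straight; once that is in place both parts are routine concentration, and the only remaining point is checking that the count $k^{t}$ of $t$-uniform distributions on a size-$k$ support is what produces the stated dependence of $N$ on the parameters.
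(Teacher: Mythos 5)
Your proof is correct and follows essentially the same approach as the paper's: Hoeffding/binomial concentration for $|\MT(j,x)|$, conditional Hoeffding for the empirical average over $\MT(j,x)$ (using that conditioned on $\MT(j,x)=S$, the draws $(r_{j,i})_{i\in S}$ are i.i.d.\ uniform on the top-$\sigma n$ set $G_j$), and a union bound over the at most $m k^{tm}$ pairs $(j,x)$. If anything, your bookkeeping is slightly more careful than the paper's, which appears to omit the extra factor of $m$ in the union bound (harmless, since it is absorbed into the constant $C_{\ref{lem:highprob-good}}$).
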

\begin{proof}[Proof of \cref{lem:highprob-good}]
  Let us fix any strategy profile $x$ and player  $j \in [m]$. Note that, over the randomness in $\MR_j$, we have $|\MT(j, x)| \sim \Bin(N, \sigma)$. By Hoeffding's inequality, as long as $N \geq \frac{C \log k^{tm}/\delta}{\ep^2 \sigma^2}$ for a sufficiently large constant $C$, it holds that $||\MT(j, x)| - N \sigma| \leq \ep \sigma N$ under some event $\ME^1_{j,x}$ which occurs with probability at least $1-\delta/(2k^{tm})$.

  For each $j \in [m]$, let us write $\MR_j = \{ r_{j,1}, \ldots, r_{j,N} \}$, where $r_{j,i} \in [n]$ for $i \in [N]$. For any $s \in [N]$, conditioned on the event $\{ |\MT(j, x)| = s \}$, the values $A_j(r_{j,i},  x_{-j})$, for $i \in \MT(j, x)$, are distributed i.i.d.~according to $\Unif( \{ A_j(\pi_1(j, x),  x_{-j}), \ldots, A_j(\pi_{\sigma n}(j,  x),  x_{-j}) \} )$. Thus, another application of Hoeffding's inequality yields that  that, conditioned on $\ME^1_{j,x}$ (which implies that $|\MT(j, x)| \geq \sigma N / 2$), with probability at least $1-\delta/ (2k^{tm})$, we have that
  \begin{align}
\left|\frac{1}{|\MT(j,x)|} \sum_{i \in \MT(j,x)} A_j(r_{j,i},  x_{-j}) - \frac{1}{\sigma n} \sum_{i'=1}^{\sigma n} A_j(\pi_{i'}(j,  x),  x_{-j}) \right| \leq \ep\label{eq:mtsum-hoeffding},
  \end{align}
  where we have used that $|\MT(j,x)| \geq \sigma N / 2 \geq \frac{C \log k^{tm}/\delta}{\ep^2}$, for a sufficiently large constant $C$. In particular, there is an event $\ME^2_{j,x}$ with $\Pr(\ME^2_{j,x}) \geq 1 - \delta/(2k^{tm})$, so that under $\ME^1_{j,x} \cap \ME^2_{j,x}$, \cref{eq:mtsum-hoeffding} holds.

  Note that the number of $t$-uniform strategy profiles $ x$ which are supported on $\MS_1 \times \cdots \times \MS_m$ is bounded above by $k^{tm}$. Let $\ME$ denote the intersection of $\ME^1_{j,x} \cap \ME^2_{j,x}$, over all $j \in [m]$, and $t$-sparse strategy profiles $x$ supported on $\MS_1 \times \cdots \times \MS_m$. By a union bound, we have that $\Pr(\ME) \geq 1-\delta$. By definition of $\ME$, it is evident that under $\ME$, we have that $(\MR_j)_{j \in [m]}$ is an $\ep$-good collection with respect to each $t$-uniform strategy profile $x$ supported on $\MS_1 \times \cdots \times \MS_m$. 
\end{proof}

\begin{lemma}
  \label{lem:sorted-diff}
  Suppose that $y_1 \geq \cdots \geq y_N$ is a non-increasing sequence of real numbers in $[0,1]$, of length $N \in \BN$. Then for any $S, S' \in [N]$, it holds that
  \begin{align}
\left| \frac 1S \sum_{i=1}^S y_i - \frac{1}{S'} \sum_{i=1}^{S'} y_i \right| \leq \frac{|S' - S|}{\max\{S, S' \}}.\nonumber
  \end{align}
\end{lemma}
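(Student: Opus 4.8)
\textbf{Proof plan for \cref{lem:sorted-diff}.} Without loss of generality assume $S \le S'$ (the statement is symmetric in $S, S'$). Write $a := \frac1S \sum_{i=1}^S y_i$ and $b := \frac{1}{S'}\sum_{i=1}^{S'} y_i$, so that the goal is to show $|a-b| \le \frac{S'-S}{S'}$. The plan is to first observe that $a \ge b$, and then bound $a - b$ from above.

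For the first point, I would split $b$ as a weighted average of the ``head'' and the ``tail'': $b = \frac{S}{S'} a + \frac{S'-S}{S'}\cdot\left(\frac{1}{S'-S}\sum_{i=S+1}^{S'} y_i\right)$, valid when $S < S'$ (and the claim is trivial when $S = S'$). Since the sequence is non-increasing, every term $y_i$ with $i > S$ is at most $y_S$, which in turn is at most the average $a$ of the first $S$ terms (as all of $y_1, \dots, y_S$ are $\ge y_S$). Hence the tail average is $\le a$, and so $b \le \frac{S}{S'} a + \frac{S'-S}{S'} a = a$, giving $a - b \ge 0$.

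For the upper bound, from the same decomposition, $a - b = a - \frac{S}{S'}a - \frac{1}{S'}\sum_{i=S+1}^{S'} y_i = \frac{S'-S}{S'} a - \frac{1}{S'}\sum_{i=S+1}^{S'} y_i$. Since every $y_i \ge 0$, the subtracted sum is nonnegative, so $a - b \le \frac{S'-S}{S'} a \le \frac{S'-S}{S'}$, using $a \le 1$. Combining with $a - b \ge 0$ yields $|a-b| \le \frac{S'-S}{S'} = \frac{|S'-S|}{\max\{S,S'\}}$, as desired.

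There is no real obstacle here — the only thing to be slightly careful about is handling the edge case $S = S'$ separately (trivial) so that the division by $S'-S$ in the tail-average rewriting is legitimate, and remembering that the bound uses both $y_i \in [0,1]$ (for $a \le 1$ and $y_i \ge 0$) and the monotonicity (for $a \ge b$).
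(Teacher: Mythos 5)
Your proof is correct and uses the same decomposition as the paper's: $a - b = \left(\frac{1}{S} - \frac{1}{S'}\right)\sum_{i=1}^S y_i - \frac{1}{S'}\sum_{i=S+1}^{S'} y_i$, bounding each piece by $\frac{S'-S}{S'}$ via $y_i \in [0,1]$. The only small difference is that you also invoke monotonicity to pin down the sign $a \ge b$; the paper skips this and bounds the absolute value directly from $y_i \in [0,1]$ alone, which shows the monotonicity hypothesis is actually unnecessary for this lemma.
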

\begin{proof}
  Without loss of generality we may assume that $S \leq S'$. Then
  \begin{align}
   \left|  \frac 1S \sum_{i=1}^S y_i - \frac{1}{S'} \sum_{i=1}^{S'} y_i \right|  = & \left| \left( \frac{1}{S} - \frac{1}{S'} \right) \cdot \sum_{i=1}^S y_i - \frac{1}{S'} \sum_{i=S+1}^{S'} y_i \right|
    \leq  \frac{S' - S}{S'} \nonumber,
  \end{align}
  where the second inequality uses that $y_i \in [0,1]$ for all $i \in [N]$. 
\end{proof}

\subsection{Proof of \cref{thm:query-equilibrium}}

\begin{proof}[Proof of \cref{thm:query-equilibrium}]
  Let $t, \ell, k$ be defined as in \cref{line:set-tl} of \cref{alg:query-equilibrium}. Then by \cref{lem:unif-sampling} with error probability $\delta/2$ and approximation parameter $\ep/4$, it holds that, under some event $\ME_1$ that occurs with probability $1-\delta/2$ over the choices of $\hat X_1, \ldots, \hat X_m$ in \cref{line:draw-hatx}, there is a $t$-uniform $\ep/4$-approximate $\sigma$-smooth Nash equilibrium supported on $\supp(\hat X_1) \times \cdots \times \supp(\hat X_m) = \MS_1 \times \cdots \times \MS_m$.

  Let $\ME_2$ be the event of \cref{lem:highprob-good} given parameters $k,t,\ep/4, N$, as well as the sets $\MS_1, \ldots, \MS_m \subset [n]$. In particular, $\Pr(\ME_2) \geq 1-\delta$ and, under $\ME_2$, the collection $(\MR_j)_{j \in [m]}$ is an $\ep/4$-good collection with respect to each $t$-uniform strategy profile $\hat x$ supported on $\MS_1 \times \cdots \times \MS_m$ (in particular, this is the set of strategy profiles in the for loop on \cref{line:s-hatx} of \cref{alg:query-equilibrium}). Thus, \cref{lem:test-deviations} guarantees that, under $\ME$, for each $\hat x$ in the for loop on \cref{line:s-hatx}, the output $\hat v_j$ of $\OptimizeDeviation(\MR_j, \hat A_j, x, \sigma)$ satisfies $| \hat v_j - \max_{x_j' \in \MK_{\sigma, n}} A_j(x_j, \hat x_{-j}) | \leq \ep/4$. Note that we hae used here that $\hat A_j$ agrees with $A_j$ on $\MR_j \times \supp(\hat x_{-j}) \subset \MR_j \times \prod_{j' \neq j} \MS_{j'}$, by \cref{line:set-ahat}. 
  Moreover, by \cref{line:set-atilde}, we have that $\tilde v_j = \tilde A_j(\hat x) = A_j(\hat x)$. Thus, under $\ME_2$, we have that, for each $j \in [m]$ and each strategy profile $\hat x$ considered in \cref{line:s-hatx}, 
  \begin{align}
\left| (\hat v_j - \tilde v_j) - \left( \max_{x_j' \in \MK_{\sigma, n}} A_j(x_j, \hat x_{-j}) - A_j(\hat x) \right) \right| \leq \ep/4\label{eq:hatv-close-deviation}.
  \end{align}
  Thus, under $\ME_1 \cap \ME_2$, there will be some $t$-uniform strategy profile $\hat x$ considered in the for loop on \cref{line:s-hatx} which is an $\ep/4$-approximate $\sigma$-smooth Nash equilibrium, and if such $\hat x$ is ever reached in the algorithm, we will have $\hat v_j - \tilde v_j \leq \ep/2 < \ep$. In particular, under $\ME_1 \cap \ME_2$, \cref{line:arbitrary-sp} will not be reached. Moreover, under $\ME_1 \cap \ME_2$, for any strategy profile $\hat x$ which is returned, by \cref{eq:hatv-close-deviation}, it must be a $3\ep/4$-approximate $\sigma$-smooth Nash equilibrium.

  It remains to  bound the query complexity of $\QueryEquilibrium$. The only queries are made in \cref{line:set-ahat,line:set-atilde}, for a total of $m \cdot \max\{k, N \} \cdot k^{m-1}$ queries. Using the values of $k,N$ set in \cref{line:set-tl,line:set-K}, the number of queries is therefore at most 
  \begin{align}
O \left( \frac{m^2 t \log(k/\delta)}{\ep^2 \sigma^2} \cdot (t\ell)^{m-1} \right) \leq O \left( m \cdot \left( \frac{m \log^2(m/(\delta \sigma \ep))}{\ep^2 \sigma} \right)^{m+1} \right)\nonumber.
  \end{align}
  \end{proof}

\section{Proofs for \cref{sec:eff-algs}}
\label{sec:proofs-weak-ub}

\subsection{Proofs from \cref{sec:weak-ub}}

\subsubsection{Proof of \cref{thm:eff-weak}}   \label{sec:eff-weak-proof}

\begin{proof}[Proof of \cref{thm:eff-weak}]
  Set $k = \frac{Cm \log(m/\sigma)}{\ep^2}$, where $C$ is the constant of \cref{lem:existence_k_unif}. The algorithm is straightforward: simply iterate over all possible $k$-uniform strategy profiles, check if each one is a weak $\ep$-approximate $\sigma$-smooth Nash equilibrium, and terminate once one is found. \cref{lem:existence_k_unif} guarantees that this algorithm will eventually succeed. Moreover, the number of $k$-uniform strategies for each player is bounded above by $n^k$, so the number of $k$-uniform strategy profiles is at most $n^{mk}$.

  To bound the running time of the algorithm, it suffices to show that, given a $k$-uniform strategy profile $\hat x = (\hat x_1, \ldots, \hat x_m)$, it may be efficient checked whether $\hat x$ is a weak $\ep$-approximate $\sigma$-smooth Nash equilibrium, i.e., whether
  \begin{align}
\max_{j \in [m]} \left\{ \max_{x_j' \in \MK_{\sigma, n}} A_j(x_j', \hat x_{-j}) - A_j(\hat x) \right\} \leq \ep\label{eq:eq-ver}.
  \end{align}
  But note that for each $j \in [m]$ and $i \in [n]$, the value $A_j(e_i, \hat x_{-j})$ can be computed in time $\poly(k, n^{O(m)}) \leq \poly(1/\ep, \log 1/\sigma, n^m)$. Moreover, the maximizer $x_j' \in \MK_{\sigma, n}$ of $A_j(x_j', \hat x_{-j})$ is simply the distribution which splits its mass uniformly over the $\sigma n$ largest indices $i \in [n]$ of $A_j(e_i, \hat x_{-j})$. Thus $\max_{x_j' \in \MK_{\sigma, n}} A_j(x_j', \hat x_{-j})$ can be computed in time $\poly(1/\ep, \log 1/\sigma, n^m)$, and clearly the same holds for $A_j(\hat x)$. Thus the left-hand side of \cref{eq:eq-ver} can be efficiently computed, and it follows that the overall algorithm runs in time $n^{O(mk)}$, as desired.
\end{proof}

\subsubsection{Proof of \cref{thm:weak-nash-constant}}
\label{sec:wn-constant-proof}

\begin{proof}[Proof of \cref{thm:weak-nash-constant}]
  \cref{thm:query-equilibrium} gives that $\QueryEquilibrium((A_1, \ldots, A_m), \sigma, \ep, \delta)$ outputs a weak $\ep$-approximate $\sigma$-smooth Nash equilibrium of the game $G$ with probability at least $1-\delta$. It only remains to bound the running time of $\QueryEquilibrium$, which is dominated by the for loop in \cref{line:s-hatx}. The number of iterations of this loop is $k^{tm}$, and each iteration calls \OptimizeDeviation, which requires $O(N \log N)$ time. (The parameters $k,t,N$ are defined on \cref{line:set-tl,line:set-K} of \QueryEquilibrium.) Altogether, the computational complexity may be bounded above by
  \begin{align}
O \left( N \log N \cdot k^{tm} \right) \leq \left( \frac{m \log(1/\delta)}{\sigma \ep} \right)^{O\left( \frac{m^2 \log(m/\delta \sigma)}{\ep^2} \right)}\nonumber.
  \end{align}
  
\end{proof}

\subsection{Proofs from \cref{sec:strong-ub}}
\label{sec:strong-proofs}
In this section, our main objective is to prove \cref{thm:alg-strong}.

\subsubsection{Proof for Strong Smooth Nash Equilibria in 2-player Games}
\label{sec:strong-2player-proof} 
As a warm-up, we begin with the case of $m=2$ players, for which we can establish that strong $\sigma$-smooth Nash equilibria may be found in essentially the same time as is guaranteed by \cref{thm:eff-weak} for weak equilibria:
\begin{theorem}[Polynomial-time algorithm for strong smooth Nash: 2-player version]
  \label{thm:strong-2player}
Let $A, B \in [0,1]^{n \times n}$ denote the payoff matrices of an 2-player normal-form game $G$. Then for any $\sigma, \ep \in (0,1)$, the algorithm $ \BimatrixStrongSmooth(n, \sigma, \ep)$ (\cref{alg:bimatrix-strong}) runs in time $n^{O \left( \frac{\log 1/\sigma}{\ep^2}\right)}$ and outputs a \textbf{strong} $\ep$-approximate $\sigma$-smooth Nash equilibrium.
\end{theorem}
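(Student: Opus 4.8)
The plan is to prove \cref{thm:strong-2player} by establishing three things: \textbf{(i)} for at least one $k$-uniform strategy profile $(\hat x,\hat y)$ that is a weak $\ep_0$-approximate $\sigma$-smooth Nash equilibrium, the feasibility program \cref{eq:2player-program} is solvable; \textbf{(ii)} whenever $(\hat x,\hat y)$ is a weak $\ep_0$-approximate $\sigma$-smooth Nash equilibrium, \emph{every} feasible point $(x,y)$ of \cref{eq:2player-program} is a \emph{strong} $\ep$-approximate $\sigma$-smooth Nash equilibrium; and \textbf{(iii)} $\BimatrixStrongSmooth$ can be implemented within the claimed time bound. Given these, correctness follows: the outer loop is guaranteed (by (i)) to reach some profile for which $\SolveLPFeasibility$ succeeds, and whichever such profile is reached first, the returned $(x,y)$ is valid by (ii); together with (iii) this proves the theorem.

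\textbf{Feasibility (i).} By \cref{thm:existence} there is a strong $0$-approximate $\sigma$-smooth Nash equilibrium $(x^\st,y^\st)$, which in particular lies in $\MK_{\sigma,n}\times\MK_{\sigma,n}$ and is also a weak $\sigma$-smooth Nash equilibrium. First I would apply \cref{lem:mplayer-approx} with $m=2$, approximation parameter $\ep_0$, and a constant failure probability (chosen so that the value of $k$ fixed in $\BimatrixStrongSmooth$ meets the hypothesis of that lemma, absorbing the constant into $C_{\ref{lem:mplayer-approx}}$ and using $\log(1/\sigma)=\Theta(\log(1/(\text{const}\cdot\sigma)))$). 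With positive probability the resulting random $k$-uniform profile $(\hat x,\hat y)$ — obtained by drawing $k$ i.i.d.\ actions from $x^\st$ and from $y^\st$ — is simultaneously a weak $\ep_0$-approximate $\sigma$-smooth Nash equilibrium (so the algorithm does enter the inner block for it) and satisfies the estimates \cref{eq:s-unif-close}, which in matrix form read $|(x^\st)^\t A y^\st-\hat x^\t A\hat y|\le\ep_0/2$, $|(x^\st)^\t B y^\st-\hat x^\t B\hat y|\le\ep_0/2$, $\sup_{x'\in\MK_{\sigma,n}}|(x')^\t A\hat y-(x')^\t A y^\st|\le\ep_0/2$, and $\sup_{y'\in\MK_{\sigma,n}}|\hat x^\t B y'-(x^\st)^\t B y'|\le\ep_0/2$. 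For this $(\hat x,\hat y)$ I claim $(x,y)=(x^\st,y^\st)$ is feasible for \cref{eq:2player-program}: \cref{eq:2pp-0} holds since $(x^\st,y^\st)$ is a strong equilibrium; \cref{eq:2pp-4} and \cref{eq:2pp-2} are immediate from the two $\sup$-bounds; \cref{eq:2pp-1} follows by a single triangle inequality from the first bound and the third bound evaluated at $x'=x^\st$; and \cref{eq:2pp-3} follows similarly from the second bound and the fourth bound evaluated at $y'=y^\st$ (each time using $\ep_0/2+\ep_0/2=\ep_0$).

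\textbf{Soundness (ii).} Let $(\hat x,\hat y)$ be a weak $\ep_0$-approximate $\sigma$-smooth Nash equilibrium and $(x,y)$ feasible for \cref{eq:2player-program}; then $x,y\in\MK_{\sigma,n}$ by \cref{eq:2pp-0}. For the $x$-player, chain: $\max_{x'\in\MK_{\sigma,n}}(x')^\t A y \le \max_{x'\in\MK_{\sigma,n}}(x')^\t A\hat y+\ep_0$ by \cref{eq:2pp-4}; $\le \hat x^\t A\hat y+2\ep_0$ by the weak-equilibrium property of $(\hat x,\hat y)$; $\le x^\t A\hat y+3\ep_0$ by \cref{eq:2pp-1}; $\le x^\t A y+4\ep_0$ by \cref{eq:2pp-4} applied at the feasible point $x'=x$. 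Since $4\ep_0=\ep$, the $x$-player gains at most $\ep$ from any $\sigma$-smooth deviation. Symmetrically, using \cref{eq:2pp-2}, the weak-equilibrium property, \cref{eq:2pp-3}, and \cref{eq:2pp-2} at $y'=y$, one gets $\max_{y'\in\MK_{\sigma,n}}x^\t B y'\le x^\t B y+\ep$. Hence $(x,y)$ is a strong $\ep$-approximate $\sigma$-smooth Nash equilibrium.

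\textbf{Running time (iii), and the main obstacle.} The number of $k$-uniform strategies of one player is $n^{O(k)}$, so the outer loop has at most $n^{O(k)}=n^{O(\log(1/\sigma)/\ep^2)}$ iterations. Checking whether a given $k$-uniform $(\hat x,\hat y)$ is a weak $\ep_0$-approximate $\sigma$-smooth Nash equilibrium costs $\poly(n)$ time exactly as in the proof of \cref{thm:eff-weak}: each player's best smooth response to the opponent is the distribution spreading mass $1/(\sigma n)$ over its $\sigma n$ best pure responses, obtained by sorting $n$ numbers. The delicate point — where I expect the real work to be — is solving the feasibility program \cref{eq:2player-program} in $\poly(n)$ time despite \cref{eq:2pp-4} and \cref{eq:2pp-2} being quantified over the continuum $\MK_{\sigma,n}$. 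Here one runs the ellipsoid method with a $\poly(n)$-time separation oracle: \cref{eq:2pp-0}, \cref{eq:2pp-1}, \cref{eq:2pp-3} are finitely many linear constraints checked directly, while a violation of \cref{eq:2pp-4} (resp.\ \cref{eq:2pp-2}), if present, is exhibited by maximizing the linear functional $x'\mapsto(x')^\t A(y-\hat y)$ and its negation (resp.\ the analogous functionals in $y'$ for $B$) over $\MK_{\sigma,n}$ — again a sorting computation — and returning the maximizing vertex, which supplies a hyperplane, linear in $(x,y)$, separating $(x,y)$ from the feasible set. The remaining subtlety is guaranteeing that the ellipsoid method actually returns a point rather than running forever: I would handle this by first perturbing the witness $(x^\st,y^\st)$ slightly toward the uniform distribution (which keeps it feasible while pushing it into the relative interior of $\MK_{\sigma,n}\times\MK_{\sigma,n}$ by an inverse-polynomial margin, at the cost of an arbitrarily small change in payoffs), so that the feasible set of the relevant LP contains an inverse-polynomial-radius ball; alternatively, one relaxes the right-hand sides by $O(\ep_0)$ and invokes the weak-feasibility version of the ellipsoid method, which still suffices for (ii) after adjusting constants. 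Making this robustness and bit-complexity argument precise is the main technical obstacle; the rest is triangle-inequality bookkeeping.
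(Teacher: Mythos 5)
Your proposal is correct and follows essentially the same route as the paper: use \cref{lem:mplayer-approx} with a strong equilibrium $(x^\st,y^\st)$ as input to obtain a weak $k$-uniform profile $(\hat x,\hat y)$ for which $(x^\st,y^\st)$ certifies feasibility of \cref{eq:2player-program}, then argue via triangle inequalities that any feasible $(x,y)$ is a strong $4\ep_0$-approximate equilibrium, and run the ellipsoid method with a sorting-based separation oracle. One minor caution on your first proposed fix for the full-dimensionality issue: pushing $(x^\st,y^\st)$ toward uniform does not by itself yield an inverse-polynomial-radius ball inside $\MK_{\sigma,n}\times\MK_{\sigma,n}$, because the simplex equality $\sum_i x_i=1$ forces the feasible region onto a lower-dimensional affine subspace with zero volume; the paper instead relaxes the equality to $1-n^{-C}\le\sum_i x_i\le 1+n^{-C}$ (and slightly loosens the $\ep_0$ bounds), then projects the resulting point back onto $\MK_{\sigma,n}$. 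Your alternative suggestion of invoking the weak-feasibility ellipsoid with relaxed right-hand sides is in the same spirit and works with the same caveat.
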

\begin{proof}[Proof of \cref{thm:strong-2player}]
  Set $\ep_0 = \ep/4$, and $k = \frac{2C_{\ref{lem:mplayer-approx}} \log(2/\sigma)}{\ep_0^2}$, where $C_{\ref{lem:mplayer-approx}}$ is the constant of \cref{lem:mplayer-approx}. The algorithm \BimatrixStrongSmooth is given in \cref{alg:bimatrix-strong}.  %
To prove the theorem, we need to establish three facts: first, that for some $k$-uniform strategy profile $(\hat x, \hat y)$ which is a weak smooth Nash equilibrium, the program \cref{eq:2player-program} in \BimatrixStrongSmooth will be feasible, and second, that any feasible solution of \cref{eq:2player-program} (given that $(\hat x, \hat y)$ is a weak smooth Nash equilibrium) is an fact a strong smooth Nash equilibrium. Third, we need to show that the algorithm's running time is bounded appropriately. 

  \paragraph{Feasility of the program \cref{eq:2player-program}.} By \cref{lem:mplayer-approx} applied to the present game $G$ with $\ep = \ep_0$ and given as input a \emph{strong} 0-approximate $\sigma$-smooth Nash equilibrium $(x,y) \in (\Delta^n)^2$ (which exists by \cref{thm:existence}), there is some $k$-uniform strategy profile $(\hat x, \hat y)$ satisfying the following inequalities:
  \begin{align}
    | \hat x^\t A \hat y - x^\t A y | \leq \frac{\ep_0}{2}, \qquad  \sup_{x' \in \MK_{\sigma, n}} | (x')^\t A \hat y - (x')^\t Ay | \leq \frac{\ep_0}{2} \label{eq:a-strong}\\
    |\hat x^\t B \hat y -x^\t B y | \leq \frac{\ep_0}{2}, \qquad \sup_{y' \in \MK_{\sigma, n}} | \hat x^\t A y' - x^\t A y' | \leq \frac{\ep_0}{2}\label{eq:b-strong},
  \end{align}
  which must therefore be a weak $\ep_0$-approximate $\sigma$-smooth Nash equilibrium. We will show that the strategy profile $(x,y)$, satisfies all constraints of \cref{eq:2player-program}. First, since $(x,y)$ was chosen above to be a strong smooth Nash equilibrium, we certainly have $x,y \in \MK_{\sigma, n}$. Next, we compute
  \begin{align}
|x^\t A \hat y -\hat x^\t A \hat y| \leq |\hat x^\t A \hat y - x^\t A y | + |x^\t A \hat y - x^\t Ay| \leq \ep_0\nonumber,
  \end{align}
  where the second inequality uses \cref{eq:a-strong}; this establishes that $(x,y)$ satisfies \cref{eq:2pp-1}. Next, the second inequality in \cref{eq:a-strong} establishes that \cref{eq:2pp-4} is also satisfied by $(x,y)$. That \cref{eq:2pp-2,eq:2pp-3} are satisfied as well follows from symmetric arguments. This completes the proof that the program \cref{eq:2player-program} is feasible for some weak $\ep_0$-approximate $\sigma$-smooth Nash equilibrium $(\hat x, \hat y)$ (namely, the satisfying solution is strong equilibrium $(x,y)$ that we have chosen). 

  \paragraph{Correctness of solutions of the program \cref{eq:2player-program}.} Suppose $(x,y)$ solves \cref{eq:2player-program} for some $(\hat x, \hat y)$ which is a weak $\ep_0$-approximate $\sigma$-smooth Nash equilibrium. Then we have
  \begin{align}
| x^\t Ay - \hat x^\t A \hat y| \leq |x^\t Ay - x^\t A \hat y| + |x^\t A \hat y - \hat x^\t A \hat y| \leq 2\ep_0\label{eq:xy-hatxy-correct},
  \end{align}
  where the second inequality uses \cref{eq:2pp-4} with $ x' = x \in \MK_{\sigma, n}$ as well as \cref{eq:2pp-1}. Then we have
  \begin{align}
    \max_{\bar x \in \MK_{\sigma, n}} \bar x^\t Ay - x^\t Ay \le & \max_{\bar x \in \MK_{\sigma, n}}\left\{\bar x^\t A \hat y - x^\t Ay\right\} + \max_{\bar x \in \MK_{\sigma, n}} |\bar x^\t Ay - \bar x^\t A \hat y | \nonumber\\
    \leq & \max_{\bar x \in \MK_{\sigma, n}} \bar x^\t A \hat y - \hat x^\t A \hat y + 3\ep_0\leq 4\ep_0\nonumber,
  \end{align}
  where the second inequality uses \cref{eq:2pp-4} and \cref{eq:xy-hatxy-correct}, and the final inequality uses the fact that $(\hat x, \hat y)$ is a weak $\ep_0$-approximate $\sigma$-smooth Nash equilibrium. 
  In a symmetric manner it holds that $\max_{\bar y \in \MK_{\sigma, n}} x^\t A \bar y - x^\t Ay \le 3\ep_0$, thus establishing that $(x, y)$ is a $4\ep_0$-approximate $\sigma$-smooth Nash equilibrium.

  \paragraph{Bounding the running time.} Note that the outer loop of the algorithm will iterate through at most $n^k$ strategy profiles $(\hat x, \hat y)$. The argument in the proof of \cref{thm:eff-weak} establishes that whether $(\hat x, \hat y)$ is a weak $\ep_0$-approximate $\sigma$-smooth Nash equilibrium may be checked in time  $\poly(1/\ep_0, \log 1/\sigma, n)$. Moreover, feasibility of the linear program \cref{eq:2player-program} can be checked in time $\poly( n)$ using the ellipsoid algorithm, assuming that $\ep_0 \geq 1/n$. We refer to the proof of \cref{thm:alg-strong} for a detailed explanation of how this is done.  %
  Altogether, the algorithm runs in time $n^{O(k)} = n^{O \left( \frac{\log 1/\sigma}{\ep^2} \right)}$, as desired.
\end{proof}

\subsubsection{Proof for strong smooth Nash equilibria in $m$-player games}
\label{sec:proof-mplayer-approx-strong}

\begin{proof}[Proof of \cref{lem:mplayer-approx-strong}]
  For each $j \in [m]$ and $s \in [k]$, let $B_{j,s} \in [n]$ denote the random variable defined by $\BP(B_{j,s} = i) = x_{j,i}$. For $b \in [n]$, let $e_b \in \BR^n$ denote the corresponding standard basis vector. We define random $k$-uniform vectors $\hat X_1, \ldots, \hat X_m$, by $\hat X_j = \frac 1k \sum_{s=1}^k e_{B_{j,s}}$, for $j \in [m]$.

  Fix any values of $j,\ell \in [m]$. For any values of $i_1, \ldots, i_{\ell-1} \in [n]$, as well as any values of $\hat x_{\ell+1}, \ldots, \hat x_{m} \in \BR^n$, Hoeffding's inequality gives that the random variable $Z := A_j(i_{1:\ell-1}, \hat X_\ell, \hat x_{\ell+1:m})$ satisfies $\E[\exp(\lambda (Z - \E[Z]))] \leq \exp(\lambda^2/(8k))$ for all $\lambda \in \BR$. Since $\E[Z] = A_j(i_{1:\ell-1}, x_\ell, \hat x_{\ell+1:m})$, it follows that, for $\lambda \in \BR$,
  \begin{align}
\E[\exp ( \lambda \cdot (A_j(i_{1:\ell-1}, \hat X_\ell, \hat x_{\ell+1:m}) - A_j(i_{1:\ell-1}, x_\ell, \hat x_{\ell+1:m})))] \leq \exp(\lambda^2/(8k))\nonumber.
  \end{align}
  For each possible value of $\hat x_{\ell+1:m}$, we now apply \cref{lem:smooth-class-lemma} with $N = n^{\ell-1}, U_{i_{1:\ell-1}} = A_j(i_{1:\ell-1}, \hat X_\ell, \hat x_{\ell+1:m}) - A_j(i_{1:\ell-1}, x_\ell, \hat x_{\ell+1:m})$, and $c = 1/\sqrt{2k}$, as well as to the negation of $U_{i_{1:\ell-1}}$. Then the lemma gives that for any fixed $\hat x_{\ell+1:m}$, for  $\delta \in (0,1)$,
  \begin{align}
\BP \left( \max_{x_{1:\ell-1}' \in \MK_{\sigma^{\ell-1}, [n]^{\ell-1}}} \left| A_j(x'_{1:\ell-1}, \hat X_\ell, \hat x_{\ell+1:m}) - A_j(x'_{1:\ell-1}, x_\ell, \hat x_{\ell+1:m})\right| > \sqrt{\frac{2m \log(2/(\delta \sigma))}{k}} \right) \leq \delta\nonumber.
  \end{align}
  Using the fact that the random variables $\hat X_1, \ldots, \hat X_m$ are independent, by averaging the above over $\hat x_{\ell+1} \sim \hat X_{\ell+1}, \ldots, \hat x_m \sim \hat X_m$, we obtain
    \begin{align}
\BP \left( \max_{x_{1:\ell-1}' \in \MK_{\sigma^{\ell-1}, [n]^{\ell-1}}} \left| A_j(x'_{1:\ell-1}, \hat X_\ell, \hat X_{\ell+1:m}) - A_j(x'_{1:\ell-1}, x_\ell, \hat X_{\ell+1:m})\right| > \sqrt{\frac{2m \log(2/(\delta \sigma))}{k}} \right) \leq \delta\label{eq:all-l-highprob},
    \end{align}
    where the probability is now over the (independent) draws of the random variables $\hat X_\ell, \ldots, \hat X_m$.

    An analogous argument applied to the random variables $A_j(i_{1:\ell-1,j}, \hat X_\ell, \hat x_{\ell+1:m,-j})$, for each possible choice of $i_{1:\ell-1,j} \in [n]^\ell$ and $\hat x_{\ell+1:m,-j} \in (\Delta^n)^{m-\ell-1}$, gives that
    \begin{align}
\BP \left( \max_{x_{1:\ell-1,j}' \in \MK_{\sigma^{\ell-1}, [n]^{\ell-1}}} \left| A_j(x'_{1:\ell-1,j}, \hat X_\ell, \hat X_{\ell+1:m,-j}) - A_j(x'_{1:\ell-1,j}, x_\ell, \hat X_{\ell+1:m,-j})\right| > \sqrt{\frac{2m \log(2/(\delta \sigma))}{k}} \right) \leq \delta\label{eq:one-more-highprob}.
    \end{align}
    We now choose $\delta = 1/(8m^2)$. Let $C_0 > 0$ be a sufficiently large constant so that choosing $k = \frac{C_0 m \log(m/\sigma)}{\ep^2}$ yields $\sqrt{\frac{2m \log(2/(\delta \sigma))}{k}} \leq \ep$. Let $C_{\ref{lem:mplayer-approx-strong}} := \max\{ C_0, C_{\ref{lem:mplayer-approx}} \}$, where $C_{\ref{lem:mplayer-approx}}$ was defined in \cref{lem:mplayer-approx}. 

    By a union bound, we have that \cref{eq:all-l-highprob,eq:one-more-highprob} are satisfied for all $j,\ell \in [m]$ with probability at least $3/4$. Moreover, \cref{lem:mplayer-approx} and our choice of $C_{\ref{lem:mplayer-approx-strong}}$ gives that with probability at least $3/4$ over the draw of $\hat X_1, \ldots, \hat X_m$, the strategy profile $\hat X := (\hat X_1, \ldots, \hat X_m)$ is a weak $\ep$-approximate $\sigma$-smooth Nash equilibrium of $G$. Thus, by another union bound, with probability at least $1/2$ over the draw of $\hat X$, choosing $\hat x = \hat X$ satisfies the requirements of \cref{it:weaknash,it:approx-strong} of the lemma statement. Moreover, it is evident that \cref{it:kuniform} is satisfied with probability 1, thus completing the proof of the lemma.
\end{proof}

\begin{proof}[Proof of \cref{thm:alg-strong}]
  Set $\ep_0 = \ep/(2m)$ and $k = \frac{4 C_{\ref{lem:mplayer-approx-strong}}m \log(m/\sigma)}{\ep_0^2}$, where $C_{\ref{lem:mplayer-approx-strong}}$ is the constant of \cref{lem:mplayer-approx-strong}. The claimed algorithm is given by \GeneralStrongSmooth (\cref{alg:general-strong}). The algorithm iterates through all $k$-uniform strategy profiles $\hat x$ (\cref{line:multiplayer-kuniform}). For each one which is a weak $\ep_0$-approximate $\sigma$-smooth Nash equilibrium of the game $G$ specified by $(A_1, \ldots, A_m)$, the algorithm solves a linear program for each $\ell \in [m]$, specified by \cref{eq:multiplayer-program} (recall that the notation used in \cref{eq:multiplayer-program} was defined in \cref{sec:strong-ub}). This linear program searches for $x_\ell \in \MK_{\sigma, n}$ for each $\ell \in [m]$ which ``behaves similarly'' to $\hat x_\ell$ with respect to each player $j$'s payoff function (constraint \cref{eq:mpp-1}) as well as with respect to certain sets of distributions over certain players' strategies (constraint \cref{eq:mpp-2}). The particular structure of these deviations is chosen to ensure that a hybrid argument, detailed below, carries through. If \GeneralStrongSmooth ever finds a feasible solution $x_\ell^\st$ to all $m$ linear programs, it returns it. 
  \begin{algorithm}[t]
  \caption{$\GeneralStrongSmooth((A_1, \ldots, A_m), n, \sigma, \ep)$: compute strong smooth equilibria of $m$-player game}
  \label{alg:general-strong}
  \begin{algorithmic}[1]\onehalfspacing
    \State Set $\ep_0 = \ep/(2m)$ and $k = \frac{4 C_{\ref{lem:mplayer-approx-strong}}m \log(m/\sigma)}{\ep_0^2}$.
    \For{Each $k$-uniform strategy profile $\hat x = (\hat x_1, \ldots, \hat x_m)$}\label{line:multiplayer-kuniform}
    \If{$\hat x$ is a weak $\ep_0$-approximate $\sigma$-smooth Nash equilibrium of $(A_1, \ldots, A_m)$:}
    \State Set $x_1^\st, \ldots, x_m^\st \gets \perp$. 
    \State \multiline{For $\ell = 1,2,\ldots, m$, 
      solve the following feasiblity linear program for $x_\ell \in \BR^n$, using the ellipsoid algorithm:} %
        \begin{subequations}
          \label{eq:multiplayer-program}
          \begin{align}
           & \mbox{ Find $x_\ell \in \BR^n$}:\
             x_\ell \in \MK_{\sigma, n}\label{eq:mpp-0}\\
            & |A_j(x_{1:\ell-1}', x_\ell, \hat x_{\ell+1:m}) - A_j(x_{1:\ell-1}', \hat x_\ell, \hat x_{\ell+1:m}) | \leq \ep_0 \quad \forall j \in [m],\ x_{1:\ell-1}' \in \MK_{\sigma^{\ell-1}, [n]^{\ell-1}} \label{eq:mpp-1}\\
            & |A_j(x_{1:\ell-1,j}', x_\ell, \hat x_{\ell+1:m, -j}) - A_j(x_{1:\ell-1,j}', \hat x_\ell, \hat x_{\ell+1:m,-j})| \leq \ep_0\quad  \forall j > \ell, x_{1:\ell-1,j}' \in \MK_{\sigma^\ell, [n]^\ell}\label{eq:mpp-2}.
          \end{align}
        \end{subequations}
        \If{the program \cref{eq:multiplayer-program} is feasible for each $\ell$}
        \State Let $x_\ell^\st$ denote a feasible solution for each $\ell \in [m]$.
        \State \Return $(x_1^\st, \ldots, x_m^\st)$.
        \EndIf
      \EndIf
      \EndFor
\end{algorithmic}
\end{algorithm}

  To complete the proof of the theorem, we need to establish three facts: first, that for some $k$-uniform strategy profile $\hat x$ which is a weak smooth Nash equilibrium, the program \cref{eq:multiplayer-program} will be feasible for each $\ell \in [m]$; second, that if $x_\ell^\st$ are feasible solutions of \cref{eq:multiplayer-program} for the respective values of $\ell \in [m]$ (given that $\hat x$ is a weak smooth Nash equilibrium), then $(x_1^\st, \ldots, x_m^\st)$ is in fact a strong smooth Nash equilibrium. Finally, we need to bound the running time of the algorithm.

  \paragraph{Feasibility of the program \cref{eq:multiplayer-program}.} \cref{lem:mplayer-approx-strong}, applied with $\ep = \ep_0$, guarantees that there is some $k$-uniform strategy profile $\hat x$ which is a weak $(\ep_0/2)$-approximate $\sigma$-smooth Nash equilibrium of $G$, so that the program \cref{eq:multiplayer-program} is feasible for each $\ell \in [m]$. In particular, the feasible solution guaranteed by \cref{lem:mplayer-approx-strong} is the strong $\sigma$-smooth Nash equilibrium $x$ passed as input to the lemma (which always exists by \cref{thm:existence}), and the particular $k$-uniform strategy profile $\hat x$ for which such $x$ exists is the one whose existence is guaranteed by \cref{lem:mplayer-approx-strong}. 

  \paragraph{Correctness of solutions of the program \cref{eq:multiplayer-program}.} Suppose that $x_1^\st, \ldots, x_m^\st$ are feasible solutions of \cref{eq:multiplayer-program} for $\ell = 1, 2, \ldots, m$, given that $\hat x$ is a weak $\ep_0$-approximate $\sigma$-smooth Nash equilibrium. We may now compute, for each $j \in [m]$, 
  \begin{align}
     |A_j(x_1^\st, \ldots, x_m^\st) - A_j(\hat x_1, \ldots, \hat x_m)| 
    \leq & \sum_{\ell=1}^m | A_j(x_{1:\ell-1}^\st, \hat x_{\ell:m}) - A_j(x_{1:\ell}^\st, \hat x_{\ell+1:m})|\nonumber\\
    = & \sum_{\ell=1}^m |A_j(x_{1:\ell-1}^\st, \hat x_\ell, \hat x_{\ell+1:m}) - A_j(x_{1:\ell-1}^\st, x_\ell^\st, \hat x_{\ell+1:m})|\leq m\ep_0\label{eq:aj-xstar-xhat},
  \end{align}
  where the first inequality uses the triangle inequality and the second inequality follows from the fact that $x_\ell^\st$ satisfies \cref{eq:mpp-1}, together with the fact that $\prod_{i=1}^{\ell-1} x_i^\st \in \prod_{i=1}^{\ell-1} \MK_{\sigma, n} \subset \MK_{\sigma^{\ell-1}, [n]^{\ell-1}}$. %

  In a similar manner, for each $j \in [m]$ and each $\bar x_j \in \MK_{\sigma, n}$, we have
  \begin{align}
    & |A_j(x_{-j}^\st, \bar x_j) - A_j(\hat x_{-j}, \bar x_j)|\nonumber\\
    \leq & \sum_{\ell=1}^{j-1} |A_j(x_{1:\ell-1}^\st, \hat x_{\ell:m, -j}, \bar x_j) - A_j(x_{1:\ell}^\st, \hat x_{\ell+1:m, -j}, \bar x_j)| + \sum_{\ell=j+1}^m |A_j(x_{1:\ell-1,-j}^\st, \hat x_{\ell:m}, \bar x_j) - A_j(x_{1:\ell,-j}^\st, \hat x_{\ell+1:m}, \bar x_j)| \nonumber\\
    \leq & (m-1) \ep_0\label{eq:aj-xstar-barx},
  \end{align}
  where the first inequality uses the triangle inequality and the second inequality uses \cref{eq:mpp-1} for values $j+1 \leq \ell \leq m$ together with the fact that $\bar x_j \times \prod_{i \in [\ell-1]\backslash j} x_i^\st \in \prod_{i=1}^{\ell-1} \MK_{\sigma, n} \subset \MK_{\sigma^{\ell-1}, [n]^{\ell-1}}$, as well as  \cref{eq:mpp-2} for values $1 \leq j \leq \ell-1$ together with the fact that $\bar x_j \times \prod_{i=1}^{\ell-1} x_i^\st \in \prod_{i=1}^\ell \MK_{\sigma, n} \subset \MK_{\sigma^\ell, [n]^\ell}$. Then we have, for each $j \in [m]$, 
  \begin{align}
    & \max_{\bar x_j \in \MK_{\sigma, n}} A_j(x_{-j}^\st, \bar x_j) - A_j(x^\st) \nonumber\\
    \leq & \max_{\bar x_j \in \MK_{\sigma, n}} A_j( \hat x_{-j}, \bar x_j) - A_j(\hat x) + \max_{\bar x_j \in \MK_{\sigma, n}} | A_j(x_{-j}^\st, \bar x_j) - A_j(\hat x_{-j}, \bar x_j)| + |A_j(x^\st) - A_j(\hat x)|\nonumber\\
    \leq & 2m\ep_0 \leq \ep\nonumber,
  \end{align}
  where the second inequality uses \cref{eq:aj-xstar-xhat,eq:aj-xstar-barx} as well as the fact that $\hat x$ is a weak $\ep_0$-approximate $\sigma$-smooth Nash equilibrium, and the final inequality uses the choice of $\ep_0$. 

  \paragraph{Bounding the running time.} Note that the outer loop of the algorithm will iterate through at most $n^{mk}$ strategy profiles $\hat x$. The argument in the proof of \cref{thm:eff-weak} establishes that whether $\hat x$ is a weak $\ep_0$-approximate $\sigma$-smooth Nash equilibrium may be checked in $\poly(1/\ep_0, \log 1/\sigma, n^m)$. Finally, we claim that feasibility of the linear program \cref{eq:multiplayer-program} can be checked in time $\poly(1/\ep_0, \log 1/\sigma, n^m)$. We do this using the ellipsoid algorithm; in particular, we will apply \cite[Theorem 2.4]{bubeck2014convex}. To do so, we must ensure that the linear program \cref{eq:multiplayer-program} has an efficient separation oracle. Even though the number of constraints is exponentially large,\footnote{At first glance, it may seem that there are infinitely many constraints, since $\MK_{\sigma^{\ell}, [n]^\ell}$ is infinite. However, we only need to check \cref{eq:mpp-1,eq:mpp-2} on a set of distributions whose convex hull is $\MK_{\sigma^\ell, [n]^\ell}$, of which there is a set of size at most $n^{\ell \cdot (\sigma n)^\ell}$. } a separation oracle may be implemented efficiently, as follows: given $x_l \in \BR^n$, clearly we may check the $O(n)$ constraints that ensure $x_\ell \in \MK_{\sigma,n}$ in $O(n)$ time. For the constraints \cref{eq:mpp-1}, given $x_\ell \in \BR^n$, for each $j \in [m]$, we find the $(\sigma n)^{\ell-1}$ tuples $(b_1, \ldots, b_{\ell-1}) \in [n]^{\ell-1}$ so that the value of $F_j(b_{1:\ell-1}) := A_j(b_{1:\ell-1}, x_\ell, \hat x_{\ell+1:m}) - A_j(b_{1:\ell-1}, \hat x_\ell, \hat x_{\ell+1:m})$ is largest (amongst all possible values in $[n]^{\ell-1}$). If the mean of $F$ over these $(\sigma n)^{\ell-1}$ tuples is greater than $\ep_0$, then the uniform distribution over these $(\sigma n)^{\ell-1}$ tuples yields a separating hyperplane. Otherwise, we repeat with $F_j(\cdot)$ replaced by $-F_j(\cdot)$, and handle the constraints \cref{eq:mpp-2} in an analogous manner. If we do not find any separating hyperplane in this manner, it is clear that $x_\ell$ is a feasible point of \cref{eq:multiplayer-program}. Summarizing, a separating hyperplane can be found in $n^{O(m)}$ time.

  Finally, to apply Theorem 2.4 of \cite{bubeck2014convex}, we need to show that if \cref{eq:multiplayer-program} is feasible, then its feasible region is (a) contained in a Euclidean ball of some radius $R$ and (b) contains a Euclidean ball of radius $r$; the running time of the ellipsoid algorithm depends linearly on $\log(R/r)$. As for (a), we may take $R = 1$ since $\MK_{\sigma, n} \subset \Delta^n$, which has $\ell_2$-diameter at most $2$. Strictly speaking, (b) does not hold, though we can correct for this by instead solving the modification of \cref{eq:multiplayer-program} where \cref{eq:mpp-0} is replaced by
  \begin{align}
0 \leq x_{\ell,i} \leq \frac{1}{\sigma n} + n^{-C}, \qquad 1- n^{-C} \leq \sum_{i=1}^n x_{\ell, i} \leq 1+ n^{-C}\nonumber,
  \end{align}
  for a sufficiently large constant $C$ (to be specified below), and the value $\ep_0$ in \cref{eq:mpp-1,eq:mpp-2} is replaced by $3\ep_0/4$. Our argument above ensures that \cref{eq:multiplayer-program} is actually feasible with the $\ep_0$ in \cref{eq:mpp-1,eq:mpp-2} replaced by $\ep_0/2$. Thus, our modification of \cref{eq:multiplayer-program} contains an $\ell_2$ ball of radius $r = n^{-2C}$, as long as $C$ is a sufficiently large constant. Therefore, we can find a feasible point $\tilde x_\ell$ of this modification of \cref{eq:multiplayer-program} using the ellipsoid method, in time $n^{O(m)}$. Finally, assuming $C$ is a sufficiently large constant, by projecting $\tilde x_\ell$ onto $\MK_{\sigma, n}$ we obtain a feasible point of \cref{eq:multiplayer-program}, as desired.

  Altogether, \GeneralStrongSmooth runs in time $n^{O(mk)} = n^{O \left( \frac{m^4 \log (m/\sigma)}{\ep^2} \right)}$, as desired.
\end{proof}

\section{Proofs from \cref{sec:hardness}}

\subsection{Primer on \PPAD} \label{sec:ppad}
In this section we formally introduce the complexity class \PPAD. Formally, a \emph{search problem} is specified by a relation $\MR \subset \{0,1\}^\st \times \{0,1\}^\st$, which is to be interpreted as follows: for $(x,y) \in \MR$, $x$ is the input instance, and $y$ is a solution to $x$. The class \TFNP consists of search problems $\MR$ so that (a) for all $x$, there is some $y$ with $|y| \leq \poly(|x|)$ so that $(x,y) \in \MR$, and (b) for all $(x,y)$, there is a $\poly(|x| + |y|)$ algorithm to determine whether $(x,y) \in \MR$. We need the following concept of polynomial-time reductions between search problems:
\begin{defn}
  \label{def:poly-reducible}
  A search problem $\MR_1 \in \TFNP$ is polynomial-time reducible to $\MR_2 \in \TFNP$ if there is a pair of polynomial-time computable functions $F, G$ so that, for every $x \in \{0,1\}^\st$, if $y \in \{0,1\}^\st$ satisfies $(F(x), y) \in \MR_2$, then $(x, G(y)) \in \MR_1$.
\end{defn}
Intuitively, given an input $x$ for $\MR_1$, we may map it to some input $F(x)$ for $\MR_2$, and given a solution $y$ for $F(x)$, we may map it to a solution $G(y)$ for $x$. \PPAD is defined to be the subclass of \TFNP of problems with a polynomial-time reduction to the \texttt{End-of-the-Line} (\texttt{EoL}) problem, defined below.

\begin{definition}[\texttt{End-of-the-Line}]
  \label{def:ppad}
  Define the \texttt{End-of-the-Line} (\texttt{EoL}) problem as the search problem where the input is two circuits $S,P : \left\{ 0,1 \right\}^n \to \left\{ 0,1 \right\}^n $ such that $S(0^n) \neq 0^n $ and $P(0^n) = 0^n $. 
  The output is a $v \in \left\{ 0,1 \right\}^n$ $v \neq 0^n$ such that $ S(P(v))  \neq v  $ or $ P(S(v))  \neq v  $. 
\end{definition}
Intuitively, the circuits $S,P$ in \cref{def:ppad} define a directed graph on $\{0,1\}^n$ where each vertex, apart from sources and sinks, is promised to have a unique successor and predecessor. Given a source (namely, $0^n$), the goal is to find a sink, or a violation of the promise. 

\PPAD contains as notable complete problems the problem of computing approximate Nash equilibria, computing approximate Brouwer fixed points, and computing approximate Arrow-Debreu equilibria.
Further, under cryptographic assumptions, \PPAD does not have polynomial-time algorithms (see \cite{DBLP:conf/tcc/BitanskyCHKLPR22,DBLP:conf/crypto/GargPS16} and references therein).
Thus, \PPAD hardness is strong evidence that a problem is intractable.

Some of our lower bounds rely on the following conjecture, which makes the stronger assumption that there is no \emph{sub-exponential} time algorithm for \PPAD:
\begin{Conjecture}[ETH for \PPAD \cite{DBLP:journals/sigecom/Rubinstein17}] \label{conj:ppad-eth}
  Any algorithm that solves $\mathrm{EoL}$ requires time $2^{\Omega(n)}$.
\end{Conjecture}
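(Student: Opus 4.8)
The final statement, Conjecture \ref{conj:ppad-eth}, is the Exponential Time Hypothesis for \PPAD: an \emph{unconditional} $2^{\Omega(n)}$ lower bound on the running time of any algorithm solving \texttt{End-of-the-Line}. It is a hypothesis rather than a theorem, and there is no proof of it to sketch. An unconditional bound of this form would imply that \texttt{EoL} is not in $\mathsf{P}$, hence that \PPAD differs from $\mathsf{P}$, and in particular that $\mathsf{P} \neq \mathsf{NP}$; a proof would thus resolve long-standing open problems, and the standard proof techniques are precluded from doing so by relativization and related barriers. The role of the conjecture in this paper is purely as an input assumption, mirroring the classical ETH for $k$-\textsf{SAT}, against which the quasipolynomial time lower bound of \cref{cor:smooth-nash-eth} is calibrated.

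What one can actually do is record the evidence that makes the hypothesis credible, which I would organize as follows. \textbf{Oracle-model evidence.} In the black-box model, where $S$ and $P$ are presented as oracles rather than explicit circuits, a classical adversary argument shows that finding the end of the line requires $2^{\Omega(n)}$ queries even for randomized algorithms: one hides a single directed path of length $2^{\Theta(n)}$ inside $\{0,1\}^n$, and an algorithm making $2^{o(n)}$ queries cannot, except with vanishing probability, locate the path or its endpoint. This is the unconditional ``information-theoretic shadow'' of the conjecture. \textbf{Cryptographic evidence.} The line of work establishing average-case hardness of \PPAD from obfuscation \cite{DBLP:conf/tcc/BitanskyCHKLPR22,DBLP:conf/crypto/GargPS16} constructs explicit hard \texttt{EoL} instances from indistinguishability obfuscation together with one-way functions; instantiating these primitives with \emph{sub-exponential} security yields precisely a statement of the shape of Conjecture \ref{conj:ppad-eth} (up to the constant in the exponent). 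Hence the conjecture follows from a plausible, if strong, cryptographic assumption, and its failure would break those constructions. \textbf{Consistency with \textsf{SAT} hardness.} The reductions connecting \texttt{EoL} to \textsf{SAT} are efficient enough that no currently known technique would refute the \PPAD version of ETH without simultaneously refuting the widely believed, heavily scrutinized \textsf{SAT} version.

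Accordingly, the ``proof plan'' for Conjecture \ref{conj:ppad-eth} is: (i) state it as a hypothesis, exactly as in the excerpt; (ii) if one instead wants a conditional theorem, derive it from sub-exponentially secure indistinguishability obfuscation, observing that the reductions of \cite{DBLP:conf/tcc/BitanskyCHKLPR22,DBLP:conf/crypto/GargPS16} are polynomial and therefore preserve a sub-exponential security parameter; and (iii) cite the black-box query lower bound as unconditional supporting evidence. The main obstacle is fundamental and cannot be circumvented with present tools: an unconditional proof is out of reach, so the strongest honest statement is the conditional one, which itself rests on the (currently unproven) hardness of obfuscation. In the body of the paper the conjecture should therefore be treated as an assumption and used only downstream, as in the proof of \cref{cor:smooth-nash-eth}.
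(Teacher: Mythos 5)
Your proposal correctly recognizes that \cref{conj:ppad-eth} is a hypothesis imported from \cite{DBLP:journals/sigecom/Rubinstein17} with no proof to give, which is exactly how the paper treats it: it is stated as an assumption and used only downstream (via \cref{thm:quasipoly-hardness}) in the proof of \cref{cor:smooth-nash-eth}. Your additional discussion of supporting evidence is fine but beyond what the paper does; no gap to report.
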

In particular, we shall need the following consequence of \cref{conj:ppad-eth}:
\begin{theorem}[Theorem 1.2 of \cite{DBLP:journals/sigecom/Rubinstein17}]
  \label{thm:quasipoly-hardness}
  Assuming \cref{conj:ppad-eth}, there is some constant $\ep_0 > 0$ so that there is no algorithm which computes $\ep_0$-approximate Nash equilibria in 2-player $n$-action games in time $n^{\log^{1-\delta}n }$ for any constant $\delta > 0$.
\end{theorem}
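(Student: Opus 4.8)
The plan is to cite the result directly: Theorem~\ref{thm:quasipoly-hardness} is a restatement of Theorem~1.2 of \cite{DBLP:journals/sigecom/Rubinstein17}, and reproving it is well outside the scope of this paper. For context, here is the structure of that proof and the reason no step of it can be shortcut using the tools developed above.

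The input is a size-$N$ \texttt{End-of-the-Line} instance (\cref{def:ppad}), which by \cref{conj:ppad-eth} cannot be solved in time $2^{o(N)}$. The classical \PPAD-completeness reductions --- from \texttt{EoL} through a discrete Brouwer fixed-point / ``generalized circuit'' problem, and thence to $\ep$-approximate bimatrix Nash, via Daskalakis--Goldberg--Papadimitriou and Chen--Deng--Teng --- already give \PPAD-hardness, but they incur a polynomial blow-up in instance size and drive the approximation parameter down to $1/\poly(N)$, so they only preclude polynomial-time algorithms for $1/\poly$-approximate Nash and say nothing about constant-$\ep_0$ hardness, let alone a quasi-polynomial time lower bound. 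The decisive idea of \cite{DBLP:journals/sigecom/Rubinstein17} is to make the hard instance ``PCP-like'': one encodes the computation that verifies a purported Brouwer fixed point using a good error-correcting code, so that any inconsistency in a candidate solution is witnessed \emph{locally}, i.e.\ by inspecting only $\mathrm{polylog}(N)$ coordinates; one then wraps these local consistency checks into a two-player game using the \emph{birthday repetition} paradigm (Aaronson--Impagliazzo--Moshkovitz for free games, adapted to the search setting by Braverman--Ko--Weinstein and then Rubinstein). In the resulting game each pure strategy is a size-$\Theta(\sqrt N)$ subset of coordinates, so the number of actions is $n = 2^{\til{\Theta}(\sqrt N)}$, and the core soundness lemma shows that there is an absolute constant $\ep_0 > 0$ such that every $\ep_0$-approximate Nash equilibrium of the game can be decoded into a genuine solution of the size-$N$ \texttt{EoL} instance.

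Given this, the time bound follows by composing parameters. An algorithm that finds an $\ep_0$-approximate Nash equilibrium in an $n$-action bimatrix game in time $n^{\log^{1-\delta} n}$ would, on the game above, run in time $\bigl(2^{\til{\Theta}(\sqrt N)}\bigr)^{(\til{\Theta}(\sqrt N))^{1-\delta}} = 2^{\til{O}(N^{1-\delta/2})} = 2^{o(N)}$, and hence solve size-$N$ \texttt{EoL} in sub-exponential time, contradicting \cref{conj:ppad-eth}; as $\delta>0$ was arbitrary, this proves the theorem.

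The main obstacle --- and the reason one should simply cite \cite{DBLP:journals/sigecom/Rubinstein17} rather than reprove this --- is the soundness lemma itself: one must show that \emph{every} $\ep_0$-approximate equilibrium, not merely exact ones, has marginals concentrated on coordinate-subsets that pass essentially all the local checks, and that passing a $1-O(\ep_0)$ fraction of the checks forces the decoded point to be a true fixed point. This is a delicate PCP-style analysis (tracking how the birthday-repetition gadget amplifies soundness while the error-correcting code localizes it) and is precisely the technical heart of \cite{DBLP:journals/sigecom/Rubinstein17}; everything else is the parameter bookkeeping sketched above.
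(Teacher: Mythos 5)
Your proposal is correct and takes the same approach as the paper: the statement is cited verbatim as Theorem~1.2 of \cite{DBLP:journals/sigecom/Rubinstein17} and no proof is given, which is appropriate for an external black-box result. Your sketch of Rubinstein's birthday-repetition argument and the parameter composition $(2^{\til{\Theta}(\sqrt N)})^{(\til{\Theta}(\sqrt N))^{1-\delta}} = 2^{\til{O}(N^{1-\delta/2})} = 2^{o(N)}$ is accurate, though of course entirely supplementary.
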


\subsection{Proofs from \cref{sec:harness-polynomial}}

\begin{proof}[Proof of \cref{thm:smooth-nash-nash}]
      Fix $n \in \BN$, and let $G = (A,B)$ denote a 2-player, $n$-action normal-form game. Set $k = \lceil n^{(1-c)/c} \rceil$ and $N = nk$. Moreover let $J_{k\times k} \in \BR^{k \times k}$ denote the $k \times k$ all-1s matrix. %
      We define a 2-player, $N$-action normal-form game $G' :=(A', B')$ by: $A' := J_{k \times k} \otimes A, B' := J_{k \times k} \otimes B$, so that $A', B' \in \BR^{N \times N}$. Write $\sigma := \sigma(N), \ep := \ep(N)$. 
    
      Let $(x,y)$ denote a weak $\ep$-approximate $\sigma$-smooth Nash equilibrium in $G'$, so that $x,y \in \Delta^N$. We define distributions $x', y' \in \Delta^n$, as follows: for all $ i \in [n]$,
      \begin{align}
    x'_{i} = \sum_{j=0}^{k-1} x_{i + jk}, \qquad y'_{i + \ell k } =  \sum_{j=0}^{k-1} y_{i+jk}\nonumber.
      \end{align}
      It is immediate that $x',y'$ are indeed distributions. Moreover, we claim that $(x', y')$ is a weak $\ep$-approximate $\sigma$-smooth Nash equilibrium of $G'$. To see this claim, first, for any $z \in \Delta^n$, let $J_{k,z} \in \Delta^N$ denote the distribution defined by $(J_{k,z})_{i + \ell k} = z_i/k$ for all $0 \leq \ell < k$ and $i \in [n]$. Then for all $z \in \Delta^n$, we have
      \begin{align}
        (x')^\t Bz = \sum_{0 \leq \ell < k, i \in [n]}  \sum_{t \in [n]}  x_{i+\ell k}\cdot B_{i, t} z_t = \frac 1k \sum_{0 \leq \ell < k, i \in [n]} \sum_{ 0 \leq j < k, t \in [n]}x_{i+\ell k} \cdot B'_{i+\ell k, t+jk} (J_{k,z})_{t + jk} = x^\t B' J_{k,z}\label{eq:xbt},
      \end{align}
      where the first equality uses the definition of $x'$, and the second equality uses the definition of $B' = J_{k\times k} \otimes B$ as well as the definition of $J_{k,i}$. Note that, if $z \in \MK_{\sigma, n}$, then we have $\max_{i \in [n]} z_i \leq \frac{1}{\sigma n}$, which implies that $\max_{i \in [N]} (J_{k,z})_i \leq \frac{1}{\sigma nk} = \frac{1}{\sigma N}$, and hence $J_{k,z} \in \MK_{\sigma, N}$. 
    
      Finally, we observe that
      \begin{align}
    (x')^\t B y' = \sum_{i \in [n], 0 \leq \ell < k} \sum_{t \in [n], 0 \leq j < k} x_{i+\ell k} B_{i,t} y_{t+ jk} = \sum_{i \in [n], 0 \leq \ell < k} \sum_{t \in [n], 0 \leq j < k} x_{i+\ell k} B'_{i+\ell k,t + jk} y_{t+ jk} = x^\t By \label{eq:xby}.
      \end{align}
    
      Since $(x,y)$ is a weak $\ep$-approximate $\sigma$-smooth Nash equilibrium in $G'$, we have that 
      \begin{align}
    \max_{z \in \MK_{\sigma, n}} (x')^\t B z - (x')^\t B y' = \max_{z \in \MK_{\sigma, n}} x^\t B' J_{k,z} - x^\t B' y \leq \max_{\bar y \in \MK_{\sigma, N}} x^\t B' \bar y - x^\t B' y \leq \ep\nonumber,
      \end{align}
      where the equality uses \cref{eq:xbt,eq:xby} and the first inequality uses the fact that $J_{k,z} \in \MK_{\sigma, N}$ if $z \in \MK_{\sigma, n}$. In a symmetric manner, we may obtain that
      \begin{align}
    \max_{z \in \MK_{\sigma, n}} z^\t Ay' - (x')^\t Ay' = \max_{z \in \MK_{\sigma, n}} J_{k,z}^\t A' y - x^\t A' y \leq \max_{\bar x \in \MK_{\sigma, N}} \bar x^\t A' y - x^\t A' y \leq \ep\nonumber.
      \end{align}
      It follows that $(x', y')$ is a weak $\ep$-approximate $\sigma$-smooth Nash equilibrium of $G$.
    
      Note that, since $c > 0$ is a positive constant, the size of the game $G'$ is bounded by a polynomial in the size of the game $G$. In particular, we have that $N = n \lceil n^{(1-c)/c} \rceil \geq n^{1/c}$, and also $N \leq 2n^{1/c}$.  It follows  that $\ep = \ep(N) \leq \ep(n^{1/c})$, since $\ep(\cdot)$ is a non-increasing function. Similarly, we have that $\frac{1}{\sigma n} = \frac{1}{\sigma(N) \cdot n} \geq \frac{1}{\sigma(n^{1/c}) \cdot n}$, meaning that the pair $(x', y')$ we have computed above is a weak $\ep(n^{1/c})$-approximate $\sigma(n^{1/c})$-smooth Nash equilibrium of $G$. Thus we have established a polynomial-time reduction from the problem of finding weak $\sigma(n^{1/c})$-smooth $\ep(n^{1/c})$-approximate Nash equilibria in 2-player $n$ action games to the problem of finding weak $\sigma(N)$-smooth $\ep(N)$-approximate Nash equilibria in 2-player $N$ action games. The statement of the theorem involving weak equilibria follows by replacing $\sigma$ with the function $n \mapsto \sigma(n^{1/c})$ and replacing $\ep$ with the function $n \mapsto \ep(n^{1/c})$. %
    
      As for strong equilibria, we only need to show that if $x,y \in \MK_{\sigma, N}$, then $x', y' \in \MK_{\sigma, n}$; but this holds since $\max_{i \in [n]} x_i' \leq k \cdot \max_{i \in [N]} x_i \leq \frac{k}{\sigma N} = \frac{1}{\sigma n}$, and similarly $\max_{i \in [n]} y_i' \leq \frac{1}{\sigma n}$. 
          \end{proof}

    \begin{proof}[Proof of \cref{cor:smooth-nash-eth}]
      \cref{thm:smooth-nash-nash} with $\ep(n) = \ep_0$ and $\sigma(n) = n^{-1}$ establishes the following:  an algorithm which finds weak $\ep_0$-approximate $(1/N)^c$-smooth Nash equilibrium in 2-player $N$-action games in time $N^{\log^{1-\delta} N}$ (for all sufficiently large $N \in \BN$) would imply that there is some constant $b$ so that there is an algorithm that runs in time $(n^b)^{\log^{1-\delta}(b)}$ and computes an $\ep_0$-approximate $1/n$-smooth Nash equilibrium in 2-player $n$-action games. Since the requirement of $1/n$-smoothness is vacuous, and since $(n^b)^{\log^{1-\delta}(b)} = n^{b^2 \cdot \log^{1-\delta}(b)} \leq n^{\log^{1-\delta/2}(n)}$ for sufficiently large $n$, we get a contradiction to the exponential time hypothesis for \PPAD by \cref{thm:quasipoly-hardness}. 
\end{proof}

\paragraph{PPAD-hardness.} %
In similar spirit to \cref{cor:smooth-nash-eth}, we have the following corollary of \cref{thm:smooth-nash-nash}, which shows a \PPAD-hardness result when the approximation parameter and smoothness parameter are inverse polynomials (in particular, the result does not depend on ETH for \PPAD). The result is strengthened by \cref{thm:ppad-hard-const-sig}, which allows the smoothness parameter to be an absolute constant. 
\begin{corollary}
  \label{cor:smooth-nash-ppad-hard}
For any constant $c_1 \in (0,1)$, the problem of finding weak $n^{-c_1}$-approximate $n^{-c_1}$-smooth Nash equilibrium in 2-player, $n$-action games is \PPAD-hard. 
\end{corollary}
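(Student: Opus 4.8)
The plan is to deduce this in exactly the same spirit as \cref{cor:smooth-nash-eth}, but starting from the (unconditional) \PPAD-hardness of computing inverse-polynomially-approximate Nash equilibria in two-player games rather than from ETH for \PPAD. The key observation is that, since $\MK_{1/n,n}=\Delta^n$, a weak $\ep$-approximate $1/n$-smooth Nash equilibrium is literally the same object as an $\ep$-approximate Nash equilibrium (compare \cref{def:sne,def:ne}). So it suffices to exhibit a polynomial-time reduction (in the sense of \cref{def:poly-reducible}) from the problem of computing a $1/n$-approximate Nash equilibrium in two-player $n$-action games to the problem of computing a weak $n^{-c_1}$-approximate $n^{-c_1}$-smooth Nash equilibrium in two-player $n$-action games, and then invoke the known \PPAD-hardness of the former \cite{DBLP:journals/jacm/ChenDT09}.

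To produce this reduction I would instantiate \cref{thm:smooth-nash-nash} with reparametrization exponent $c:=c_1\in(0,1)$ and with the non-increasing parameter functions $\ep(n):=\sigma(n):=1/n$. The theorem then states that finding a weak $\ep(n)$-approximate $\sigma(n)$-smooth Nash equilibrium, i.e.\ a $1/n$-approximate Nash equilibrium, in two-player $n$-action games reduces in polynomial time to finding a weak $\ep(n^{c_1})$-approximate $\sigma(n^{c_1})$-smooth Nash equilibrium, i.e.\ a weak $n^{-c_1}$-approximate $n^{-c_1}$-smooth Nash equilibrium, in two-player $n$-action games. Composing this reduction with the \PPAD-hardness of the source problem yields the claim. (The corresponding statement for \emph{strong} equilibria, which the corollary does not assert but which is immediate from the ``strong'' half of \cref{thm:smooth-nash-nash}, follows identically.)

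There is no deep obstacle here; the only points requiring care are bookkeeping ones. First, \cref{thm:smooth-nash-nash} reparametrizes the approximation parameter and the smoothness parameter by the \emph{same} exponent $c$, so in order to land on the target pair $(n^{-c_1},n^{-c_1})$ one is forced to start from a pair of the same polynomial degree; taking both source parameters equal to $1/n$ is precisely what makes the source endpoint collapse to the ordinary (non-smooth) Nash problem, and $c=c_1\in(0,1)$ is an admissible exponent with $\ep(\cdot),\sigma(\cdot)$ non-increasing into $(0,1)$ for $n\ge 2$, so the theorem applies verbatim. Second, one must make sure it is \PPAD-hard to compute a $1/n$-approximate Nash equilibrium, and not merely an $n^{-c}$-approximate one for some small unspecified constant $c$; this follows from \cite{DBLP:journals/jacm/ChenDT09} (the \PPAD-hardness of $n^{-\Theta(1)}$-approximate Nash equilibria in two-player games), either directly by monotonicity when the hardness already holds at approximation level at least $1/n$, or otherwise via a routine padding argument with strictly dominated actions (given hardness of $n^{-c}$-approximation for some constant $c>1$, pad an $n$-action hard instance up to $N=\lceil n^{c}\rceil$ actions to obtain hardness of $N^{-1}$-approximation in $N$-action games). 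With those two points settled, the composition of polynomial-time reductions preserves \PPAD-hardness, giving the statement.
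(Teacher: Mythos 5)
Your proposal is correct and matches the paper's proof essentially verbatim: both observe that $1/n$-smooth equilibria are ordinary Nash equilibria, invoke the known \PPAD-hardness of computing $n^{-1}$-approximate Nash equilibria in two-player games, and apply \cref{thm:smooth-nash-nash} with $\ep(n)=\sigma(n)=1/n$ and $c=c_1$. Your extra remark about padding in case the hardness baseline sits at a different polynomial exponent is a sensible bookkeeping check but not needed beyond what the cited hardness results already give.
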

\begin{proof}[Proof of \cref{cor:smooth-nash-ppad-hard}]
 It is known that the problem of finding a $n^{-1}$-approximate $n^{-1}$-smooth Nash equilibrium in 2-player, $n$-action games is \PPAD-hard \cite{DBLP:journals/siamcomp/DaskalakisGP09,DBLP:conf/focs/ChenD06}. 
 We apply \cref{thm:smooth-nash-nash} with $\ep(n) = n^{-1}$ and $\sigma(n) = n^{-1}$, and with $c = c_1$. In particular, the problem of finding weak $n^{-1}$-approximate $n^{-1}$-smooth Nash equilibrium in 2-player $n$-action games has a polynomial-time reduction to the problem of finding weak $n^{-c_1}$-approximate $n^{-c_1}$-smooth Nash equilibrium in 2-player $n$-action games. %
\end{proof}

    \subsection{Proofs from \cref{sec:harness-constant}}
\label{sec:const-sig-lb-proofs}

    \begin{proof}[Proof of \cref{lem:validity}]
      Let $(x,y)$ be a 1-approximate $\sigma$-smooth Nash equilibrium of $G$. %
    
      Since $(A,B)$ is an approximate GMP game, it holds that, for all strategy profiles $(x,y)$,
      \begin{align}
        M \cdot \sum_{k=1}^K \bar x_k \bar y_k \leq x^\t Ay \leq M \cdot \sum_{k=1}^K \bar x_k \bar y_k + 1\label{eq:xay-sandwich}\\
        -M \cdot \sum_{k=1}^K \bar x_k \bar y_k \leq x^\t By \leq -M \cdot \sum_{k=1}^K \bar x_k \bar y_k + 1\label{eq:xby-sandwich}.
      \end{align}
      
      We first prove the following helpful claim:
      \begin{claim}
        \label{clm:ip-ub}
        It holds that $1/K - 2/M \leq \sum_{k\in [K]} \bar x_k \bar y_k \leq 1/K + 2/M$.
      \end{claim}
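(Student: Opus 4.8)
The plan is to prove both inequalities by having each player contemplate the deviation to the uniform distribution $u := \frac1N \mathbf{1} \in \Delta^N$, and then combining the $1$-approximate equilibrium guarantee with the elementary sandwich bounds \cref{eq:xay-sandwich,eq:xby-sandwich}. The first point to record is that $u$ is a legal deviation for both players: $u \in \MK_{\sigma, N}$ for \emph{every} $\sigma \in (0,1)$, since each entry of $u$ equals $1/N \leq 1/(\sigma N)$ (see \cref{def:smooth-polytope}). Moreover $\bar u_k = u_{2k-1} + u_{2k} = 2/N = 1/K$ for every $k \in [K]$, so $\sum_k \bar u_k \bar z_k = \frac1K \sum_k \bar z_k = 1/K$ for any $z \in \Delta^N$ under the convention $\bar z_k = z_{2k-1} + z_{2k}$.

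For the lower bound, I would apply the equilibrium condition to the $x$-player. By the lower half of \cref{eq:xay-sandwich} applied to the profile $(u, y)$ we get $u^\t A y \geq M \sum_k \bar u_k \bar y_k = M/K$, and since $(x,y)$ is a weak $1$-approximate $\sigma$-smooth Nash equilibrium and $u \in \MK_{\sigma, N}$, this forces $x^\t A y \geq u^\t A y - 1 \geq M/K - 1$. Feeding this into the upper half of \cref{eq:xay-sandwich}, namely $x^\t A y \leq M \sum_k \bar x_k \bar y_k + 1$, yields $\sum_k \bar x_k \bar y_k \geq 1/K - 2/M$. The upper bound is entirely symmetric, using the $y$-player and \cref{eq:xby-sandwich}: the deviation $u$ gives $x^\t B u \geq -M \sum_k \bar x_k \bar u_k = -M/K$ (the favorable direction, since $B \approx -A^\st$), hence $x^\t B y \geq x^\t B u - 1 \geq -M/K - 1$, and the upper half of \cref{eq:xby-sandwich}, $x^\t B y \leq -M \sum_k \bar x_k \bar y_k + 1$, then gives $\sum_k \bar x_k \bar y_k \leq 1/K + 2/M$.

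I do not anticipate any genuine obstacle here. The only things needing care are that $\MK_{\sigma, N}$ always contains the uniform distribution (immediate from the polytope definition), and that the sandwich bounds \cref{eq:xay-sandwich,eq:xby-sandwich} hold for all profiles --- this last is just the decomposition $x^\t A y = x^\t A^\st y + x^\t (A - A^\st) y$, together with $x^\t A^\st y = M \sum_k \bar x_k \bar y_k$ (by definition of $A^\st = M \cdot I_K \otimes J_{2,2}$) and $x^\t (A - A^\st) y \in [0,1]$ (since $A - A^\st$ has all entries in $[0,1]$ and $x, y$ are distributions), with the analogous statement for $B^\st = -A^\st$. Note that this claim needs only the $1$-approximate equilibrium hypothesis, not the assumption $\ep \geq 8\sigma K^2/M$; it is a short warm-up whose role is to pin down the block masses $\bar x_k, \bar y_k$ so that, in the remainder of the proof of \cref{lem:validity}, the smoothness constraint can be converted into a per-block bound.
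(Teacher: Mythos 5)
Your proof is correct, and it takes a genuinely different and simpler route than the paper. The paper deviates to a \emph{concentrated} $\sigma$-smooth profile: for the lower bound it picks out the $\sigma K$ blocks where $\bar y$ is largest and spreads mass $1/(\sigma K)$ uniformly over each of them, arguing $\sum_i \bar x'_{k_i} \bar y_{k_i} \geq 1/K$; for the upper bound it does the mirror-image construction with the $\sigma K$ smallest entries of $\bar x$. You instead deviate to the uniform distribution $u$, which is always in $\MK_{\sigma,N}$ and satisfies $\sum_k \bar u_k \bar z_k = 1/K$ exactly for any $z \in \Delta^N$, so no extremal-block bookkeeping is needed and the smoothness parameter $\sigma$ plays no role at all. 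Both routes pass through the same sandwich bounds \cref{eq:xay-sandwich,eq:xby-sandwich} together with the $1$-approximate equilibrium hypothesis, and both land on exactly the same $2/M$ slack. One minor observation: the paper's choice to use a concentrated smooth deviation here is not an accident — the same construction is the workhorse in the rest of the proof of \cref{lem:validity} (where a uniform deviation would be useless, since the whole point there is to detect a single overweight block), so the paper's proof of the claim doubles as a warm-up for the main argument. Your version is shorter and more self-contained, at the cost of not previewing that technique.
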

      \begin{proof}
        Let $k_1, \ldots, k_{\sigma K} \in [K]$ denote indices so that $\bar y_{k_1}, \ldots, \bar y_{k_{\sigma K}}$ are the $\sigma K$ largest entries of $\bar y$. Let $x' \in \MK_{\sigma, N}$ be defined so that for all $i \in [\sigma K]$, $x'_{2k_i-1} = x'_{2k_i} = \frac{1}{2\sigma K}$, and hence $\bar x'_{k_i} = \frac{1}{\sigma K}$. By definition of the indices $k_i$, it holds that $\sum_{i=1}^{\sigma K} \bar y_{k_i} \geq \sigma$. Thus $\sum_{i=1}^{\sigma K} \bar x_{k_i}' \bar y_{k_i} \geq \frac{\sigma}{\sigma K} = 1/K$. Using \cref{eq:xay-sandwich} and the fact that $(x,y)$ is a 1-approximate $\sigma$-smooth Nash equilibrium, we have
        \begin{align}
    1 \geq (x')^\t Ay - x^\t Ay \geq M \cdot \sum_{k\in [K]} \bar x_k' \bar y_k - M \cdot \sum_{k \in [K]} \bar x_k \bar y_k - 1 \geq M/K - M \cdot \sum_{k\in[K]} \bar x_k \bar y_k - 1\nonumber.
        \end{align}
        Rearranging, we obtain that $\sum_{k \in [K]} \bar x_k \bar y_k \geq 1/K - 2/M$.
    
        The proof that $\sum_{k \in [K]} \bar x_k \bar y_k \leq 1/K + 2/M$ is symmetric.  In particular, let $k_1, \ldots, k_{\sigma K} \in [K]$ denote indices so that $\bar x_{k_1}, \ldots, \bar x_{k_{\sigma K}}$ are the $\sigma K$ smallest entries of $\bar x$. Let $y' \in \MK_{\sigma, N}$ be defined so that for all $i \in [\sigma K]$, $y'_{2k_i-1} = y'_{2k_i} = \frac{1}{2\sigma K}$, and hence $\bar y_{k_i}' = \frac{1}{\sigma K}$. By definition of the indices $k_i$, it holds that $\sum_{i=1}^{\sigma K} \bar x_{k_i} \leq \sigma$. Thus $\sum_{i=1}^{\sigma K} \bar x_{k_i} \bar y_{k_i}' \leq \frac{\sigma}{\sigma K} = 1/K$. Using \cref{eq:xby-sandwich} and the fact that $(x,y)$ is a 1-approximate $\sigma$-smooth Nash equilibrium, we have
        \begin{align}
1 \geq x^\t B y' - x^\t By \geq -M \cdot \sum_{k=1}^K \bar x_k \bar y_k' + M \cdot \sum_{k=1}^K \bar x_k \bar y_k -1 \geq - M/K + M \cdot \sum_{k=1}^K \bar x_k \bar y_k - 1\nonumber.
        \end{align}
        Rearranging, we obtain that $\sum_{k \in [K]} \bar x_k \bar y_K \leq 1/K + 2/M$. 
      \end{proof}
    
      Again let us let $\bar y_{k_1}, \ldots, \bar y_{k_{\sigma K}}$ denote the $\sigma K$ largest entries of $\bar y$. Let us write $\ep_0 = \ep/K$. Suppose for the purpose of contradiction that $\bar y_{k_1} > 1/K + \ep_0$. For simplicity let $\delta := 1/K + \ep_0$. It follows that
      \begin{align}
    \sum_{i=1}^{\sigma K} \bar y_{k_i} \geq \delta + (1-\delta) \cdot \frac{\sigma K - 1}{K-1}\nonumber.
      \end{align}
      Then we have
      \begin{align}
        \frac{1}{\sigma K} \cdot \sum_{i=1}^{\sigma K} \bar y_{k_i} \geq & \frac{1}{\sigma K} \cdot \left( \frac{K\delta + \sigma K - \delta \sigma K - 1}{K-1} \right)\nonumber\\
        = & \frac{\ep_0/\sigma + 1 - 1/K - \ep_0}{K-1}\nonumber\\
        =& \frac{1}{K} + \frac{\ep_0}{K-1} \cdot (1/\sigma - 1)\nonumber\\
        \geq & \frac{1}{K} + \frac{\ep_0}{2\sigma (K-1)}\label{eq:bary-sigmak-lb},
      \end{align}
      where the final inequality uses that $\sigma \leq 1/2$. Letting $x' \in \MK_{\sigma, N}$ be defined by $x_{2k_i-1}' = x_{2k_i}' = 1/(2\sigma K)$ for all $i \in [\sigma K]$ (i.e., so that $\bar x_{k_i}' = 1/(\sigma K)$ for $i \in [\sigma K]$), we have that
      \begin{align}
    1 \geq (x')^\t Ay - x^\t Ay \geq M \cdot \sum_{k \in [K]} \bar x_k' \bar y_k - M \cdot \sum_{k \in [K]} \bar x_k \bar y_k - 1 \geq \frac{M\ep_0}{2\sigma (K-1)} -3\nonumber,
      \end{align}
    where we have used that $(x,y)$ is a 1-approximate $\sigma$-smooth Nash equilibrium in the first inequality, \cref{eq:xay-sandwich} in the second inequality, and \cref{eq:bary-sigmak-lb} together with \cref{clm:ip-ub} in the third inequality. The above inequality leads to a contradiction since $M, \ep_0$ are chosen so that $M\ep_0 \geq 8\sigma K > 8\sigma (K-1)$ (using our assumption that $\ep \geq 8\sigma K^2/M$). 
    
    We have established that $\bar y_k \leq 1/K + \ep_0$ for all $k \in [K]$. It follows that $\bar y_k \geq 1/K - K\ep_0$ for all $k \in [K]$. Since $\ep_0 = \ep/K$, we have thus established that $\bar y_k \in [1/K - \ep, 1/K + \ep]$ for all $k \in [K]$.
    
    The proof of the analogous statement that $\bar x_k \in [1/K - \ep, 1/K + \ep]$ for all $k \in [K]$ is symmetric, using \cref{eq:xby-sandwich} instead of \cref{eq:xay-sandwich}. In particular, let $\bar x_{k_1}, \ldots, \bar x_{k_{\sigma K}}$ denote the $\sigma K$ smallest entries of $\bar x$. Suppose for the purpose of contradiction that $\bar x_{k_1} < 1/K - \ep_0$, and let us write $\eta := 1/K - \ep_0$. It follows that
    \begin{align}
\sum_{i=1}^{\sigma K} \bar x_{k_i} \leq \eta + (1-\eta) \cdot \frac{\sigma K-1}{K-1}\nonumber.
    \end{align}
    Then we have
    \begin{align}
      \frac{1}{\sigma K} \sum_{i=1}^{\sigma K} \bar x_{k_i} %
      \leq \frac{1}{K} - \frac{\ep_0}{K-1} \cdot (1/\sigma - 1) \leq \frac{1}{K} - \frac{\ep_0}{2\sigma(K-1)}\nonumber.
    \end{align}
    Letting $y' \in \MK_{\sigma, N}$ be defined by $y'_{2k_i-1} = y'_{2k_i} = 1/(2\sigma K)$ for all $i \in [\sigma K]$, we have that
    \begin{align}
1 \geq x^\t B y' - x^\t By \geq -M \cdot \sum_{k \in [K]} \bar x_k \bar y_k' + M \cdot \sum_{k \in [K]} \bar x_k \bar y_k  - 1 \geq \frac{M\ep_0}{2\sigma (K-1)} -3\nonumber.
    \end{align}
    The above inequality leads to a contradiction since $M \ep_0 / (2\sigma (K-1)) > 4$. 
    We have established that $\bar x_k \geq 1/K - \ep_0 \geq 1/K - \ep$ for all $k \in [K]$, and it follows that $\bar x_k \leq 1/K + K \ep_0 = 1/K + \ep$ for all $k \in [K]$. 
  \end{proof}

  We next recall the following result from \cite{DBLP:journals/jacm/ChenDT09}, which shows that computing approximate Nash equilibria in GMP games is \PPAD-hard. The result is not explicitly stated in the paper; below we explain how it follows as an immediate consequence of the intermediate results in \cite{DBLP:journals/jacm/ChenDT09}.
\begin{theorem}[\cite{DBLP:journals/jacm/ChenDT09}]
  \label{thm:gmp-hardness}
For $\eta \leq K^{-14}$, the problem of computing $2K^3 \cdot \eta$-approximate Nash equilibrium in approximate $K$-GMP games is \PPAD-hard. %
\end{theorem}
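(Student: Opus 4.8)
The plan is to trace the \PPAD-hardness reduction of \cite{DBLP:journals/jacm/ChenDT09} for approximate two-player Nash equilibria and to observe that, after the affine rescaling of payoffs under which it is naturally stated, the games it produces are \emph{exactly} approximate $K$-GMP games. Recall that \cite{DBLP:journals/jacm/ChenDT09} reduces from the generalized circuit problem \Gcircuit: given a generalized circuit with $K$ nodes, one must find an assignment of values in $[0,1]$ to the nodes that $\eta$-approximately satisfies every gate, and this problem is \PPAD-hard whenever $\eta$ is at most a fixed inverse polynomial in $K$; in particular it is \PPAD-hard for every $\eta \le K^{-14}$, which lies comfortably inside the hard regime. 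The reduction builds from such a circuit a two-player game $G_\eta = (A,B)$ in which each player has $2K$ pure strategies grouped into $K$ pairs $\{2k-1,2k\}$, one per node $k$, and whose payoff matrices decompose as $A = M\,(I_K \otimes J_{2,2}) + P_A$ and $B = -M\,(I_K \otimes J_{2,2}) + P_B$, where $M = 2K^3$ and the perturbations $P_A,P_B$ are bounded and encode the gate constraints; up to the global affine rescaling just mentioned one has $P_A,P_B \in [0,1]^{N \times N}$ with $N = 2K$, so that $G_\eta$ is precisely an approximate $K$-GMP game with $A^\st = M\, I_K\otimes J_{2,2}$ and $B^\st=-A^\st$.

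Next I would quote the two structural ingredients from \cite{DBLP:journals/jacm/ChenDT09}. The first is the analysis of the matching-pennies (``prisoner'') skeleton $M\,(I_K\otimes J_{2,2})$: since $M$ dominates $P_A,P_B$, in any $\epsilon$-approximate Nash equilibrium $(x,y)$ of $G_\eta$ the node masses $\bar x_k = x_{2k-1}+x_{2k}$ and $\bar y_k = y_{2k-1}+y_{2k}$ are each within $O(\epsilon/M)$ of $1/K$ --- exactly the phenomenon that \cref{lem:validity} establishes in the smooth setting. The second is their correctness lemma, which shows that the conditional probabilities $x_{2k-1}/\bar x_k$ read off from an $\epsilon$-approximate Nash equilibrium form an assignment that $\eta$-approximately satisfies the circuit, as long as $\epsilon \le M\eta = 2K^3\eta$; here the factor $M$ is precisely the normalization converting an additive Nash error into a relative per-gate error. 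Composing these, an algorithm that computes a $2K^3\eta$-approximate Nash equilibrium of an arbitrary approximate $K$-GMP game, applied to $G_\eta$, solves the $\eta$-approximate version of \Gcircuit on $K$ nodes in polynomial time; since \Gcircuit with parameter $\eta$ is \PPAD-hard for every $\eta \le K^{-14}$, the claimed hardness follows.

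The main obstacle is bookkeeping rather than conceptual. One must check against the intermediate lemmas of \cite{DBLP:journals/jacm/ChenDT09} that \emph{every} gadget used in the reduction (the copy, addition, scaling, and brittle-comparator gadgets, together with any padding needed to bound fan-in) perturbs only the block-diagonal skeleton $M\,(I_K\otimes J_{2,2})$ and does so within a single unit-width interval once the global scaling is fixed, so that $A-A^\st$ and $B-B^\st$ genuinely lie in $[0,1]^{N\times N}$ with $M = 2K^3$; and one must confirm that the constants line up, i.e.\ that the Nash-to-circuit error conversion is $\epsilon = 2K^3\eta$ with this $M$ and that the exponent $14$ in $\eta\le K^{-14}$ is within the range for which \Gcircuit remains \PPAD-hard. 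All of this is a direct inspection of the construction and intermediate results of \cite{DBLP:journals/jacm/ChenDT09}; no new ideas are required.
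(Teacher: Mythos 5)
Your high-level framing (reduce from \Gcircuit, observe that the Chen--Deng--Teng game is already an approximate $K$-GMP game, and compose the error conversions) matches the paper's, but there is a genuine gap in the error accounting that explains why the exponent is $14$ rather than something much smaller.

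You assert that the circuit-correctness lemma of \cite{DBLP:journals/jacm/ChenDT09} reads an $\eta$-approximate circuit assignment off an $\epsilon$-approximate Nash equilibrium whenever $\epsilon \le 2K^3\eta$, and you then take $\eta$ itself to be the hardness parameter for \Gcircuit\ (hard down to $K^{-14}$). Neither claim is how the argument actually goes. Lemmas~6.3 and~6.4 of \cite{DBLP:journals/jacm/ChenDT09} apply to \emph{well-supported} Nash equilibria, not to additively approximate ones, and the conversion from an $\eta$-approximate Nash equilibrium of a $[0,1]$-normalized game to a well-supported one (Lemma~2.2 of \cite{DBLP:journals/jacm/ChenDT09}) loses \emph{quadratically}: one obtains an $\eta'$-well-supported equilibrium only for $\eta' = 2M\sqrt{8\eta}$, with $M = 2K^3$. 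Moreover, the \Gcircuit\ hardness used is at the \emph{fixed} precision $1/K^3$ (Theorem~4.6 and Lemma~5.7 of \cite{DBLP:journals/jacm/ChenDT09}); the exponent $14$ is not the precision at which \Gcircuit\ is hard. Chaining correctly: one needs $\eta' = 2M\sqrt{8\eta} \le 1/K^3$ so that a well-supported equilibrium yields a $1/K^3$-approximate circuit solution, and with $M=2K^3$ this forces $\eta \lesssim K^{-12}$, which the paper rounds (using $K$ large enough, wlog) to $\eta \le K^{-14}$; the additional factor $2M = 2K^3$ in the theorem statement then converts back to an approximate Nash equilibrium of the unnormalized game. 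Without the well-supported intermediate step, your linear conversion $\epsilon = 2K^3\eta$ would suggest a much weaker requirement on $\eta$ than the theorem actually states, and the correctness lemma it invokes simply does not apply to an additively approximate equilibrium directly.
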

The proof of \cref{thm:gmp-hardness} is essentially given in \cite{DBLP:journals/jacm/ChenDT09}, but we give the details for completeness.
\begin{proof}[Proof of \cref{thm:gmp-hardness}]
  We use the notation and terminology from \cite{DBLP:journals/jacm/ChenDT09}. First, by \cite[Theorem 4.6 \& Lemma 5.7]{DBLP:journals/jacm/ChenDT09}, we see that it is \PPAD-hard to compute an $\ep$-approximate solution to a generalized circuit instance with $K$ nodes for $\ep = 1/K^{3}$.\footnote{In fact, \cite{rubinstein2015inapproximability} showed that this problem is \PPAD-hard even for $\ep = O(1/K)$, though we shall not need that stronger result.} Next,  \cite{DBLP:journals/jacm/ChenDT09} establishes that given a generalized circuit instance $\MC$ with $K$ nodes, we can efficiently construct an approximate $K$-GMP game $(A,B)$ for which the following holds. For $\eta  > 0$, a $2M \eta$-approximate Nash equilibrium of the game $(A,B)$ is a $\eta$-approximate Nash equilibrium of the game $(\frac{1}{2M} \cdot A,\frac{1}{2M} \cdot B)$ (whose payoffs are normalized to lie in $[0,1]$). Given such an equilibrium, by Lemma 2.2 of \cite{DBLP:journals/jacm/ChenDT09} we can construct a $\eta'$-well-supported Nash equilibrium $(x', y')$ of $(A,B)$, for $\eta ' = 2M \sqrt{8\eta}$.
Moreover, Lemmas 6.3 \& 6.4 of \cite{DBLP:journals/jacm/ChenDT09} and the construction of the game $(A,B)$ give that, as long as $\eta' \leq 1/K^3$, for an $\eta'$-well-supported Nash equilibrium $(x', y')$, the vector $\bar x' \in [0,1]^K$ is a $1/K^3$-approximate solution to the generalized circuit instance $\MC$. Recalling that $M = 2K^3$, we have $\eta' =2M\sqrt{8\eta} \leq 1/K^3$ as long as $\eta \leq 1/K^{14}$ and $K \geq 8$ (the latter of which is without loss of generality). 
\end{proof}

Using \cref{lem:validity,thm:gmp-hardness}, we may show that computing weak approximate smooth Nash equilibria is hard:
\begin{lemma}
  \label{lem:sigconst-ep15}
For any constant $\sigma \leq 1/6$, the problem of computing weak $n^{-15}$-approximate $\sigma$-smooth Nash equilibria in 2-player $n$-action games is \PPAD-hard.
\end{lemma}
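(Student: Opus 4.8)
The plan is to reduce, in polynomial time, from the \PPAD-hard problem guaranteed by \cref{thm:gmp-hardness}: computing a $2K^3\eta$-approximate Nash equilibrium of an approximate $K$-GMP game, taken with $\eta = K^{-14}$, so that the target Nash accuracy is $2K^{-11}$. Fix a constant $\sigma \le 1/6$ and write $N := 2K$, $M := 2K^3$, $n := N$. Given an approximate $K$-GMP game $(A,B)$ — whose payoffs satisfy $A - A^\st, B - B^\st \in [0,1]^{N\times N}$ but are not themselves in $[0,1]$ — first rescale into $[0,1]$: let $J\in\BR^{N\times N}$ be the all-ones matrix and set $\tilde A := A/(M+1)$, $\tilde B := (B + MJ)/(M+1)$, so $\tilde A,\tilde B\in[0,1]^{N\times N}$. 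Since $x^\t J y' = 1$ for any strategies $x,y'\in\Delta^N$, the additive shift contributes nothing to any deviation gain and all payoffs are scaled by $M+1$; hence a weak $n^{-15}$-approximate $\sigma$-smooth Nash equilibrium $(x,y)$ of $(\tilde A, \tilde B)$ is a weak $(M+1)n^{-15}$-approximate $\sigma$-smooth Nash equilibrium of $(A,B)$.

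Next I would apply \cref{lem:validity} to $(A,B)$. Because $(M+1)n^{-15} \le 3K^3(2K)^{-15} < 1$, the pair $(x,y)$ is in particular a weak $1$-approximate $\sigma$-smooth Nash equilibrium of $(A,B)$, so \cref{lem:validity} — whose hypothesis $\sigma\le 1/2$ is implied by $\sigma\le 1/6$ — applies with $\ep = 8\sigma K^2/M = 4\sigma/K$, yielding $\bar x_k,\bar y_k \in [1/K - 4\sigma/K,\, 1/K + 4\sigma/K]$ for all $k\in[K]$. In particular $\max_i x_i,\ \max_i y_i \le (1+4\sigma)/K$.

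The crux is then to promote $(x,y)$ from a smooth equilibrium to an approximate Nash equilibrium of $(A,B)$, formalizing the intuition stated after \cref{lem:validity}. Fix $i\in[N]$ and set $z := (1-\tfrac1K)x + \tfrac1K e_i$; then $\max_j z_j \le \max_j x_j + \tfrac1K \le (2+4\sigma)/K \le \tfrac{1}{2\sigma K} = \tfrac1{\sigma N}$, where the last inequality is $4\sigma + 8\sigma^2 \le 1$ and uses $\sigma\le 1/6$, so $z\in\MK_{\sigma,N}$. The weak $(M+1)n^{-15}$-approximate smooth-equilibrium condition for $(A,B)$ therefore gives $z^\t A y - x^\t A y \le (M+1)n^{-15}$; since $z^\t A y - x^\t A y = \tfrac1K(e_i^\t A y - x^\t A y)$, we get $e_i^\t A y - x^\t A y \le K(M+1)n^{-15}$ for every $i$, i.e. player $1$ cannot improve by more than $K(M+1)n^{-15}$. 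The symmetric computation with $B$ and $w := (1-\tfrac1K)y + \tfrac1K e_i$ gives the same bound for player $2$, so $(x,y)$ is a $K(M+1)n^{-15}$-approximate Nash equilibrium of $(A,B)$. Finally $K(M+1)n^{-15} \le K\cdot 3K^3\cdot(2K)^{-15} = 3\cdot 2^{-15}K^{-11} < 2K^{-11} = 2K^3\eta$, so $(x,y)$ is in particular a $2K^3\eta$-approximate Nash equilibrium of $(A,B)$. As the maps $(A,B)\mapsto(\tilde A,\tilde B)$ and $(x,y)\mapsto(x,y)$ are polynomial-time computable, this is a polynomial-time reduction in the sense of \cref{def:poly-reducible}, so the problem of computing weak $n^{-15}$-approximate $\sigma$-smooth Nash equilibria is \PPAD-hard.

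The only subtle point is the arithmetic of choosing the accuracy: $n^{-15}$ must be small enough that \cref{lem:validity} applies (needs $(M+1)n^{-15}\le 1$) and small enough that the resulting Nash accuracy $K(M+1)n^{-15}$ falls within the \PPAD-hard window $\le 2K^3\cdot K^{-14}$ of \cref{thm:gmp-hardness}; both hold with room to spare because $n^{-15} = (2K)^{-15}$ decays far faster than any $\poly(K)$ threshold involved. The hypothesis $\sigma \le 1/6$ is used in exactly one place — verifying that the perturbed strategy $z$ remains $\sigma$-smooth.
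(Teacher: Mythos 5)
Your proposal is correct and follows essentially the same route as the paper's proof: rescale the approximate $K$-GMP game so that the accuracy parameter $n^{-15}$ can be compared against the Nash accuracy $2K^3\eta$ with $\eta=K^{-14}$ from \cref{thm:gmp-hardness}, invoke \cref{lem:validity} to bound $\max_i x_i$ and $\max_i y_i$, and then use the ``mix $1/K$ toward $e_i$'' observation to upgrade the $\sigma$-smooth best-response guarantee to an unrestricted best-response guarantee. The only cosmetic differences are the choice of normalization ($(M+1)$ with an additive shift vs. $2K^3$ with no shift) and the slack parameter fed to \cref{lem:validity} ($\ep'=4\sigma/K$ vs. $\ep'=1/K$); neither changes the argument.
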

\begin{proof}
  Fix $K \in \BN$ and consider any approxiamte $K$-GMP game $(A,B)$. 
  Set $\ep = K^{-15}$, and suppose that $(x,y)$ is a weak $\ep$-approximate $\sigma$-smooth Nash equilibrium of the normalized game $(\bar A, \bar B) := (\frac{1}{2K^3} \cdot A, \frac{1}{2K^3} \cdot B)$, which we denote by $\bar G$. Since $\ep \cdot 2K^3 \leq 1$, it follows from \cref{lem:validity} that, for any $\ep' \geq 8 \sigma K^2/(2K^3)$, we have $\bar x_k, \bar y_k \in [1/K-\ep', 1/K + \ep']$. We choose $\ep' = 1/K$, which satisfies the necessary constraint as long as $\sigma \leq 1/4$. We claim that $(x,y)$ is an $\ep K$-approximate Nash equilibrium of $\bar G$. If this were not the case, then one player (say the $x$-player) can increase their utility by more than $\ep K $ by deviating, and in particular, for some $i \in [N]$, $ (e_i - x)^\t Ay > \ep K $. Thus, defining $x' := (1-1/K) \cdot x + 1/K \cdot e_i$, we have $(x' - x)^\t Ay > K$, and $x' \in \MK_{\sigma, n}$ since
  \[\max_{i \in [N]} x_i' \leq 1/K + \max_{i \in [N]} x_i \leq 1/K + \max_{k \in [K]} \bar x_k \leq 3/K = 6/N \leq \frac{1}{\sigma N}
  \]
  as we have assumed $\sigma \leq 1/6$. This contradicts the fact that $(x,y)$ is a weak $\ep$-approximate $\sigma$-smooth Nash equilibrium of $\bar G$.  Since it is \PPAD-hard to compute a $K^{-14}$-approximate Nash equilibrium of normalized $K$-GMP games $\bar G$ (by \cref{thm:gmp-hardness}), it follows by our choice of $\ep = K^{-15}$ above that it is \PPAD-hard to compute a weak $K^{-15}$-approximate $\sigma$-smooth Nash equilibrium in 2-player normal-form games (whose payoffs are normalized to $[0,1]$).
\end{proof}

Finally, to prove \cref{thm:ppad-hard-const-sig}, we simply apply the padding result of \cref{thm:smooth-nash-nash} to the result of \cref{lem:sigconst-ep15} above.
\begin{proof}[Proof of \cref{thm:ppad-hard-const-sig}]
The proof follows by combining \cref{lem:sigconst-ep15,thm:smooth-nash-nash}. In particular, we apply \cref{thm:smooth-nash-nash} with $\ep(n) = n^{-15}$ and $\sigma(n) = 1/6$. The lemma then establishes the following: for any $c \in (0,1)$, the problem of finding a weak $n^{-15}$-approximate $1/6$-smooth Nash equilibrium in 2-player, $n$-action games has a polynomial-time reduction to the problem of finding a weak $n^{-15c}$-approximate $1/6$-smooth Nash equilibrium in 2-player, $n$-action games. Taking $c = c_1/15$ and applying \cref{lem:sigconst-ep15} yields the desired conclusion.
\end{proof}

\section{Fast Rates for Zero Sum Games} \label{sec:zero-sum} 

In this section, we will discuss the special case of two player zero-sum games.
As discussed in \cref{sec:other-equilibria}, smooth equilibria in zero sum games have been studied in the literature in connection to boosting and hard-core lemmas \cite{v008a006,10.5555/2207821,DBLP:journals/eccc/Kale07,DBLP:conf/soda/BarakHK09}. 

Formally, we are in the setting where there is a game matrix $A \in \mathbb{R}^{n \times n} $ and the equilibrium corresponds to a pair of strategies $x$ and $y$ such that 
\begin{align}
    \min_{x \in K_{ \sigma , n } } \max_{ y \in K_{\sigma , n}  }  x^{\top} A y.  
\end{align} 

In this setting, it is a classical result that even in the unrestricted case i.e. $x, y \in \Delta^n$, there are efficient algorithms for (approximately) finding the equilibrium.
This follows from a close connection between the minimax equilibrium problem and the notion of no-regret learning. 
In particular, using this connection one can show that an $\epsilon$-approximate minimax equilibrium can be found by running $ O(  \epsilon^{-2}  \log n     )  $ iterations of a no-regret learning algorithm playing against another algorithm best responding. 
In addition, one can design no-regret learning algorithms such that if both players play according to the algorithm, then they converge to an $\epsilon$-approximate minimax equilibrium in $ O(  \epsilon^{-1}  \log n     )  $ iterations \cite{OLG}. 

In this section, we note that we can achieve a faster converge to equilibrium if we restrict to the setting of $x, y \in K_{\sigma , n} $.
As a warmup, we will first show that a simple modification of the multiplicative weights update algorithm gives us an algorithm with rate of convergence $ O (  \epsilon^{-2}  \log(1/\sigma))$.
Note that \cref{alg:zero-sum} specifies the algorithm for one player. 
The bound below is stated for both players using the algorithm using the action of the other player as the losses.
Similar bounds have been observed in literature \cite{v008a006,DBLP:journals/eccc/Kale07,DBLP:journals/eccc/Kale07,DBLP:conf/nips/HaghtalabRS20} but we include it here for completeness.

\begin{theorem}[Regret for Multiplicative Weights Update]
    Let $G$ be a two player zero-sum game with game matrix $A \in \mathbb{R}^{n \times n} $, and $\sigma \in (0,1)$ be given.
    Let both players use \cref{alg:zero-sum} to produce a sequence of strategies $x_t$ and $y_t$ respectively. 
    Then, we have that $ \bar{x} = 1/T \sum_i x_i  $ and $ \bar{y} = 1/T \sum_i y_i  $ is a strong $ \epsilon$-approximate $\sigma$-smooth Nash equilibrium,  where   
     \begin{align}
        \epsilon = O \left(  \sqrt{ \frac{ \log \left( 1 / \sigma \right) }{T} } \right). 
     \end{align}
\end{theorem}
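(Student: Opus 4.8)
The plan is to recognize \cref{alg:zero-sum} as entropic online mirror descent---a ``capped'' variant of multiplicative weights---run over the constraint set $\MK_{\sigma,n}$, and to observe that the only place where the textbook multiplicative-weights analysis pays a $\log n$ factor, namely the Bregman divergence from the uniform initialization to the comparator, is instead bounded by $\log(1/\sigma)$ once the comparator is restricted to be $\sigma$-smooth. The elementary computation behind this is $\kld{p}{\mu_n} = \sum_{i=1}^n p_i \log(n p_i) \le \log(1/\sigma)$ for every $p \in \MK_{\sigma,n}$, since the constraint $p_i \le 1/(\sigma n)$ forces $n p_i \le 1/\sigma$.

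First I would isolate a per-player regret bound. Consider a single player running \cref{alg:zero-sum} with step size $\eta = \sqrt{\log(1/\sigma)/T}$ against an arbitrary loss sequence $\ell_1,\dots,\ell_T$ with $\|\ell_t\|_\infty \le 1$, and let $p_1,\dots,p_T$ be its iterates. The update is the exponential-weights step followed by the KL Bregman projection onto $\MK_{\sigma,n}$ (the coordinatewise capping-and-renormalization step), so in particular each $p_t \in \MK_{\sigma,n}$. The standard online-mirror-descent inequality, with the generalized Pythagorean inequality used to absorb the projection, then gives, for every $p^\star \in \MK_{\sigma,n}$,
\begin{align}
\sum_{t=1}^T \langle p_t - p^\star,\, \ell_t\rangle \;\le\; \frac{\kld{p^\star}{\mu_n}}{\eta} + \eta T \;\le\; \frac{\log(1/\sigma)}{\eta} + \eta T \;=\; 2\sqrt{T\log(1/\sigma)}.\nonumber
\end{align}
Thus the regret of \cref{alg:zero-sum} against the best fixed $\sigma$-smooth strategy is $O(\sqrt{T\log(1/\sigma)})$, i.e.\ $O(\sqrt{\log(1/\sigma)/T})$ per round; this is the step where smoothness buys the improvement $\log n \to \log(1/\sigma)$.

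Next I would run the classical ``self-play of no-regret learners converges to a minimax equilibrium'' argument, specialized to $\MK_{\sigma,n}$. After normalizing the payoffs so that $\|A\|_\infty \le 1$, let player $x$ feed $\ell_t = A y_t$ into \cref{alg:zero-sum} and player $y$ feed $\ell_t = -A^\t x_t$ (both in $[-1,1]^n$). With $\bar x = \tfrac1T\sum_t x_t$ and $\bar y = \tfrac1T\sum_t y_t$, summing the two regret bounds yields $\max_{y'\in\MK_{\sigma,n}} \bar x^\t A y' - \min_{x'\in\MK_{\sigma,n}} (x')^\t A \bar y \le O(\sqrt{\log(1/\sigma)/T})$. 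Since $\min_{x'} (x')^\t A\bar y \le \bar x^\t A\bar y \le \max_{y'} \bar x^\t A y'$, each of the two quantities $\max_{y'}\bar x^\t A y' - \bar x^\t A\bar y$ and $\bar x^\t A\bar y - \min_{x'}(x')^\t A\bar y$ is $O(\sqrt{\log(1/\sigma)/T})$; under the zero-sum payoffs $(-x^\t A y,\, x^\t A y)$ these are exactly the two players' deviation gains. Finally $\bar x,\bar y \in \MK_{\sigma,n}$ by convexity of $\MK_{\sigma,n}$ and the fact that every iterate lies in it, so $(\bar x,\bar y)$ is a \emph{strong} $\ep$-approximate $\sigma$-smooth Nash equilibrium with $\ep = O(\sqrt{\log(1/\sigma)/T})$.

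The genuinely problem-specific step---and the only one I expect to require care---is the first one: confirming that \cref{alg:zero-sum}'s projection onto $\MK_{\sigma,n}$ is exactly the KL Bregman projection, so that its iterates stay $\sigma$-smooth (which is what makes the equilibrium \emph{strong}) and so that the Pythagorean inequality applies without degrading the regret, together with the bound $\kld{p}{\mu_n}\le\log(1/\sigma)$ on $\MK_{\sigma,n}$. Everything else---the mirror-descent regret inequality and the equilibrium-from-regret reduction---is standard, and I would cite it rather than reprove it.
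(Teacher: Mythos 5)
Your proposal is correct and matches the paper's (sketched) argument: the paper likewise identifies $\max_{p\in\MK_{\sigma,n}}\kld{p}{\mu_n}\le\log(1/\sigma)$ as the key fact, cites the standard KL-based multiplicative-weights regret bound together with the regret-to-equilibrium reduction, and notes that the Bregman projection step is what keeps the iterates in $\MK_{\sigma,n}$ and hence makes the resulting equilibrium strong. You have simply filled in the routine details that the paper delegates to its references.
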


This bound follows from the general regret bound for multiplicative weights update in terms of KL divergence \cite[Theorem 2.4]{v008a006} and the standard connection between no-regret learning and minimax equilibrium \cite[Section 3.2]{v008a006}. 
  The key idea that leads to the improved bounds is the fact that the set $ K_{\sigma , n} $, we have that $ \max_{x \in K_{\sigma , n} }  \mathrm{D}_{\mathrm{KL}} \left( x , \mu_n \right)    \leq \log(1 / \sigma)   $ where $ \mu_n $ is the uniform distribution on $ [n] $.
  The regret bound and the corresponding rate of convergence would remain true even if one of the players best responded but the projection step is necessary to ensure that both players are playing in the set $ K_{\sigma , n} $ as would be required for strong smooth equilibria.

\begin{algorithm}[t]
    \caption{Projected Multiplicative Weights for 2-player Zero-sum games}
    \label{alg:zero-sum}
    \begin{algorithmic}[1]\onehalfspacing
    \State Set $ x_{i,0} = \frac{1}{n} $ for all $i$.
      \For{ For $t \leq T $  }
      \State Play according to the distribution $ x_t =  x_{\cdot , t} $.
      \State Receive loss vector $\ell_{ \cdot,  t} = A y_t   $ corresponding to the actions of the adversary $y_t$  
      \State Update $ \tilde{x}_{i,t+1}  \propto x_{i,t} \left( 1 - \eta \ell_{i,t} \right)$ where $\ell_{i,t}$ is the loss suffered by action $i$ with respect to the adversary's action at time $t$. 
      \State Update $ x_{t+1} =  x_{ \cdot , t+1 } = \argmin_{p \in K_{\sigma , n } } \mathrm{D}_{\mathrm{KL}} \left( p || \tilde{x}_{\cdot , t+1}  \right)   $
      :%
        \EndFor
    \end{algorithmic}
  \end{algorithm}

Additionally, in the setting where we have access to the game matrix $A$, it is possible to get a faster rate of convergence to equilibrium using the additional fact that each player is using a no-regret learning algorithm (the usual regret bounds are designed for the case that the loss sequence is chosen adversarially but in the case of game solving we can use the fact that the loss sequence is chosen by the other player who playing a no regret learning algorithm).
In particular, it is known that several algorithms for no-regret learning when used together in a two player zero-sum game can converge to equilibrium at a rate of $ O(  \epsilon^{-1}  \log(n))$. 
We use the optimistic mirror descent algorithm \cite{OLG} with the relative entropy regularization which is a variant of the multiplicative weights update algorithm encorporating the additional stability of the loss sequence.
As stated before, the main idea that leads to $ \log (1/ \sigma )  $ is the bound on the KL divergence between the uniform distribution and elements in the set $ K_{\sigma , n} $.
The proof of this fact is analogous to \cite{OLG} and we include a proof in \cref{sec:proof_zero_sum} for completeness.

\begin{algorithm}[t]
    \caption{Optimistic Mirror Descent for 2-player Zero-sum games}
    \label{alg:zero-sum-opt}
    \begin{algorithmic}[1]\onehalfspacing
        \State Set $ x_{i,0} = \frac{1}{n} $ for all $i$.  
    \For{ For $t \leq T $  }
      \State Play according to the distribution $ x_{t} $.
      \State Receive loss vector $\ell_{ \cdot,  t} = A y_t   $ corresponding to the actions of the adversary $y_t$
      \State Update $  \tilde{x}_{t}  = \argmin_{p   \in  K_{\sigma,n} }  \eta_t \ip{p}{  \ell_t } +  \mathrm{D}_{\mathrm{KL}} \left( p || \tilde{x}_{\cdot , t-1}  \right)      $
      \State Update $ x_{ t+1 } = \argmin_{p \in K_{\sigma , n } }  \eta_t \ip{ p }{ \ell_t } +  \mathrm{D}_{\mathrm{KL}} \left( p || \tilde{x}_{ t}  \right)   $
      :%
        \EndFor
    \end{algorithmic}
  \end{algorithm}

\newcommand{\cR}{\mathcal{R}}
\newcommand{\inner}[1]{ \left\langle #1 \right\rangle }
\newcommand{\cD}{\mathcal{D}}

\begin{theorem} \label{lem:zero_sum_fast_rates}
    Let $A \in \mathbb{R}^{n \times n} $ be a game matrix and $\sigma \in (0,1)$ be given.  
    Assume both players play $ x_t $ and $ y_t $ respectively according to \cref{alg:zero-sum-opt} for an appropriate choice of step sizes $ \eta_t $.
    Then, we have that $  \bar{x} = 1 / T \sum_i x_i  $ and $   \bar{y}  = 1/ T \sum_i y_i $ is a strong $\epsilon$-approximate $\sigma$-smooth Nash equilibrium where
    \begin{align}
        \epsilon =  O \left( \frac{\log \left( 1 / \sigma \right)  }{T}     \right).   
    \end{align} 
\end{theorem}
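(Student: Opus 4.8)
The plan is to prove \cref{lem:zero_sum_fast_rates} by combining the standard ``regret bounded by variation in utilities'' (RVU) analysis of optimistic mirror descent from \cite{OLG} with the single new observation that, over the smooth polytope $\MK_{\sigma,n}$, the relative entropy to the uniform distribution $\mu_n$ is at most $\log(1/\sigma)$ rather than $\log n$. The first step is to record this bound: for any $u \in \MK_{\sigma,n}$, since $u_i \le 1/(\sigma n)$ for every $i$,
\begin{align}
\mathrm{D}_{\mathrm{KL}}(u \,\|\, \mu_n) = \sum_{i=1}^n u_i \log(n u_i) \le \sum_{i=1}^n u_i \log(1/\sigma) = \log(1/\sigma)\nonumber.
\end{align}
A second preliminary observation is that every iterate produced by \cref{alg:zero-sum-opt} ends with a Bregman (KL) projection onto $\MK_{\sigma,n}$, so $x_t,y_t \in \MK_{\sigma,n}$ for all $t$, and hence $\bar x, \bar y \in \MK_{\sigma,n}$ by convexity --- this is exactly what is needed for the equilibrium to be \emph{strong}.

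Next I would invoke the textbook OMD regret guarantee (from \cite{OLG}; its proof goes through verbatim with the feasible set taken to be the convex body $\MK_{\sigma,n}$ in place of $\Delta^n$, since the generalized Pythagorean inequality holds for Bregman projections onto any convex set, and negative entropy remains $1$-strongly convex with respect to $\|\cdot\|_1$ on all of $\Delta^n \supseteq \MK_{\sigma,n}$). For the $x$-player, who suffers losses $\ell_t = A y_t$, this reads, for any comparator $u \in \MK_{\sigma,n}$ and a suitable constant step size $\eta_t \equiv \eta$,
\begin{align}
\sum_{t=1}^T \ip{x_t - u}{A y_t} \;\le\; \frac{\mathrm{D}_{\mathrm{KL}}(u \,\|\, x_1)}{\eta} \;+\; \eta \sum_{t=1}^T \|A(y_t - y_{t-1})\|_\infty^2 \;-\; \frac{1}{4\eta}\sum_{t=1}^T \big( \|x_t - \tilde x_t\|_1^2 + \|\tilde x_t - x_{t-1}\|_1^2 \big)\nonumber,
\end{align}
and symmetrically for the $y$-player with loss vectors $-A^\top x_t$. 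Writing $L := \max_{i,j}|A_{ij}|$ (a game-dependent constant, absorbed into the $O(\cdot)$) we have $\|A(y_t - y_{t-1})\|_\infty \le L\|y_t - y_{t-1}\|_1 \le L\sqrt{2}\,(\|y_t - \tilde y_t\|_1 + \|\tilde y_t - y_{t-1}\|_1)$. Summing the two players' bounds, using $x_1 = y_1 = \mu_n$ together with the first display, and picking $\eta$ a small enough constant (e.g.\ $\eta \le 1/(8L)$) so that the coefficient of the positive variation terms is dominated by that of the negative stability terms, everything except the first term cancels, yielding
\begin{align}
\Big(\sum_{t=1}^T \ip{x_t - u}{A y_t}\Big) + \Big(\sum_{t=1}^T \ip{y_t - v}{-A^\top x_t}\Big) \;\le\; \frac{2\log(1/\sigma)}{\eta} \;=\; O\!\big(\log(1/\sigma)\big) \qquad \forall\, u,v \in \MK_{\sigma,n}\nonumber.
\end{align}

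Finally I would pass to the equilibrium via the usual minimax/no-regret correspondence (see \cite{v008a006}): maximizing the left-hand side above over $u,v$ gives $T\big(\max_{v \in \MK_{\sigma,n}} \bar x^\top A v \;-\; \min_{u \in \MK_{\sigma,n}} u^\top A \bar y\big)$, so dividing by $T$ bounds the duality gap of $(\bar x,\bar y)$ over $\MK_{\sigma,n}\times\MK_{\sigma,n}$ by $\epsilon = O(\log(1/\sigma)/T)$; unpacking the gap into its two halves says precisely that neither player can improve its payoff by more than $\epsilon$ via a $\sigma$-smooth deviation, which together with $\bar x,\bar y \in \MK_{\sigma,n}$ is the definition of a strong $\epsilon$-approximate $\sigma$-smooth Nash equilibrium. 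The step I expect to be the main obstacle is the RVU cancellation: verifying carefully that, once the two players' regret bounds are added, the positive term $\eta\|A(y_t - y_{t-1})\|_\infty^2$ in the $x$-player's bound is absorbed by the negative stability terms of the \emph{other} ($y$-)player's update (and vice versa) --- this is where the hypothesis that the opponent also runs the optimistic algorithm, rather than being adversarial, is essential, and where one must double-check that restricting the domain from $\Delta^n$ to $\MK_{\sigma,n}$ does not degrade the strong-convexity constant used in the Pinsker-type bounds.
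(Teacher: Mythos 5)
Your proposal is correct and follows essentially the same route as the paper: optimistic mirror descent with entropy regularization, the RVU-style regret bound with cancellation between the two players' variation and stability terms, and the key new ingredient that $\mathrm{D}_{\mathrm{KL}}(u\,\|\,\mu_n) \le \log(1/\sigma)$ for every $u \in \MK_{\sigma,n}$ (replacing the usual $\log n$), combined with the standard duality-gap argument. The only cosmetic difference is that you fix a sufficiently small constant step size while the paper's proof (\cref{lem:zero_sum_fast_rates_app}) uses an adaptive schedule $\eta_t$; both instantiate the same framework and give the same $O(\log(1/\sigma)/T)$ rate.
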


\section{Proof of \cref{lem:zero_sum_fast_rates}} \label{sec:proof_zero_sum} 

We prove the following stronger version of \cref{lem:zero_sum_fast_rates}, which specifies the step sizes $\eta_t, \eta_t'$ for the two players; it is immediate that \cref{lem:zero_sum_fast_rates_app} below implies \cref{lem:zero_sum_fast_rates}. 

\begin{theorem} \label{lem:zero_sum_fast_rates_app}
    Let $A \in \mathbb{R}^{n \times n} $ be a game matrix and $\sigma \in (0,1)$ be given.  
    Assume both players play $ x_t $ and $ y_t $ respectively according to \cref{alg:zero-sum-opt} for an appropriate choice of step sizes $ \eta_t $ and $ \eta'_t $.
    Then, for
    \begin{align}
        \eta_t = \min\left\{  R_1^2  \left(\sqrt{\sum_{i=1}^{t-1} \norm{Ay_i - Ay_{i-1}}_*^2} + \sqrt{\sum_{i=1}^{t-2} \norm{Ay_i - Ay_{i-1}}_*^2} \right)^{-1} , \frac{1}{11}\right\}\nonumber 
    \end{align}
    and 
    \begin{align}
        \eta'_t = \min\left\{ R_2^2\left(\sqrt{\sum_{i=1}^{t-1} \norm{x_i^{\top} A - x_{i-1}^{\top} A}_*^2} + \sqrt{\sum_{i=1}^{t-2} \norm{x_i^{\top} A - x_{i-1}^{\top} A}_*^2} \right)^{-1} , \frac{1}{11}\right\} \nonumber
    \end{align}
    we have that $ \left(1 / T \sum_i f_i , 1 / T \sum_i x_i  \right) $ is a strong 
    $
        O \left(  \frac{ \log(1/\sigma)   }{T} \right)
$-approximate $\sigma$-smooth Nash equilibrium, where $ R_{i}^2 = \max_{f \in K_{\sigma, n}  }  \mathrm{D}_{\mathrm{KL}} \left( p, x_0  \right)   $
\end{theorem}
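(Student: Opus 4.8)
\noindent\textbf{Proof plan for \cref{lem:zero_sum_fast_rates_app}.}
The plan is to reduce the statement to a pair of adaptive regret bounds for optimistic mirror descent, one per player, and then exploit the cancellation of the ``variation'' terms together with the key geometric fact that the Bregman diameter of $K_{\sigma,n}$ with respect to the relative entropy centered at the uniform distribution is only $\log(1/\sigma)$ rather than $\log n$. First I would record the reduction: since $K_{\sigma,n}$ is convex and the Bregman projection steps in \cref{alg:zero-sum-opt} keep every iterate $x_t$ (resp.\ $y_t$) in $K_{\sigma,n}$, the averages $\bar x = \tfrac1T\sum_t x_t$ and $\bar y = \tfrac1T\sum_t y_t$ also lie in $K_{\sigma,n}$, so the output is automatically a \emph{strong} smooth profile and it suffices to bound the duality gap
\[
\max_{y \in K_{\sigma,n}} \bar x^\top A y \;-\; \min_{x \in K_{\sigma,n}} x^\top A \bar y \;\le\; \frac{1}{T}\left( \REG_1^T + \REG_2^T\right),
\]
where $\REG_1^T := \sum_t x_t^\top A y_t - \min_{x\in K_{\sigma,n}} \sum_t x^\top A y_t$ is the $x$-player's regret against the loss sequence $\ell_t = A y_t$, and $\REG_2^T$ is defined symmetrically against the loss sequence $\ell'_t = -A^\top x_t$. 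This inequality is the standard folklore bound relating external regret to the minimax gap; bounding its right-hand side by $O(\log(1/\sigma))$ gives the theorem.

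Next I would invoke the adaptive optimistic mirror descent guarantee. Running optimistic mirror descent with the relative-entropy regularizer and the self-confident step sizes $\eta_t$ from the statement, the analysis of \cite{OLG} (an ``RVU''-type bound) gives, for appropriate absolute constants and with $R_1^2 = \max_{p\in K_{\sigma,n}}\mathrm{D}_{\mathrm{KL}}(p\,\|\,x_0)$,
\[
\REG_1^T \;\le\; O\!\left( \frac{R_1^2}{\eta_T} \;+\; \eta_T \sum_{t=2}^{T} \norm{A y_t - A y_{t-1}}_\infty^2 \;-\; \frac{1}{\eta_T}\sum_{t=2}^{T}\norm{x_t - x_{t-1}}_1^2 \right),
\]
and an identical bound for the $y$-player with $Ay$ replaced by $A^\top x$ and $R_1$ replaced by $R_2$. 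The crucial point is the one-line estimate
\[
R_1^2 = \max_{p\in K_{\sigma,n}}\sum_{i=1}^n p_i\log(n p_i) \;\le\; \max_{p\in K_{\sigma,n}}\max_i \log(n p_i) \;\le\; \log(1/\sigma),
\]
using $x_0 = \mu_n$ and $p_i \le \tfrac{1}{\sigma n}$ for $p\in K_{\sigma,n}$; the same bound holds for $R_2^2$. (This is exactly the substitution of $\log n$ by $\log(1/\sigma)$ advertised in the discussion preceding the theorem.)

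Finally I would combine the two bounds. Rescaling so that $A\in[0,1]^{n\times n}$ as in the rest of the paper, we have $\norm{A y_t - A y_{t-1}}_\infty \le \norm{y_t - y_{t-1}}_1$ and $\norm{A^\top x_t - A^\top x_{t-1}}_\infty \le \norm{x_t - x_{t-1}}_1$, so the positive variation term in the $x$-player's bound is controlled by $\sum_t\norm{y_t-y_{t-1}}_1^2$, which is precisely the term appearing with a negative sign in the $y$-player's bound (and vice versa). The self-confident tuning of $\eta_t,\eta'_t$ is designed so that, upon summing $\REG_1^T + \REG_2^T$, these terms cancel up to lower-order contributions, leaving $\REG_1^T + \REG_2^T = O(R_1^2 + R_2^2) = O(\log(1/\sigma))$; dividing by $T$ yields the advertised $O(\log(1/\sigma)/T)$-approximate strong $\sigma$-smooth Nash equilibrium. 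I expect the main obstacle to be exactly this last step: carrying out the cancellation with \emph{data-dependent} step sizes, i.e.\ verifying that $1/\eta_t$ grows no faster than the accumulated variation and that the negative movement terms genuinely dominate the positive variation terms after summation. This argument follows the template of \cite{OLG} essentially verbatim; the only substantive changes are the diameter bound above and the remark that the Bregman projection onto the convex polytope $K_{\sigma,n}$ satisfies the generalized Pythagorean inequality, which is all that the mirror-descent analysis uses about the feasible set.
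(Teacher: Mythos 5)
Your proposal is correct and follows essentially the same route as the paper's proof: reduce the duality gap to the sum of the two players' regrets, invoke the adaptive RVU-type bound for optimistic mirror descent with the relative-entropy regularizer and self-confident step sizes from \cite{OLG}, cancel the variation terms of one player against the stability (negative) terms of the other, and use $R_i^2 \le \log(1/\sigma)$ for $K_{\sigma,n}$ to replace the usual $\log n$. The one cosmetic difference is that the paper keeps the stability penalties in the form $\norm{\tilde x_t - x_t}^2 + \norm{\tilde x_{t-1} - x_t}^2$ coming directly from the three-point lemma and only afterward relates them to $\norm{y_t - y_{t-1}}^2$ via the triangle inequality, whereas you state the RVU bound directly in terms of consecutive iterates; this is a standard rewriting and does not affect the argument.
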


\begin{proof}
    Let $\cR(f) = \sum_{i=1}^n f(i)\ln f(i)$. 
    These functions are  strongly convex with respect to $\|\cdot\|_1$ norm on $ K_{\sigma , n} $. 
     We first upper bound regret of Player I, writing $\nabla_t$ as a generic observation vector, later to be chosen as $Ay_t$, and $M_t$ as a generic predictable sequence, later chosen to be $Ay_{t-1}$. 
     Let $x^*  \in K_{  \sigma , n  } $ be a comparator strategy. 
     Then,
\begin{align}
    \inner{x_t-x^*, \nabla_t} = \inner{x_t- \tilde{x}_{t},\nabla_t-M_{t}} + \inner{x_t- \tilde{x}_{t}, M_t} + \inner{ \tilde{x}_{t}-x^*, \nabla_t}\nonumber
\end{align}
By the update rule,
\begin{align}
    \inner{x_t- \tilde{x}_{t}, M_t} \leq \frac{1}{\eta_t} \left( -     \mathrm{D}_{\mathrm{KL}}( \tilde{x}_{t},x_t) - \mathrm{D}_{\mathrm{KL}}(x_t, \tilde{x}_{t-1})  \right) \nonumber
\end{align}
and
\begin{align}
    \inner{ \tilde{x}_{t}-x^*, \nabla_t} & \leq \frac{1}{\eta_t}\left(\mathrm{D}_{\mathrm{KL}}(x^*, \tilde{x}_{t-1})  - \mathrm{D}_{\mathrm{KL}}(x^*, \tilde{x}_{t})  \right)  \ .\nonumber
\end{align}
We conclude that $\inner{x_t-x^*, \nabla_t} $ is upper bounded by
\begin{align}
    &\norm{\nabla_t - M_t}_* \norm{x_t - \tilde{x}_{t}} +\frac{1}{\eta_t} \left( - \mathrm{D}_{\mathrm{KL}}( \tilde{x}_{t},x_t) - \mathrm{D}_{\mathrm{KL}}(x_t, \tilde{x}_{t-1}) \right) 
    + \frac{1}{\eta_t}\left(\mathrm{D}_{\mathrm{KL}}( x^*, \tilde{x}_{t-1})  - \mathrm{D}_{\mathrm{KL}}( x^*, \tilde{x}_{t})  \right)\notag \\
    &= \norm{\nabla_t - M_t}_* \norm{ x_t - \tilde{x}_{t}} 
 + \frac{1}{\eta_t}\left(\mathrm{D}_{\mathrm{KL}}(x^*, \tilde{x}_{t-1})  - \mathrm{D}_{\mathrm{KL}}( x^*, \tilde{x}_{t})  - \mathrm{D}_{\mathrm{KL}}( \tilde{x}_{t},x_t) - \mathrm{D}_{\mathrm{KL}}(x_t, \tilde{x}_{t-1}) \right) \notag
\end{align}
Using strong convexity, the term involving the four divergences can be further upper bounded by
\begin{align}
     &\frac{1}{\eta_t}\left(\mathrm{D}_{\mathrm{KL}}(x^*, \tilde{x}_{t-1})  - \mathrm{D}_{\mathrm{KL}}(x^*, \tilde{x}_{t})  - \frac{1}{2}\norm{ \tilde{x}_{t} - x_t}^2 - \frac{1}{2}\norm{ \tilde{x}_{t-1} - x_t}^2 \right)\nonumber \\ 
    &=  \frac{1}{\eta_t}\left(\mathrm{D}_{\mathrm{KL}}( x^*, \tilde{x}_{t-1})  - \mathrm{D}_{\mathrm{KL}}(x^*, \tilde{x}_{t})  - \frac{1}{2}\norm{ \tilde{x}_{t} - x_t}^2 - \frac{1}{2}\norm{ \tilde{x}_{t-1} - x_t}^2 \right) \nonumber\\
    &= \frac{1}{\eta_t}\left(\mathrm{D}_{\mathrm{KL}}(x^*, \tilde{x}_{t-1})  - \mathrm{D}_{\mathrm{KL}}( x^*, \tilde{x}_{t})  - \frac{1}{2}\norm{ \tilde{x}_{t} - x_t}^2 - \frac{1}{2}\norm{ \tilde{x}_{t-1} - x_t}^2 \right)\nonumber. 
\end{align}

Using the above in the bound on $\inner{x_t-x^*, \nabla_t} $ and summing over $t=1,\ldots,T$, and using the fact that the step size are non-increasing, we conclude that 
\begin{align}
\sum_{t=1}^T \inner{x_t-x^*, \nabla_t} & \le \eta_1^{-1} \mathrm{D}_{\mathrm{KL}}(x^*,\tilde{x}_0) +   \sum_{t=2}^T  \mathrm{D}_{\mathrm{KL}}(x^*,\tilde{x}_{t-1})\left(\frac{1}{\eta_{t}}  - \frac{1}{\eta_{t-1}}\right)  +  \sum_{t=1}^T \norm{\nabla_t - M_t}_* \norm{ \tilde{x}_{t} - x_t} \notag\\
& ~~~~~  - \sum_{t=1}^T \frac{1}{2 \eta_t} \left(\norm{ \tilde{x}_{t} - x_t}^2 +  \norm{\tilde{x}_{t-1} - x_t}^2 \right) \notag \\
& \leq   (\eta_1^{-1} + \eta_T^{-1}) R_{1}^2 +  \sum_{t=1}^T \norm{\nabla_t - M_t}_* \norm{\tilde{x}_{t} - x_t}   - \frac{1}{2 } \sum_{t=1}^T  \eta_t^{-1}\left(\norm{\tilde{x}_{t} - x_t}^2 +  \norm{\tilde{x}_{t-1} - x_t}^2 \right) \notag\\
\label{eq:regbndet} \ .
\end{align}
where $R_{1}^2$ is an upper bound on the largest KL divergence between $x^*$ and any $\tilde{x}$. 
Hence we conclude that a bound on regret of Player I is given by
\begin{align} 
    \label{eq:player_1_regret} 
\sum_{t=1}^T \inner{x_t-\tilde{x}^*, \nabla_t} & \le		(\eta_1^{-1} + \eta_T^{-1}) R_{1}^2 +  \sum_{t=1}^T \norm{Ay_t - Ay_{t-1}}_* \norm{\tilde{x}_{t} - x_t}   - \frac{1}{2 } \sum_{t=1}^T  \eta_t^{-1}\left(\norm{\tilde{x}_{t} - x_t}^2 +  \norm{\tilde{x}_{t-1} - x_t}^2 \right) + 1  
\end{align}
Observe that
$$\eta_t = \min\left\{R^2_{1} \frac{\sqrt{\sum_{i=1}^{t-1} \norm{Ay_i - Ay_{i-1}}_*^2} - \sqrt{\sum_{i=1}^{t-2} \norm{Ay_i - Ay_{i-1}}_*^2}}{\norm{Ay_{t-1} - Ay_{t-2}}_*^2} , \frac{1}{11}\right\}$$
and 
$$
\textstyle 11\leq \eta_t^{-1} \le \max\left\{2R^{-2}_{1} \sqrt{\sum_{i=1}^{t-1} \norm{Ay_{i} - Ay_{i-1}}_*^2} , 11\right\}
$$
With this, the upper bound on Player I's unnormalized regret is
\begin{align}
    1 + 22 R_{1}^2 &+ 2\sqrt{\sum_{t=1}^{T-1} \norm{Ay_t - Ay_{t-1}}_*^2} + \sum_{t=1}^T \norm{Ay_t - Ay_{t-1}}_* \norm{\tilde{x}_{t} - x_t}   - \frac{11}{2} \sum_{t=1}^T  \left(\norm{ \tilde{x}_{t} - x_t}^2 +  \norm{\tilde{x}_{t-1} - x_t}^2 \right)\nonumber.  
\end{align}

Adding the regret of the second player who uses step size $\eta'_t$, the overall bound on the suboptimality is
\begin{align}
& 2 + 22 R_{1}^2 +  2 \sqrt{\sum_{t=1}^{T-1} \norm{Ay_t - Ay_{t-1}}^2_*} +  \sum_{t=1}^T \norm{Ay_t - Ay_{t-1}}_* \norm{ \tilde{x}_{t} - x_t} \nonumber \\
& + 22 R_{2}^2 +  2 \sqrt{\sum_{t=1}^{T-1} \norm{ x_t^{\top} A - x_{t-1}^{\top} A}^2_*}  +  \sum_{t=1}^T \norm{x_t^{\top} A - x_{t-1}^{\top} A}_* \norm{y_{t} - \tilde{y}_t} \nonumber \\ 
& - \frac{11}{2} \sum_{t=1}^T \left(\norm{ \tilde{x}_{t} - x_t}^2 +  \norm{\tilde{x}_{t-1} - x_t}^2 \right) - \frac{11}{2} \sum_{t=1}^T   \left(\norm{ y_{t} - \tilde{y}_t}^2 +  \norm{\tilde{y}_{t-1} - y_t}^2 \right)\nonumber.
\end{align}

Using $\sqrt{c}\leq c+1$ for $c\geq 0$, we obtain an upper bound
\begin{align}
\sum_{t=1}^T \inner{ x_t-x^*, \nabla_t} \le & \ 6 + 22 R_{1}^2 +  2\sum_{t=1}^{T-1} \norm{Ay_t - Ay_{t-1}}^2_* +  \sum_{t=1}^T \norm{Ay_t - Ay_{t-1}}_* \norm{ \tilde{x}_{t} - x_t}\nonumber  \\
& + 22 R_{2}^2 +  2\sum_{t=1}^{T-1} \norm{ x_t^{\top} A - x_{t-1}^{\top} A}^2_*  +  \sum_{t=1}^T \norm{ x_t^{\top} A - x_{t-1}^{\top} A}_* \norm{ \tilde{y}_{t} - y_t} \nonumber \\
& - \frac{11}{2} \sum_{t=1}^T \left(\norm{ \tilde{x}_{t} - x_t}^2 +  \norm{ \tilde{x}_{t-1} - x_t}^2 \right) - \frac{11}{2} \sum_{t=1}^T   \left(\norm{ \tilde{y}_{t} - y_t}^2 +  \norm{ \tilde{y}_{t-1} - y_t}^2 \right)\nonumber\\
& \le 6 + 22 R_{1}^2 +  \frac{5}{2} \sum_{t=1}^T \norm{ Ay_t - Ay_{t-1}}^2_*  +  \frac{1}{2}\sum_{t=1}^T \norm{ \tilde{x}_{t} - x_t}^2 \nonumber \\
& + 22 R_{2}^2 + \frac{5}{2} \sum_{t=1}^T \norm{ x_t^{\top} A - x_{t-1}^{\top} A}^2_*    +  \frac{1}{2} \sum_{t=1}^T \norm{ \tilde{y}_{t} - y_t}^2 \nonumber \\
& - \frac{11}{2} \sum_{t=1}^T \left(\norm{ \tilde{x}_{t} - x_t}^2 +  \norm{ \tilde{x}_{t-1} - x_t}^2 \right) - \frac{11}{2} \sum_{t=1}^T \left(\norm{ \tilde{y}_{t} - y_t}^2 +  \norm{ \tilde{y}_{t-1} - y_t}^2 \right).\nonumber
\end{align}
Since each entry of the matrix is bounded by $1$, 
$$
\norm{Ay_t - Ay_{t-1}}^2_* \le \norm{y_t - y_{t-1}}^2 \le 2 \norm{x_t - \tilde{x}_{t-1}}^2 + 2 \norm{x_{t-1} - \tilde{x}_{t-1}}^2.
$$
A similar inequality holds for the other player too. This leads to an upper bound of
\begin{align}
& 6 + 22 R_{1}^2 + 22 R_{2}^2 +  \frac{1}{2} \sum_{t=1}^T \norm{y_{t} - x_t}^2  +  \frac{1}{2}\sum_{t=1}^T \norm{ \tilde{x}_{t} - x_t}^2  \notag \\
& ~~~+ 5 \sum_{t=1}^T \left(\norm{ \tilde{x}_{t} - x_t}^2 +  \norm{ \tilde{x}_{t-1} - x_t}^2 \right)   + 5 \sum_{t=1}^T \left(\norm{ \tilde{y}_{t} - y_t}^2 +  \norm{ \tilde{y}_{t-1} - y_t}^2 \right)  \notag\\
& ~~~ - \frac{11}{2} \sum_{t=1}^T \left(\norm{ \tilde{x}_{t} - x_t}^2 +  \norm{ \tilde{x}_{t-1} - x_t}^2 \right) -  \frac{11}{2} \sum_{t=1}^T \left(\norm{ \tilde{y}_{t} - y_t}^2 +  \norm{ \tilde{y}_{t-1} - y_t}^2 \right)  \notag \\
& \le 6 + 22 R_{1}^2 + 22 R_{2}^2 \notag,
\end{align}
as required. The desired result about the approximate equilibrium follows from the standard bound connecting the regret and the minimax equilibrium. See for example \cite[Section 3.2]{v008a006}. 
\end{proof}

\section{Quantal Response Equilibria \& Smooth Nash Equilibria}
\label{sec:qre-se}
In this section, we discuss the relationship between smooth Nash equilibria and quantal response equilibria. %
We first formally define quantal response equilibria. Let $P$ denote some real-valued probability distribution, which has an integrable density $f : \BR \ra \BR$, and whose cumulative distribution function is denoted by $F(x) := \Pr_{X \sim P}(X \leq x)$. Fix $n \in \BN$, denoting the number of actions; we let $P_n$ denote the distribution on $\BR^n$ consisting of $n$ independent copies of $P$, and $f_n$ denote the corresponding density. For any vector $u \in \BR^n$ and $i \in [n]$, define the \emph{$i$-response set} $\MR_{i}(u)$ as follows:
\begin{align}
  \MR_{i}(u) := \left\{ \vep \in \BR^n \ : \ u_i + \vep_i \geq u_{i'} + \vep_{i'} \ \forall i' \in [n] \right\}\nonumber.
\end{align}
In words, $\MR_i(u)$ denotes the set of error vectors which, for (true) utility vector $u$, will lead an agent to choose action $i$. The \emph{quantal response function} of an agent, for noise distribution $P$, is then defined as
\begin{align}
\tau_{P,i}(u) := \Pr_{\vep \sim P_n} \left( \vep \in \MR_i(u) \right) = \int_{\MR_i(u)} f_n(\vep) d\vep\nonumber,
\end{align}
for $i \in [n]$. We let $\tau_P(u) \in \Delta^n$ denote the vector $(\tau_{P,1}(u), \ldots, \tau_{P,n}(u))$. 
The quantal response function $\tau_{P,i}(u)$ describes the probability that an agent will choose action $i$, given noise distribution $P$ and (true) utility vector $u$. A quantal response equilibrium is then defined as a strategy profile in which each agent chooses actions with probabilities given by its quantal response function:
\begin{definition}[Quantal response equilibrium]
  \label{def:qre}
  Let $A_1, \ldots, A_m : [n]^m \ra [0,1]$ be the payoff matrices of an $m$-player normal-form game. A strategy profile $x = (x_1, \ldots, x_j) \in (\Delta^n)^m$ is a \emph{quantal response requilibrium} corresponding to the noise distribution $P$ if, for each $j \in [m]$,
  \begin{align}
x_j = \tau_P(\bar A_j(x_{-j}))\nonumber,
  \end{align}
  where $\bar A_j(x_{-j}) \in \BR^n$ denotes the vector whose $i$th entry is $A_j(i, x_{-j})$. %
\end{definition}
We remark that we have assumed for simplicitly that the noise vectors $\vep$ are product distributions, and are independently and identically distributed across players. While this assumption is satisfied by essentially all instantiations of quantal response equilibrium studied in the literature, it is possible to define quantal response equilibria for more general noise distributions \cite{mckelvey1995quantal}. 

In \cref{prop:qre-sne}, we show that even in the case of a single player ($m=1$), for any noise distribution $P$, the set of quantal response equilibria and strong smooth Nash equilibrium can be disjoint, and in fact separated in total variation distance: %
\begin{proposition}
  \label{prop:qre-sne}
  Fix any $\sigma \in [16/n, 1]$. 
For any real-valued probability distribution $P$ which has an integrable density, there is some 1-player game $G$ so that the total variation distance between any quantal response equilibrium of $G$ with respect to $P$ and any  strong $\sigma$-smooth Nash equilibrium of $G$ is at least $\Omega(1/n)$. %
\end{proposition}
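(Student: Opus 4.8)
The plan is to work entirely with one-player games, where the picture is especially transparent: a one-player game is just a single payoff vector $u = A_1 \in [0,1]^n$, its unique quantal response equilibrium (with respect to $P$) is $\tau_P(u)$, and its strong $\sigma$-smooth Nash equilibria are precisely the maximizers of $x' \mapsto \langle x', u\rangle$ over $\MK_{\sigma,n}$. Given $P$, I will build a game for which $\tau_P(u)$ places mass $\Omega(1/n)$ on a block of coordinates that every strong $\sigma$-smooth Nash equilibrium must avoid entirely, and a lower bound on the total variation distance then drops out.

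Concretely, I would take $k := \lceil \sigma n\rceil$, so $\sigma n \le k \le n$ (in the main range $\sigma \le 1 - 1/n$ this gives $k \le n-1$, hence $S^c \neq \emptyset$), put $S = \{1,\dots,k\}$, and set $u = a\cdot\mathbf{1}_S$ for a constant $a\in(0,1]$ to be chosen depending on $P$. Since $k\ge\sigma n$, one can place all mass on $S$ while staying in $\MK_{\sigma,n}$, so $\max_{x'\in\MK_{\sigma,n}}\langle x',u\rangle = a$ and every maximizer $x^{SE}$ has $x^{SE}(S)=1$, i.e. $x^{SE}(S^c)=0$. On the other hand $\tau_P$ commutes with coordinate permutations (the $\vep_j$ are i.i.d.), so the quantal response equilibrium $x^{QRE}=\tau_P(u)$ assigns the common value $q := \tau_{P,i'}(u)$ to each $i'\in S^c$, whence $x^{QRE}(S^c) = (n-k)\,q$. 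As total variation distance dominates the discrepancy on the single event $S^c$,
\[
d_{\mathrm{TV}}(x^{QRE},x^{SE}) \;\ge\; x^{QRE}(S^c) - x^{SE}(S^c) \;=\; (n-k)\,q \;\ge\; q ,
\]
so it suffices to choose $a\in(0,1]$ with $q \ge \tfrac{1}{2n}$.

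The one computation with content is that $q = q(a) \to 1/n$ as $a\downarrow 0$. Unwinding the response set, $q(a) = \PR_{\vep\sim P_n}\big(\vep_{i'} \ge a + \vep_j\ \forall j\in S,\ \vep_{i'}\ge\vep_j\ \forall j\in S^c\setminus\{i'\}\big)$, which is non-decreasing as $a\downarrow 0$; by continuity of measure its limit is $\PR(\vep_{i'}>\vep_j\ \forall j\in S,\ \vep_{i'}\ge\vep_j\ \forall j\in S^c\setminus\{i'\})$, and since $P$ has a density the coordinates $\vep_1,\dots,\vep_n$ are a.s. distinct and exchangeable, so this equals $\PR(i'=\argmax_j\vep_j) = 1/n$. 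Hence some $a_0\in(0,1]$ satisfies $q(a_0)\ge\tfrac{1}{2n}$, and the game $u=a_0\mathbf{1}_S$ then gives $d_{\mathrm{TV}}(x^{QRE},x^{SE})\ge\tfrac{1}{2n}=\Omega(1/n)$ for every quantal response equilibrium and every strong $\sigma$-smooth Nash equilibrium. (The hypothesis $\sigma\ge 16/n$ is used only to keep $S$ nonempty; the degenerate corner $\lceil\sigma n\rceil = n$ — i.e. $\sigma$ within $1/n$ of $1$, where $\MK_{\sigma,n}$ collapses toward the single point $\tfrac1n\mathbf{1}$ — I would treat separately by taking $u$ with $n$ distinct, nearly-equal coordinates, whose strong $\sigma$-smooth Nash equilibrium is unique and $\Theta(1-\sigma)$-tilted away from $\tfrac1n\mathbf{1}$, while continuity of $\tau_P$ lets one push $\tau_P(u)$ within any prescribed distance of $\tfrac1n\mathbf{1}$.)

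The main obstacle is essentially just the limit $q(a)\to 1/n$, a monotone-convergence argument whose only real input is that $P$ has no atoms; everything else is bookkeeping about the supports of the two equilibria. The one other delicate point is pinning down the exact $\sigma$-range over which the clean bound $\tfrac1{2n}$ holds with a universal constant versus the $\sigma$-dependent bound $\Theta((1-\sigma)/n)$ valid near $\sigma = 1$, which is why the corner case is split off above.
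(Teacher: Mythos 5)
Your main argument, valid for $\sigma\le 1-1/n$, is correct and takes a genuinely different route from the paper's. The paper proceeds by a dichotomy on the cumulative distribution function $F$ of $P$ near its $(1-\tfrac1n)$-quantile $y$: when $F(y-1)\le 1-8/(\sigma n)$ it uses the single-spike game $A=e_1$ and shows the quantal response places mass $\ge 2/(\sigma n)$ on coordinate $1$, which exceeds the $1/(\sigma n)$ cap on strong smooth strategies; when $F(y-1)> 1-8/(\sigma n)$ it uses the block game $A=\mathbf{1}_S$ with $|S|=\lceil\sigma n\rceil$ and shows the quantal response leaks $\Omega(1/n)$ mass onto $S^c$, where every strong $\sigma$-smooth equilibrium is supported away. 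You bypass the case analysis entirely: scaling the block game to $u=a\,\mathbf{1}_S$ and letting $a\downarrow 0$, monotone convergence together with exchangeability of the (atomless) noise forces $\tau_P(u)$ toward uniform, hence its mass on $S^c$ toward $(n-|S|)/n\ge 1/n$ whenever $|S|\le n-1$, while every strong $\sigma$-smooth equilibrium keeps zero mass on $S^c$. This is cleaner --- the only property of $P$ you use is that it has no atoms, no quantitative estimate on $F$ is needed --- and you in fact never use $\sigma\ge 16/n$, which the paper's Case~2 requires so that $(1-16/s)^s$ with $s=\lceil\sigma n\rceil$ stays bounded away from zero. The one concession is that your $a=a_0(P)$ is not explicit, but that is immaterial for an existence statement.

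Your caveat about the corner $\lceil\sigma n\rceil=n$ (i.e.\ $\sigma>1-1/n$) is well placed, and it is worth noting that the paper's proof has precisely the same gap: its Case~2 is vacuous when $S^c=\emptyset$, and Case~1 need not apply. In fact, for $\sigma=1$ the proposition fails outright for some $P$. Take $P$ uniform on $[0,N]$; for every $u\in[0,1]^n$ one checks $\tau_{P,i}(u)\le \tfrac1n(1+1/N)^n$ for each $i$, whence $|\tau_{P,i}(u)-1/n|=O(n/N)$ and the total variation distance to the uniform distribution --- which is the unique element of $\MK_{1,n}$ --- is $O(n^2/N)$, i.e.\ $o(1/n)$ once $N$ is large compared to $n^3$. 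So the hypothesis on $\sigma$ should really read $\sigma\le 1-1/n$ (or ``bounded away from $1$''), and on that range your proof is complete and arguably simpler than the one given in the paper.
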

\begin{proof}
  Let the density of $P$ be denoted by $f$, the cumulative distribution function of $P$ be denoted by $F$, and let the payoffs of the player, which can be expressed as a $n$-dimensional vector, be denoted by $A : [n] \ra [0,1]$. There is a unique quantal response equilibrium $x^\st \in \Delta^n$ corresponding to $P$, defined by $x^\st = \tau_P(A)$. Our goal is to show that $x^\st$ is not a strong $\sigma$-smooth Nash equilibrium. Let $y \in \BR$ be defined so that $F(y) = 1-1/n$; such $y$ exists because $P$ has a density. We consider the following two cases:

  \paragraph{Case 1: $F(y-1) \leq 1-\frac{8}{\sigma n}$.} Suppose the payoffs are given as follows: $A(1) = 1$ and $A(i) = 0$ for $i > 1$. Then, using \cref{def:qre}, the quantal response equilibrium $x^\st$ satisfies:
  \begin{align}
x^\st_1 = \int_{\BR} f(z) \cdot F(z+1)^{n-1} dz \geq \int_{[y-1, \infty)} f(z) \cdot F(z+1)^{n-1} dz \geq \int_{[y-1,\infty)} f(z) \cdot (1-1/n)^{n-1} dz \geq \frac{2}{\sigma n}\nonumber,
  \end{align}
  where the second inequality uses that $F(z+1) \geq F(y) \geq 1-1/n$ for $z \geq y-1$, and the final inequality uses $(1-1/n)^{n-1} \geq 1/4$ for $n \geq 2$ as well as the fact that $\int_{[y-1,\infty)} f(z) dz = 1- F(y-1) \geq 8/(\sigma n)$. Since any strong $\sigma$-smooth Nash equilibrium $x$ puts mass at most $1/(\sigma n)$ on each action, we must have that the total variation distance between $x$ and $x^\st$ is at least $1/(\sigma n) > 1/n$. %

  \paragraph{Case 2: $F(y-1) > 1-\frac{8}{\sigma n}$.} Suppose the payoffs are given by $A(i) = 1$ for $i \leq \lceil \sigma n \rceil$ and $A(i) = 0$ for $i > \lceil \sigma n \rceil$. Let us write $s := \lceil \sigma n \rceil$. Any strong $\sigma$-smooth Nash equilibrium puts 0 mass on actions $i > s$. However, %
  \begin{align}
   \sum_{i =s+1}^n x_i^\st =& \sum_{i=s+1}^n \Pr_{\vep \sim P_n} \left( \vep_i > \vep_{i'} + 1 \ \forall i' \in [s], \quad \vep_i > \vep_{i'} \ \forall i' \not \in \{ 1, \ldots, s, i \} \right)\nonumber\\
    \geq & \Pr_{\vep \sim P_n} \left( \vep_{s+1} > \vep_i + 1 \ \forall i \in [s] \right) \nonumber\\
    = & \int_{\BR} f(z) \cdot F(z-1)^s dz \nonumber\\
    \geq & \int_{[y,\infty)} f(z) \cdot F(z-1)^s dz \nonumber\\
    \geq & \int_{[y,\infty)} f(z) \cdot \left(1-\frac{16}{s}\right)^s dz \geq c/ n\nonumber,
  \end{align}
  where $c > 0$ is a positive constant, the second-to-last inequality above uses the fact that $F(z-1) \geq F(y-1) > 1-8/(\sigma n) \geq 1-16/s > 0$ for $z \geq y$, and the final inequality uses that $\int_{[y,\infty)} f(z) dz = 1-F(y) = 1/n$. It follows that the total variation distance between $x^\st$ and any strong $\sigma$-smooth Nash equilibrium is at least $c/n$. 
\end{proof}
One cannot replace ``strong'' with ``weak'' in the statement of \cref{prop:qre-sne}: if $P$ is the point mass at 0, then the set of quantal response equilibria of any $G$ with respect to $P$ is exactly the set of Nash equilibria, which is a subset of all weak $\sigma$-smooth Nash equilibria.\footnote{Technically, the point mass at 0 does not have a density, so \cref{prop:qre-sne} would not apply regardless. But we can correct for  this by considering a sequence of distributions which approaches the point mass at 0, and recover essentially the same conclusion.} We do have, however, the following weaker fact pertaining to weak smooth Nash equilibria:
\begin{proposition}
  \label{prop:weak-qre}
For any $\sigma \in (2/n, 1)$, and any real-valued probability distribution $P$, there is some 1-player game $G$ so that the quantal response equilibrium of $G$ with respect to $P$ is $\Omega(1)$-far (in total variation distance) from \emph{some} $\sigma$-smooth Nash equilibrium of $G$.
\end{proposition}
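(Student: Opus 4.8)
The plan is to exploit an asymmetry: a quantal response equilibrium of a one-player game is a single point of $\Delta^n$, whereas the set of weak $\sigma$-smooth Nash equilibria of \emph{any} one-player game has total variation diameter at least $1-\tfrac{1}{\sigma n}>\tfrac12$ (the hypothesis $\sigma>2/n$ is precisely $\sigma n>2$). Since a single point cannot be within $o(1)$ in total variation of every element of a set of constant diameter, the quantal response equilibrium must be $\Omega(1)$-far from some weak $\sigma$-smooth Nash equilibrium; all that is needed is to choose a game whose quantal response equilibrium actually exists.

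First I would fix the game. Let $D\subseteq\BR$ be the (countable) set of differences $a-b$ of point masses $a,b$ of $P$, pick $\alpha\in(0,1/n)$ with $\alpha\notin\{w/(i-j): w\in D,\ i,j\in[n],\ i\neq j\}$ (possible since this set is countable), and let $G$ be the one-player game with payoff vector $A$ given by $A(i)=\alpha i\in(0,1]$. For i.i.d.\ $\vep_1,\dots,\vep_n\sim P$ and $i\neq j$ one has $\Pr\bigl(A(i)+\vep_i=A(j)+\vep_j\bigr)=\Pr\bigl(\vep_i-\vep_j=\alpha(j-i)\bigr)=0$, because $\vep_i-\vep_j$ puts positive mass only on values in $D$ while $\alpha(j-i)\notin D$. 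Hence the effective utilities $A(i)+\vep_i$ are almost surely distinct, the response sets $\MR_i(A)$ partition $\BR^n$ up to a null set, $\sum_{i}\tau_{P,i}(A)=1$, and the quantal response equilibrium $x^\st:=\tau_P(A)\in\Delta^n$ is well defined (and unique). When $P$ is non-atomic this step is trivial — any game with distinct payoffs works.

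Next I would produce two weak $\sigma$-smooth Nash equilibria of $G$ that are far apart. In a one-player game, $x\in\Delta^n$ is a weak $\sigma$-smooth Nash equilibrium exactly when $\ip{x}{A}\ge V^\st:=\max_{x'\in\MK_{\sigma,n}}\ip{x'}{A}$; therefore both the point mass $e_{i^\st}$ on the unique best action $i^\st=\argmax_i A(i)=n$ (whose value $A(n)=\max_i A(i)\ge V^\st$) and any maximizer $x^\dg\in\MK_{\sigma,n}$ of $x'\mapsto\ip{x'}{A}$ are weak $\sigma$-smooth Nash equilibria. Since $x^\dg\in\MK_{\sigma,n}$ forces $x^\dg_{i^\st}\le\tfrac{1}{\sigma n}<\tfrac12$, we get $\tvd{e_{i^\st}}{x^\dg}=1-x^\dg_{i^\st}>\tfrac12$. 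By the triangle inequality $\max\{\tvd{x^\st}{e_{i^\st}},\tvd{x^\st}{x^\dg}\}>\tfrac14$, so $x^\st$ is more than $\tfrac14$ away in total variation from one of these two weak $\sigma$-smooth Nash equilibria of $G$, which proves the proposition with the $\Omega(1)$ constant equal to $\tfrac14$.

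The step I expect to require the most care is the first: checking that, for a generic choice of payoffs, $n$ i.i.d.\ copies of $P$ shifted by the (distinct) payoffs are almost surely all distinct, so that the quantal response equilibrium is genuinely well defined even when $P$ has atoms (for instance when $P$ is, or is near, a point mass — the case for which the discussion preceding the proposition shows that the stronger statement of \cref{prop:qre-sne} fails). The remainder is the elementary fact that a set of constant diameter cannot be contained in a small ball around a point.
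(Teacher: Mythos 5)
Your proof is correct, and the core strategy is the same as the paper's: exhibit two weak $\sigma$-smooth Nash equilibria of a single-player game that are $\Omega(1)$ apart in total variation, then apply the triangle inequality to force the (unique) quantal response equilibrium to be $\Omega(1)$-far from at least one of them. The paper uses the even simpler game $A = (1,0,\dots,0)$ and the pair of equilibria $(1,0,\dots,0)$ and $(1/2,1/2,0,\dots,0)$, whereas you use $e_{i^\st}$ together with the optimal $\sigma$-smooth deviation; both choices work for the same reason (the hypothesis $\sigma > 2/n$ caps the smooth maximizer's mass on any single action strictly below $1/2$). Where you genuinely improve on the paper's argument is in handling well-definedness of $\tau_P(A)$ when $P$ has atoms: in the paper's game, actions $2,\dots,n$ carry identical payoffs, so for atomic $P$ the response sets $\MR_i(A)$ overlap on an event of positive probability, $\sum_i \tau_{P,i}(A) > 1$, and $\tau_P(A)$ is not literally an element of $\Delta^n$ without some tie-breaking convention that \cref{def:qre} does not supply. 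Your choice of strictly increasing payoffs $A(i)=\alpha i$ with $\alpha$ generic relative to the countable set of atom-differences of $P$ makes ties a null event for \emph{every} real-valued $P$, which is exactly the scope the proposition claims --- notice that \cref{prop:qre-sne} restricts $P$ to have an integrable density, while \cref{prop:weak-qre} does not, so the extra care you take is not merely cosmetic.
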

\begin{proof}
Consider the game whose payoffs are defined by $A : [n] \ra [0,1]$ with $A(1) = 1$ and $A(i) = 0$ for $i > 1$. Note that the distributions $x := (1, 0, \ldots, 0)$ and $x' := (1/2, 1/2, 0, \ldots, 0)$ are both weak $\sigma$-smooth Nash equilibria of $G$, since $\sigma \geq 2/n$. Since $x,x'$ are $1/2$-far in total variation distance, at least one of them must be $1/4$-far from the unique quantal response equilibrium. 
\end{proof}

\subsection{Logit equilibrium}
We consider the special case that the perturbation distribution is given by the extreme value distribution $P_\lambda$ whose cumulative distribution function is $F(y) := \exp(-\exp(-\lambda y ))$, for some parameter $\lambda > 0$. In this case, it may be seen \cite[Lemma 1]{mcfadden1973conditional} that the quantal response function is given by:
\begin{align}
\tau_{P_\lambda,i}(u) = \frac{\exp(\lambda u_i)}{\sum_{i' \in [n]} \exp(\lambda u_{i'})}\label{eq:logit-response}.
\end{align}
\cref{prop:logit-smooth} below shows that, for a given utility vector $u \in [0,1]^n$, the value obtained by the logistic response function $\tau_{P_\lambda}$ is no better than that of the best $e^{-\lambda}$-smooth distribution. In light of the extensive body of theoretical and empirical work analyzing and justifying logit response functions \cite{mcfadden1976quantal,anderson2002logit,haile2008empirical,goeree2002quantal,goeree2000asymmetric}, the class of smooth distributions should be seen as a reasonably strong deviation set, allowing one to be at least as well off as the logit response permits. 
\begin{proposition}
  \label{prop:logit-smooth}
  For any $u \in [0,1]^n$ and $\sigma \leq e^{-\lambda}$, it holds that
  \begin{align}
\lng \tau_{P_\lambda}(u), u \rng \leq \max_{x \in \MK_{\sigma ,n}} \lng x, u \rng\nonumber.
  \end{align}
\end{proposition}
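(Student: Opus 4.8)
The plan is to show that the proposition is essentially immediate: under the hypotheses $u \in [0,1]^n$ and $\sigma \le e^{-\lambda}$, the logistic response vector $\tau_{P_\lambda}(u)$ is \emph{itself} an element of $\MK_{\sigma,n}$, so the desired inequality follows directly from the fact that the right-hand side is a maximum over $\MK_{\sigma,n}$. The only real ``content'' is this observation that a softmax at inverse-temperature $\lambda$ applied to utilities in $[0,1]$ is automatically $e^{-\lambda}$-smooth.

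Concretely, I would first recall from \cref{eq:logit-response} that for each $i \in [n]$,
\[
\tau_{P_\lambda,i}(u) = \frac{\exp(\lambda u_i)}{\sum_{i' \in [n]} \exp(\lambda u_{i'})}.
\]
Since $u_i \le 1$ for every $i$, the numerator is at most $e^{\lambda}$; since $u_{i'} \ge 0$ for every $i'$, each of the $n$ terms in the denominator is at least $1$, so the denominator is at least $n$. Hence $0 \le \tau_{P_\lambda,i}(u) \le e^{\lambda}/n$ for every $i$. The hypothesis $\sigma \le e^{-\lambda}$ gives $1/(n\sigma) \ge e^{\lambda}/n$, so $0 \le \tau_{P_\lambda,i}(u) \le 1/(n\sigma)$ for all $i$. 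Combined with the fact that $\tau_{P_\lambda}(u) \in \Delta^n$ (it is a softmax, hence a probability vector; equivalently, the response sets $\MR_i(u)$ partition $\BR^n$ up to a null set), this shows $\tau_{P_\lambda}(u) \in \MK_{\sigma,n}$ by \cref{def:smooth-polytope}. Then, since $\tau_{P_\lambda}(u)$ is a feasible point for the optimization defining the right-hand side, we conclude $\lng \tau_{P_\lambda}(u), u \rng \le \max_{x \in \MK_{\sigma,n}} \lng x, u \rng$, which is exactly the claim.

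Because the argument is a one-line feasibility check, I do not anticipate a genuine obstacle; the only points requiring minor care are verifying that the softmax vector genuinely lies in $\Delta^n$ and using the bounds $u_i \in [0,1]$ in the right places (upper bound on the numerator, lower bound on the denominator). If one wanted a sharper, quantitative comparison — e.g.\ comparing $\lng \tau_{P_\lambda}(u), u \rng$ directly with the average of the top $\sigma n$ coordinates of $u$, which is the explicit value of $\max_{x \in \MK_{\sigma,n}} \lng x, u \rng$ — one could instead sort $u$ in decreasing order, check that the partial sums satisfy $\sum_{i=1}^{k} \tau_{P_\lambda,i}(u) \le k/(n\sigma)$ for every $k \in [n]$ (by the same numerator/denominator bounds), and then conclude via Abel summation against the nonincreasing sequence $u$; but this refinement is not needed for the stated proposition.
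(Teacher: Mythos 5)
Your proof is correct and follows essentially the same approach as the paper's: both establish the key fact that $\tau_{P_\lambda}(u) \in \MK_{\sigma,n}$ (by bounding each coordinate by $e^{\lambda}/n \le 1/(n\sigma)$) and then conclude by feasibility. The only cosmetic difference is how the coordinate bound is obtained --- you bound the numerator by $e^{\lambda}$ and the denominator by $n$ directly using $u_i \in [0,1]$, whereas the paper first bounds the ratio $\tau_{P_\lambda,i}(u)/\tau_{P_\lambda,j}(u) \le e^{\lambda}$ and then sums; both are one-line calculations resting on the same input.
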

\begin{proof}
  The definition of $\tau_{P_\lambda}(u)$ in \cref{eq:logit-response} gives that, for all $i,j \in [n]$,
  \begin{align}
\frac{\tau_{P_\lambda, i}(u)}{\tau_{P_\lambda, j}(u)} = \frac{\exp(\lambda u_i)}{\exp(\lambda u_j)} \leq e^\lambda\nonumber.
  \end{align}
  Thus, for any $i \in [n]$, 
  \begin{align}
1 = \sum_{j \in [n]} \tau_{P_\lambda,j}(u) \geq \sum_{j \in [n]}e^{-\lambda} \cdot  \tau_{P_\lambda, i}(u) = ne^{-\lambda} \cdot \tau_{P_\lambda, i}(u)\nonumber,
  \end{align}
  and rearranging gives that $\tau_{P_\lambda, i}(u) \leq \frac{e^\lambda}{n}$, i.e., $\tau_{P_\lambda}(u)$ is $e^{-\lambda}$-smooth. 
\end{proof}

\end{document}